\newcommand{\ryc}[1]{}
\newcommand{\jd}[1]{}
\newcommand{\jr}[1]{}
\newcommand{\Nulls}{\mathcal{N}}
\newcommand{\nulls}{\textit{Nulls}}
\renewcommand{\epsilon}{\varepsilon}
\newcommand{\Sch}{{\cal S}}
\newcommand{\Inst}{{\cal I}}
\newcommand{\schema}{\text{Schema}}
\newcommand{\Pro}{P}
\newcommand{\C}{\mathcal{C}}
\newcommand{\Const}{\mathbb{C}}
\newcommand{\T}{\mathbb{T}}
\renewcommand{\P}{\mathbb{P}}
\newcommand{\Que}{\mathbb{Q}}
\newcommand{\Q}{\mathcal{Q}}
\newcommand{\pre}{\text{in}}
\newcommand{\post}{\text{out}}
\newcommand{\safe}{\text{safe}}
\newcommand{\Rel}{\textit{Rel}}
\newcommand{\outcome}{\textit{outcome}}
\newcommand{\rep}{\textit{rep}}
\newcommand{\Cpre}{\mathcal{C}_\text{in}}
\newcommand{\Cpost}{\mathcal{C}_\text{out}}
\newcommand{\Scope}{\textit{Scope}}
\newcommand{\evisits}{\textit{EVisits}\xspace}
\newcommand{\locvisits}{\textit{LocVisits}\xspace}
\newcommand{\patients}{\textit{Patients}\xspace}
\newcommand{\fid}{\textit{facility}\xspace}
\newcommand{\pid}{\textit{patInsur}\xspace}
\newcommand{\timestamp}{\textit{timestp}\xspace}
\newcommand{\age}{\textit{age}\xspace}
\newcommand{\val}{\textit{val}\xspace}
\newcommand{\visits}{\textit{Visits}\xspace}
\newcommand{\floc}{\textit{facilityLoc}\xspace}
\newcommand{\name}{\textit{name}\xspace}
\newcommand{\zip}{\textit{zipCode}\xspace}
\newcommand{\insurinfo}{\textit{InsurerInfo}\xspace} 
\begin{document}

\mainmatter  % start of an individual contribution

% first the title is needed
\title{A Framework for Assessing Achievability \\ Of Data-Quality Constraints}

% a short form should be given in case it is too long for the running head
\titlerunning{A Framework for Assessing Achievability of Data-Quality Constraints}

% the name(s) of the author(s) follow(s) next
%
% NB: Chinese authors should write their first names(s) in front of
% their surnames. This ensures that the names appear correctly in
% the running heads and the author index.
%
\author{Rada Chirkova$^{1}$\and Jon Doyle$^{1}$\and Juan L. Reutter$^{2}$}
\authorrunning{A Framework for Assessing Achievability of Data-Quality Constraints}
% (feature abused for this document to repeat the title also on left hand pages)

% the affiliations are given next; don't give your e-mail address
% unless you accept that it will be published
\institute{$^{1}$ Computer Science Department, North Carolina State University\\
North Carolina, USA\\
$^{2}$ Pontificia Universidad Cat\'olica de Chile\\
%\mailsa\\
\texttt{chirkova@csc.ncsu.edu, Jon\_Doyle@ncsu.edu, jreutter@ing.puc.cl}  \\
}

%
% NB: a more complex sample for affiliations and the mapping to the
% corresponding authors can be found in the file "llncs.dem"
% (search for the string "\mainmatter" where a contribution starts).
% "llncs.dem" accompanies the document class "llncs.cls".
%

\toctitle{Lecture Notes in Computer Science}
\tocauthor{Authors' Instructions}
\maketitle

\begin{abstract} 
Assessing and improving the quality of data are fundamental challenges for data-intensive systems that have given rise to numerous applications targeting transformation and cleaning of data. However, while schema design, data cleaning, and data migration are nowadays reasonably well understood in isolation, not much attention has been given to the interplay between the tools addressing issues in these areas. We focus on the problem of determining whether the available data-processing procedures can be used together to bring about the desired quality characteristics of the given data. For an illustration, consider an organization that is introducing new data-analysis tasks. Depending on the tasks, it may be a priority for the organization to determine whether its data can be processed and transformed using the available data-processing tools to satisfy certain properties or quality assurances needed for the success of the task. Here, while the organization may control some of its tools, some other tools may be external or proprietary, with only basic information available on how they process data. The problem is then, how to decide which tools to apply, and in which order, to make the data ready for the new tasks? 

Toward addressing this problem, we develop a new framework that abstracts data-processing tools as black-box procedures with only some of the properties exposed, such as the applicability requirements, the parts of the data that the procedure modifies, and the conditions that the data satisfy once the procedure has been applied. We show how common database tasks such as data cleaning and data migration are encapsulated into our framework and, as a proof of concept, we study basic properties of the framework for the case of procedures described by standard relational constraints. We show that, while reasoning in this framework may be computationally infeasible in general, there exist well-behaved special cases with potential practical applications. 
\end{abstract} 

\section{Introduction} 
\label{intro-sec}

A common approach to ascertaining and improving the quality of data 
is to develop procedures and workflows for repairing or improving data sets 
with respect to quality constraints. 
The community has identified a wide range of data-management 
problems that are vital in this respect, leading to the creation of several lines of studies, which 
have normally been followed by the development of toolboxes of 
applications that practitioners can use to solve their problems. This has been the case, for example, 
for the %the notoriously labor- and knowledge-intensive
Extract-Transform-Load (ETL) \cite{Devlin97,Kimball04} process in
business applications, or for the development of automatic tools to reason about the completeness or cleanliness 
of the data \cite{2012FanGeertsBook}. 

%While some of these processes can be automated, 
%in mo some human
%involvement appears unavoidable.  In studies of information
%integration, for example, it is well known that procedures for
%improving the quality of data cannot be completely automated
%\cite{Christen12,DoanHI12}.

As a result, organizations facing data-improvement problems now have access to a variety of data-management 
tools to choose from; the tools can be assembled to construct so-called workflows of data operations. 
However, in contrast with the considerable body of research on particular data operations, or even 
entire business workflows (see, e.g., \cite{DHPV09,BGHLS07,2012Deutsch,BerardiCGHLM05}), previous research appears to have not
focused explicitly either on the assembly process itself or on
providing guarantees that the desired data-quality constraints will be
satisfied once the assembled workflow of procedures has been applied
to the available data.  

We investigate the problem of constructing workflows from already available procedures. 
That is, we consider a scenario in which an organization needs to meet a certain data-quality criterion or goal using 
available data-improvement procedures. In this case, the problem is  to understand whether these procedures can be 
assembled into a data-improvement workflow in a way that would guarantee that the data set produced by the 
workflow will effectively meet the desired quality goal.

 \medskip
\noindent
\textbf{Motivating example}: 
Suppose that data stored in a medical-data aggregator (such
as, e.g., Premier \cite{premierWebSite}) are accessed to perform a
health-outcomes analysis in population health management
\cite{KindigS03,McAlearney03,Young98}, focusing on repeat
emergency-room visits in the Washington, DC area.  
The goal of the analysis is
to see whether there is a relationship between such repeat visits and
ages and zip codes of patients. %If it is known that some patient  
%information could be missing in the data, the analyst would need to
%consider repairs needed to
%complete the analysis.

We assume that the aggregator imports information about emergency-room
visits from a number of facilities, and stores the information using a
relation \visits with attributes \fid and \floc for the ID and location of the medical facility, 
\pid for the patient insurance number, and
\timestamp for the date and time of the visit. 
We also assume that medical-record information imported from each facility is stored at the aggregator in a relation 
\patients, with attributes \fid, \pid, \name, \age, \zip, and so on. 

The analyst plans to isolate information about emergency-room visits for the Washington area in a relation 
\locvisits, which would have all the attributes of \visits except \floc, as the values of the latter are understood to be fixed. 
Further, to obtain the age and zip code of patients, the analyst also needs to integrate 
the data in \visits with those of \patients. 

To process the data, the analyst has access to some procedures that are part of the aggregator's everyday business. 
For example, the aggregator periodically
runs a \textit{StandardizePatientInfo} procedure, which first
performs entity resolution on insurance IDs in $\patients$,
using both the values of all the patient-related attributes in that
relation and a separate ``master'' relation $\insurinfo$ that stores authoritative
patient information from insurance companies, and then merges the
results into $\visits$.  
Further, the aggregator offers a procedure \textit{MigrateIntoLocVisits} that will directly populate \locvisits with the 
relevant information about emergency rooms (but not the age and zip code of patients). 

The analyst is now facing a number of choices, some of which we list here: 
\begin{itemize}
\item[(i)] Use the \textit{StandardizePatientInfo} procedure on
  \patients, then manually import the correct(ed) information
  into \locvisits, and finally join this relation with \patients. 
\item[(ii)] Run \textit{MigrateIntoLocVisits} to get the relevant patient information, and then join with \patients 
without running the procedure \textit{StandardizePatientInfo}. 
\item[(iii)] Add \age and \zip attributes to \locvisits, get the information into 
\locvisits as in (ii), and then try to modify \textit{StandardizePatientInfo} into operating directly on \locvisits. 
\end{itemize}

Which of these options is the best for the planned analysis? Option (i) seems to be the cleanest, but if the analyst 
suspects that \textit{StandardizePatientInfo} may produce some loss of data, then going with (ii) or (iii) might be a better option.  
Further, suppose the analyst also has access to a separate relation \textit{HealthcareInfo} from a health NGO, with 
information about emergency-room visits gathered from other independent sources. Then the analyst could pose the 
following quality criterion on the assembled workflow: The result of the workflow should provide at least the information that 
can be obtained from the relation \textit{HealthcareInfo}. How could one guarantee that such a criterion will be met?

 \medskip
\noindent
\textbf{Contributions}: 
Our goal is to develop a general framework that can be used to determine whether the available data-processing tools can be
put together into a workflow capable of producing data that meet the desired quality properties. 
To address this problem, we abstract data-processing tools as black-box procedures that  expose only certain properties. The properties of interest include (i) preconditions, which indicate the state of the data required for the procedure to be applicable; (ii) the parts of the data that the procedure modifies;  
and (iii) postconditions, which the data satisfy once the procedure has been applied. 

In this paper we introduce the basic building blocks and basic results for the proposed framework for assessing achievability of data-quality constraints. 
The contributions include formalizing the notion of (sequences of) data-transforming procedures, and characterizing instances that are outcomes of  applying  
(sequences of) procedures over other instances. We also illustrate our design choices by discussing ways to encode important database tasks in the proposed framework, including data migration, data cleaning, and schema updates.

One of the advantages of our framework is its generality, as it can be used to encode multiple operations not only on relational data, but on semistructured or even 
unstructured text data. 
This very generality implies that  to be able to reason about the properties of our framework, one needs to first instantiate some of its most abstract components. 
As a proof of concept, we provide an in-depth analysis of applications of (sequences of) procedures over relational data, where the procedures are stated using 
standard relational-data formalisms. 
We show that properties concerning outcomes of procedures are in general (not surprisingly) undecidable. At the same time, we achieve decidability and tractability for broad classes of realistic procedures 
that we illustrate with examples. While the formalism and results presented in this paper have practical implications on their own, we see them mainly as prerequisites that need to be understood before one can formalize the notion 
%their main use is to enable us to take the next step, that of formalizing 
of assembling procedures in the context of and in response to a user task. We conclude this paper by showing how the proposed framework can be used to formally define the following problem: 
Given a set of procedures and data-quality criteria, is it possible to assemble a sequence of procedures such that the data outcome is assured to satisfy this criteria?

% ${\textit }$

 \medskip
\noindent
\textbf{Related Work}: 
Researchers have been working on eliciting and defining specific dimensions of quality of the data --- \cite{WangS96} provides a widely acknowledged standard; please also see \cite{KahnSW02,LeeSKW02}. At the general level, high-quality data can be regarded as being fit for their intended use \cite{LeePWF09,WangLPS98,ChengalurSP98} --- that is, both context and use (i.e., tasks to be performed) need to be taken into account when evaluating and improving the quality of data. %\cite{} report on measuring the impact of information quality on decision processes. 
Recent efforts have put an emphasis on information-quality policies and strategies; please see \cite{LeePWF09} for a groundbreaking set of generic information-quality policies that structure decisions on information. An information-quality improvement cycle, consisting of the define-measure-analyze-improve steps for data quality, has been proposed in \cite{Wang98}. Work has also been done \cite{LeeS04} in the direction of integrating process measures with information-quality measures. Our work is different from these lines of research in that in our framework we assume that task-oriented data-quality requirements are already given in the form of constraints that need to be satisfied  on the data, and that procedures for improving data quality are also specified and available. Under these assumptions, our goal is to determine whether the procedures can be used to achieve satisfaction of the quality requirements on the data. 

The work \cite{2012FanGeertsBook} introduces a unified %rule-based 
framework covering formalizations and approaches for a range of problems in data extraction, cleaning, repair, and integration, and also supplies an excellent survey of related work in these areas. More recent work on data cleaning includes \cite{BergmanMNTsigmod15,BergmanMNT15,KrishnanWFGKM015,RazniewskiKNS15,NuttPS15}. The research area of business processes \cite{2012Deutsch} studies the environment in which data are generated and transformed, including processes, users of data, and goals of using the data.  
In this context, researchers have studied automating composition of services into business processes, see, e.g.,  \cite{BerardiCGHLM05,BerardiCGHM05,BerardiCGLM05}, under the assumption that the assembly needs to follow a predefined  workflow of executions of actions (services). In contrast, in our work, the predefined part is the specified constraints that the data should satisfy after the assembled workflow of available procedures has been applied to it. Another line of work  \cite{DHPV09,BGHLS07} is closer to reasoning about static properties of business process workflows. That work is different from ours in that it does not pursue the goal of constructing new workflows. 

 \smallskip
\noindent
\textbf{Outline of the paper}: Section \ref{ref-prelim} contains basic definitions used  
in the paper. Section \ref{ref-procedures} introduces the proposed framework, and Section \ref{sec-exa} discusses 
encoding tasks such as data exchange, data cleaning, and alter-table statements. The formal results are presented in Section \ref{ref-basics}. Section \ref{ref-conc} concludes with a discussion of future challenges and opportunities. 

%Part of our work can be regarded as an application of classical planning  in AI  \cite{GhallabNP04}. 

%Among application-independent systems for improving information quality, NADEEF \cite{EbaidEIOQ0Y13} is a commodity framework that incorporates in an open-ended way procedures for fixing  data-quality problems that can be specified over certain types of tuple patterns. 
%The system called DeepDive \cite{NiuZRS12,ShinWWSZR15,SaRR0WWZ16}  employs machine learning and statistical inference to facilitate expert-guided data extracted, cleaning, and integration in a unified and scalable way. %In the enterprise world, the Extract-Transform-Load (ETL)  process in business applications is generally managed using informal rules and guidelines \cite{Devlin97,Kimball04}. 

\section{Preliminaries}
\label{ref-prelim}

\noindent
\textbf{Schemas and instances}: Assume a countably infinite set of attribute names $\mathcal{A} = \{ A_1$, $A_2$, $\ldots \}$ and a countably infinite set (disjoint from $\cal A$) of relation names $\mathcal{R} = \{ R_1$, $R_2$, $\ldots \}$. 
A relational schema is a partial function $\Sch: \mathcal{R} \to 2^{\mathcal{A}}$ with finite domain, 
which associates a finite set of attributes to a finite set of relation symbols. If $\Sch(R)$ is defined, we say that $R$ is in $\Sch$.
A schema $\Sch'$ extends a schema $\Sch$ if for each relation $R$ such that $\Sch(R)$ is defined, we have that 
$\Sch(R) \subseteq \Sch'(R)$. That is, $\Sch'$ extends $\Sch$ if $\Sch'$ assigns at least the same attributes to all relations in $\Sch$.
%A relational schema $\Sch$ is a finite set $\{R_{1},\dots,R_{n}\}$ of relation symbols from $\mathcal{R}$, where each relation $R_{i}$ in $\Sch$ of 
%arity $m$ is associated with a 
%set $\{ A_1,\dots,A_m \}$ of attribute names from $\cal A$.
We also assume a total order $\leq_\mathcal{A}$ over all attribute
names in order to be able to switch between the \emph{named} and
\emph{unnamed} perspectives for instances and queries.

%\jd{For the discussion of structure constraints later on, it seems to
%  me it would be helpful to define a schema $\Sch$ not just as a
%  finite set of relation names, but instead as a function
%  $\Sch : \{R_{1},\dots,R_{n}\} \to 2^{\mathcal{A}}$ from the relation
%  names to finite subsets of attribute names, with
%  $|\Sch(R)| = \text{arity}(R)$.  One can then express a constraint
 % $R[s]$ as requiring that $\Sch(R) \subseteq s$.}

%\jd{I presume everyone reading this knows what named and unnamed
%  tuples are.  If not, a word of explanation when they are introduced
%  might help.}

We define instances so that it is possible to switch between the named and unnamed perspectives. 
Assume a countably infinite set of domain values $D$ (disjoint from both $\mathcal A$ and $\mathcal R$). 
Following \cite{AHV95}, an instance $I$ of schema $\Sch$ assigns to each relation $R$ in $\Sch$, 
where $\Sch(R) = \{A_1,\dots,A_n\}$, a set $R^I$ of \emph{named} tuples, each of which is a function of the form $t:\{A_1,\dots,A_n\} \rightarrow D$, 
representing the tuples in $R$. (We use $t(A_i)$ to denote the element of a tuple $t$ corresponding to the attribute $A_i$.)  
By using the order $<_\mathcal{A}$ over attributes, we can alternatively view $t$ as an \emph{unnamed} tuple, corresponding to the 
sequence $\bar t = t(A_1),\dots,t(A_n)$, with $A_1 <_\mathcal{A} \cdots <_\mathcal{A} A_n$. Thus, we can also view an instance $I$ as 
an assignment $R^I$ of sets of unnamed tuples (or just tuples) $\bar t \in D^n$.  
In general, when we know all attribute names for a relation, we use the unnamed perspective, but when the 
set of attributes is not clear, we resort to the named perspective. 
For the sake of readability, we abuse notation and use $\schema(I)$ to denote the schema of an instance $I$. 

For instances $I$ and $J$ over a schema $\Sch$, we write $I \subseteq J$ if for each relation symbol $R$ in $\Sch$ we have that 
$R^I \subseteq R^J$. 
Furthermore, if $I_1$ and $I_2$ are instances over respective schemas $\Sch_1$ and $\Sch_2$, 
we denote by $I_1 \cup I_2$ the instance over schema $\Sch_1 \cup \Sch_2$ 
such that $R^{I_1 \cup I_2} = R^{I_1} \cup R^{I_2}$ if $R$ is in both $\Sch_1$ and $\Sch_2$, 
$R^{I_1 \cup I_2} = R^{I_1}$ if $R$ is only in $\Sch_1$, and $R^{I_1 \cup I_2} = R^{I_2}$ if 
$R$ is only in $\Sch_2$. 
Finally, an instance $I'$ \emph{extends} an instance $I$ if (1) $\schema(I')$ extends $\schema(I)$, and (2)  
for each relation $R$ in $\schema(I)$ with assigned attributes $\{A_1,\dots,A_n\}$ and for each tuple $t$ in $R^I$, 
there is a tuple $t'$ in $R^{I'}$ such that $t(A_i) = t'(A_i)$ for each $1 \leq i \leq n$.  Intuitively, $I$ 
extends $I'$ if the projection of $I'$ over the schema of $I$ is contained in $I$.

%\jd{Do we need to introduce the set of variable names?}

\noindent
\textbf{Conjunctive queries}:  Since our goal is for queries to be applicable to different schemas, we adopt a named perspective on queries. 
A {\it named atom} is an expression of the form $R(A_1:x_1,\dots,A_k:x_k)$, where 
$R$ is a relation name, each $A_i$ is an attribute name, and each $x_i$ is a variable. We say that 
the variables mentioned by such an atom are $x_1,\dots,x_k$, and the attributes mentioned by it are 
$A_1,\dots,A_k$. A conjunctive query (CQ) is an expression of the form  
$\exists \bar z \phi(\bar z,\bar y)$, where $\bar z$ and $\bar y$ are tuples of variables 
and $\phi(\bar z,\bar y)$ is a conjunction of named atoms that use the variables in $\bar z$ and $\bar y$. 

%\jd{The notion of variable-value assignments is introduced in the
 % following paragraph without the benefit of a formal name or
 % definition.}

A named atom $R(A_1:x_1,\dots,A_k:x_k)$ is \emph{compatible} with  
schema $\Sch$ if $\{A_1,\dots,A_k\} \subseteq \Sch(R)$. 
A CQ is compatible with $\Sch$ if all its named atoms are compatible. 
Given a named atom $R(A_1:x_1,\dots,A_k:x_k)$, an instance $I$ of a schema $\Sch$ 
%that is compatible with the atom and an assignment $\tau: \{x_1,\dots,x_k\} \rightarrow D \cup \Nulls$, 
that is compatible with the atom, and an assignment $\tau:
\{x_1,\dots,x_k\} \rightarrow D$ of values to variables,  
we say that $(I,\tau)$ satisfy $R(A_1:x_1,\dots,A_k:x_k)$ if there is a 
tuple $a:\mathcal A \rightarrow D$
%\jd{``function'' was ``tuple'' in the original, but what is described
%  here is a function on an infinite domain, unlike the tuples
%  discussed earlier, and unlike many common uses of the word
%  ``tuple''.  Change it back if this usage is standard for the
%  audience.}
matching values with $\tau$ on attributes in $R$ in the sense that
$a(A_i) = \tau(x_i)$ for each $1 \leq i \leq k$.  (Under the unnamed
perspective we would require a tuple $a$ in $R^I$ such that its
projection $\pi_{A_1,\dots,A_k} \bar a$ over attributes
$A_1,\dots,A_k$ is precisely the tuple $\tau(x_1),\dots,\tau(x_k)$.)
The usual semantics of conjunctive queries
now follows, extending the notion of assignments in the usual way.
Finally, given a conjunctive query $Q$ that is compatible with $\Sch$, the evaluation $Q(I)$ of $Q$ over $I$ 
is the set of all the tuples $\tau(x_1),\dots,\tau(x_k)$ such that $(I,\tau)$ satisfy $Q$. 

%\noindent
%\textbf{Total conjunctive query}. 
We also need to specify queries that extract all tuples stored 
in a given relation, regardless of the schema, as is done in SQL with the query \texttt{SELECT * FROM R}. 
To be able to do this, we also use what we call \emph{total queries},
which, as we do not need to know the arity of $R$,
are simply constructs of the form 
$R$, for a relation name $R$.  A total query of this form is compatible with a schema $\Sch$ if $\Sch(R)$ is defined, and the evaluation 
of this query over an instance $I$ over a compatible schema $\Sch$ is simply the set of tuples $R^I$. 
%We use $CQ_*$ to denote the set of all conjunctive queries and all total queries. 

%We also extend CQs with an additional construct 
%used to differentiate constant values from null values. Formally, 
%let $C$ be a symbol not in $\mathcal{A}$, $\cal R$, $D$, or $\cal N$. A $C$-conjunctive query is a conjunctive query that may additionally use 
%the atoms $C(x)$, %and $\neg C(x)$, 
%for a variable $x$. The semantics is that an assignment $\tau$ satisfies $C(x)$ if $\tau(x)$ is a domain 
%value in $D$ (not a null in $\Nulls$). We use $CQ_C$ to denote the class of all $C$-conjunctive queries, and $CQ_{*,C}$ the class of all $C$-conjunctive queries and 
%all total queries.

\noindent
\textbf{Data Constraints}: 
Most of our data constraints can be captured by \emph{tuple-generating dependencies (tgds)}, which are expressions of the form 
$\forall \bar x \big(\exists \bar y\phi(\bar x,\bar y) \rightarrow \exists \bar z \psi(\bar x,\bar z)\big)$, for conjunctive queries
$\exists \bar y\phi(\bar x,\bar y)$ and $\exists \bar y\psi(\bar
x,\bar z)$, and by \emph{equality-generating dependencies (egds)}, which are expressions of the 
form $\forall \bar x \big(\exists \bar y\phi(\bar x,\bar y) \rightarrow x = x'\big)$, for a conjunctive query
$\exists \bar y\phi(\bar x,\bar y)$ and variables $x,x'$ in $\bar x$.  
As usual, for readability we sometimes omit the universal quantifiers of tgds and egds. 
%Furthermore, a $C$-tgd is a tgd built from $C$-conjunctive queries, and likewise for 
%$C$-egds. 
An instance $I$ satisfies a set $\Sigma$ of tgds and egds, written $I \models \Sigma$, if 
(1) the schema of $I$ is compatible with each conjunctive query 
in each dependency in $\Sigma$, and (2) every assignment 
$\tau: \bar x \cup \bar y \rightarrow D$ 
such that $(I,\tau) \models \phi(\bar x,\bar y) \rightarrow D$ 
can be extended into an assignment $\tau': \bar x \cup \bar y \cup \bar z \to D$ such that 
$(I,\tau') \models \psi(\bar x,\bar z)$.  
%An instance $I$ satisfies a set $\Sigma$ of ($C$)-tgds and ($C$)-tgds if 
%(1) the schema of $I$ is compatible with each ($C$)-conjunctive query 
%in each dependency in $\Sigma$, and (2) $I$ satisfies $\Sigma$ in the usual logical way. 

%Finally, we also treat \emph{total tgds}, which are constraints of the
%form $R \rightarrow S$, for relation names $R$ and $S$, that indicate
%that every tuple in $R$ must also be in $S$.
%Formally, an instance $I$ satisfies the constraint 
%$R \rightarrow S$ if the schema of $I$ has both $R$ and $S$, with the same number of attributes, and if the evaluation of the total query $R$ over $I$ is a subset of the evaluation of the total query $S$ over $I$. 

A tgd is \emph{full} if it does not use existentially 
quantified variables on the right-hand side. %(that is, if $\bar z$ is empty). 
A set $\Sigma$ of tgds is {\em full} if each tgd in $\Sigma$ is full.
$\Sigma$ is {\em acyclic} if an acyclic graph is formed by
representing each relation mentioned in a tgd in $\Sigma$ as a node and
by adding an edge from node $R$ to $S$ if a tgd in $\Sigma$
mentions $R$ on the left-hand side and $S$ on the right-hand side.

\noindent
\textbf{Structure Constraints}:
Structure constraints are used to specify that schemas need to contain a certain 
relation or certain attributes. A structure constraint is a formula of the form 
$R[\bar s]$ or $R[*]$, where $R$ is a relation symbol,  
$\bar s$ is a tuple of attributes, and $*$ is a symbol not in 
$\mathcal A$ or $\mathcal R$ intended to function as a 
wildcard. 
A schema $\Sch$ satisfies a structure constraint $R[\bar s]$, denoted 
by $\Sch \models R[\bar s]$, if $\Sch(R)$ is defined, and each attribute in $\bar s$ belongs to $\Sch(R)$ 
%\jd{I do not understand just what is meant by condition (2) here.
%  Using my function conception of $\Sch$, does it mean
%  $\Sch(R) \subseteq \Sch$ (or more properly,
%  $\Sch(R) \subseteq \text{dom}(\Sch)$)?}
The schema satisfies the constraint $R[*]$ if $\Sch(R)$ is defined.
%\jd{That is, $\Sch(R) \subset \mathcal{A}$?}
%\jd{Is it intended that structure constraints only impose lower
%  bounds, not upper bounds?  The latter would correspond to negative
 % constraints, or inclusion in complements.}
% yes, it is the idea
For an instance $I$ over a schema $\Sch$ and a set $\Sigma$ of tgds, egds, and structure constraints, 
we write $(I, \Sch) \models \Sigma$ if $I$ satisfies each data constraint in $\Sigma$ and $\Sch$ satisfies 
each structure constraint in $\Sigma$.

\section{Procedures}
\label{ref-procedures}

In this section we formalize the notion of procedures that transform
data.  We view procedures as black boxes, and assume no knowledge
of or control over their inner workings.  Our reasoning about 
procedures is based on two properties: an input condition, or
\emph{precondition} on the state of the data that must hold for a procedure to be applicable, and an output condition, or
\emph{postcondition} on the state of the data that must hold 
after the application.  

\begin{example}
\label{exa-intro} 
Consider again the medical example discussed in the introduction, with a schema having two 
relations: \textit{LocVisits}, holding information about emergency-room visits in a geographical area, 
and \textit{EVisits}, holding visit information for an individual emergency room in a particular location. 
%\jd{I am confused by these descriptions of LocVisits and Evisits.  Is
%  there a different Evisits for each location of the sort involved in
%  LocVisits?  Should ``for an individual er'' be ``for the individual
%  er''?  What is a location, and what databases are in view here?}
Suppose we know that a procedure is available that migrates the data from 
\textit{EVisits} to \textit{LocVisits}. We do not know how the procedure works, 
but we do know that once it has been applied, all tuples in \textit{EVisits} also appear in \textit{LocVisits}. 
In other words, this procedure can be described by the following information: 

\smallskip
\noindent
\textbf{Precondition:} The schema has relations \textit{LocVisits} and \textit{EVisits}, each with attributes 
\fid, \pid and \timestamp (standing for facility ID, patient insurance ID, and timestamp). 

\smallskip
\noindent
\textbf{Postcondition:} Every tuple from \textit{EVisits} is in \textit{LocVisits}. 

\smallskip
\noindent
\textbf{Scope and safety guarantees:} To rule out procedures that, for example, delete all the tuples from the database, we must be assured that our procedure only modifies the relation \textit{LocVisits}, and that 
it preserves all the tuples present in \textit{LocVisits} before the application of the procedure. We shall soon see how to encode these guarantees into 
our framework. 

Suppose that after a while, the requirements of one of the partner agencies of the organization impose an additional 
requirement: Relation \textit{LocVisits} should also contain information about the age of the patients. 
Suppose the organization also has a relation \textit{Patients}, where the patient age is recorded in attribute 
\textit{age}, together with \fid and \textit{patientId}. To migrate the patient ages into 
\textit{LocVisits}, one needs the following steps: First add the attribute \age to \textit{LocVisists}, and then 
update this table so that the patient ages are as recorded in \textit{Patients}. 
We observe that all the procedures involved in this operation can be captured using the same framework of 
preconditions, postconditions, and scope/safety guarantees that we used to capture the data-migration procedure. 
%Note, however, that in future 
%migrations from \textit{EVisits} to \textit{LocVisists} one will have to execute both the migration procedure and the 
%procedure that updates ages. 
\end{example}

\subsection{Formal Definition}

We define procedures with respect to a class $\Const$ of constraints and a class $\Que$ of queries.
\begin{definition}
A %(description of a) 
{\em procedure} $P$ over $\Const$ and $\Que$ is a tuple $(\Scope,\Cpre,\Cpost,\Q_\safe)$, 
where 
\begin{itemize}
\item $\Scope$ is a set of structure constraints that defines the scope (i.e., relations and attributes) in which  
the procedure acts; 
\item $\Cpre$ and $\Cpost$ are constraints in 
$\Const$ that describe the pre- and postconditions of the procedure,
respectively; and 
\item $\Q_\safe$ is a set of queries in $\Que$ that serve as a safety guarantee for the procedure. %(to ensure, for example, that no tuples are deleted). 
\end{itemize}
\end{definition}

\begin{example} 
\label{exa-proc-1}
Let us return to the procedure outlined in Example \ref{exa-intro}, where the intention was to define migration of 
 data from relation \textit{EVisits} into \textit{LocVisits}. In our formalism, we describe this procedure as follows. 
 
 \smallskip
\noindent
\underline{$\Scope$}: Since the procedure migrates tuples into $\textit{LocVisits}$, the scope of the procedure 
is just this relation. This is described using the structure constraint $\textit{LocVisits}[*]$.

%\smallskip
\noindent
\underline{$\C_\pre$}: We use the structure constraints $\textit{EVisits}[\fid, \pid, \timestamp]$ and $\textit{LocVisits}$ $[\fid,$ $\pid,$  $\timestamp]$, to ensure that the database has the correct attributes. 

%\smallskip
\noindent
\underline{$\C_\post$}: The postcondition comprises the tgd
\vspace{-5pt}
$$\textit{EVisits}(\fid:x, \pid:y, \timestamp:z)\rightarrow 
\textit{LocVisits}(\fid:x, \pid:y, \timestamp:z).$$
%\vspace{-10pt}
%\begin{multline*}
%\textit{EVisits}(\fid:x, \pid:y, \timestamp:z) \rightarrow \\ 
%\textit{LocVisits}(\fid:x, \pid:y, \timestamp:z). 
%\end{multline*}
 That is to say, after the procedure has been applied,  
the projection of \textit{EVisits} over \fid, \pid and \timestamp is a subset of the respective projection of \textit{LocVisits}.

%\smallskip
\noindent
\underline{$\Q_\safe$}: We can add safety guarantees in terms of queries that need to be preserved 
when the procedure is applied. In this case, since we do not want the procedure to delete anything 
that was stored in \textit{LocVisits} before the migration, we add the safety constraint 
$\textit{LocVisits}(\fid:x,\pid:y,\timestamp:z)$, whose intent is to to state that all answers to this query on \textit{LocVisits} that are present in the database before the application of the procedure 
must be preserved. We formalize this intuition when giving the semantics below. 
\end{example}

\subsection{Semantics} 

Formalizing the semantics of procedures requires additional notation. 
Given a set $\C$ of structure constraints and a  schema $\Sch$, we denote by 
$Q_{\Sch \setminus \C}$ the conjunctive query that, intuitively, is meant to retrieve the projection of the entire database over all relations and attributes not mentioned in $\C$. 
Formally, $Q_{\Sch \setminus \C}$ includes a conjunct
$R(A_1:z_1,\dots,A_m:z_m)$ for each relation $R$ in $\Sch$ but not
mentioned in $\C$, where $\Sch(R) = \{ A_1,\dots,A_m \}$ and $z_1,\dots,z_m$ are fresh variables.
In addition, if some constraint in $\C$ mentions a relation $T$ in
$\Sch$, but no constraint in $\C$ is of the form $T[*]$, then
$Q_{\Sch \setminus \C}$ also includes a conjunct
$T(B_1:z_1,\dots,B_k:z_k)$, where $\{B_1,\dots,B_k\}$ is the set of
all the attributes in $\Sch(T)$ that are not mentioned in any constraint in $\C$, 
and $z_1,\dots,z_k$ are again fresh variables.
For example, consider a schema $\Sch$ with relations $R$, $S$, and $T$, where 
 $R$ has attributes $A_1$ and $A_2$, $T$ has attributes $B_1$, $B_2$ and $B_3$, and 
 $S$ has $A_1$ and $B_1$. Further, consider the set $\C$ with a single constraint $R[*] \wedge S[B_1]$. 
 Then  $Q_{\Sch \setminus C}$ is the query $T(B_1: z_1, B_2:  z_2, B_3: z_3) \wedge S(A_1: w_1)$. 
 Note that $Q_{\Sch \setminus \C}$ is unique up to the renaming of
 variables and order of conjuncts. 
 
A procedure $P = (\Scope,\Cpre,\Cpost,\Q_\safe)$ is \emph{applicable} on an instance $I$ over schema $\Sch$ if 
%\begin{enumerate}
%\item The query  $Q_{\Sch \setminus \Scope}$ and each query in $\Q_\safe$ are compatible with both $\Sch$ and $\Sch'$, and 
%\item We have that $(I,\Sch) \models \C_\pre$. 
%\end{enumerate} 
(1) The query  $Q_{\Sch \setminus \Scope}$ and each query in $\Q_\safe$ are compatible with both $\Sch$ and $\Sch'$, and 
(2) $(I,\Sch)$ satisfy the preconditions $\C_\pre$. 
We can now proceed with the semantics of procedures. 
\begin{definition}
%Let $I,\Sch$ and $I',\Sch'$ be two instances with their corresponding schemas.  
Let $I$ be an instance over a schema $\Sch$. 
An instance $I'$ over schema $\Sch'$ 
is a {\em possible outcome} of applying $P$ over the instance and schema $(I,\Sch)$ if the following holds: 
\begin{enumerate}
\item $P$ is applicable on $I$. 
\item $(I',\Sch') \models \C_\post$. 
\item The answers of the query $Q_{\Sch \setminus \Scope}$ do not change: 
$Q_{\Sch \setminus \Scope}(I) = Q_{\Sch \setminus \Scope}(I')$.
\item The answers of each query $Q$ in $\Q_\safe$ over $I$ are preserved: 
$Q(I) \subseteq Q(I')$. 
\end{enumerate} 
\end{definition}

In the definition, we state the schemas of instances $I$ and $I'$ explicitly, to reinforce the 
fact that schemas may change during the application of procedures. 
However, most of the time the schema can be understood from the instance, so we normally 
just say that an instance $I'$ is a possible outcome of $I$ (even if the schemas of $I$ and $I'$ are different). Let us also recall that we use $\schema(I)$ to denote the 
schema of an instance $I$.

\begin{figure}[t]
\centering
{\small
\begin{tabular}{c}
\fbox{\parbox{0.93\textwidth}{
\centering
\begin{tabular}{ccc}
\evisits &  & \locvisits \\ 
\begin{tabular}{|ccc|}
\hline
\fid & \pid & \timestamp \\
\hline 
1234 & 33 & {070916 12:00} \\
2087 & 91 & {090916 03:10} \\
\hline
\end{tabular}
& & 
\begin{tabular}{|ccc|}
\hline
\fid & \pid & \timestamp \\
\hline
1234 & 33 & {070916 12:00} \\
1222 & 33 & {020715 07:50} \\
\hline
\end{tabular}
\end{tabular} \\ 
(a) Instance $I$}}\\
\\
\fbox{\parbox{0.93\textwidth}{
\centering
\begin{tabular}{ccc}
\evisits &  & \locvisits \\ 
\begin{tabular}{|ccc|}
\hline
\fid & \pid & \timestamp \\
\hline 
1234 & 33 & {070916 12:00} \\
2087 & 91 & {090916 03:10} \\
\hline
\end{tabular}
& & 
\begin{tabular}{|ccc|}
\hline
\fid & \pid & \timestamp \\
\hline
1234 & 33 & {070916 12:00} \\
1222 & 33 & {020715 07:50} \\
2087 & 91 & {090916 03:10} \\
\hline
\end{tabular}
\end{tabular} \\
(b) Possible outcome $J_1$ of applying $P$ over $I$}}
\end{tabular}
\smallskip

\begin{tabular}{cc}
\fbox{\parbox{0.42\textwidth}{
\centering
\begin{tabular}{c}
\locvisits \\ 
\begin{tabular}{|ccc|}
\hline
\fid & \pid & \timestamp \\
\hline 
1234 & 33 & {070916 12:00} \\
1222 & 33 & {020715 07:50} \\
2087 & 91 & {090916 03:10} \\
4561 & 54 & {080916 23:45} \\
\hline
\end{tabular} \\ 
(c) relation \locvisits in $J_2$
\end{tabular}}} & 
\fbox{\parbox{0.50\textwidth}{
\centering
\begin{tabular}{c}
\locvisits \\ 
\begin{tabular}{|cccc|}
\hline
\fid & \pid & \timestamp & \age \\
\hline 
1234 & 33 & {070916 12:00} & 21\\
1222 & 33 & {020715 07:50} & 45 \\
2087 & 91 & {090916 03:10} & 82 \\
\hline
\end{tabular} \\ 
(d) relation \locvisits in $J_3$
\end{tabular}}} 
\end{tabular}
}
\vspace{-5pt}
\caption{Instance $I$ of Example \ref{exa-proc-1b} (a), a complete possible outcome (b), and 
the relation \locvisits of two other possible outcomes, one in which \locvisits contains additional tuples not mentioned in \evisits (c), and one where an extra attribute is added to \locvisits (d).}
\label{fig-exa-1}
\end{figure}

\begin{example}[Example \ref{exa-proc-1} continued]
\label{exa-proc-1b}
Recall the procedure $P = (\Scope,\Cpre,\Cpost,\Q_\safe)$  defined in Example \ref{exa-proc-1}. 
Consider the instance $I$ over the schema $\Sch$ with relations \evisits and \locvisits, each 
with attributes \fid, \pid, and \timestamp, as shown in Figure \ref{fig-exa-1} (a). 
Note first that $P$ is indeed applicable on $I$. 
When applying the procedure $P$ over $I$, we know from $\Scope$ that the only relation whose content can change is 
\locvisits, while \evisits (or more precisely, the projection of \evisits over \pid, \fid and \timestamp) is the same across all possible outcomes. 
Furthermore, we know from $\Cpost$ that in all possible outcomes the projection of 
\evisits over attributes \fid, \pid, and \timestamp must be the same as the projection of 
\locvisits over the same attributes. Finally, from $\Q_\safe$ we know that 
the projection of \locvisits over these three attributes must be preserved. 

Perhaps the most obvious possible outcome of applying $P$ over $I$ is that of the instance $J_1$ in Figure \ref{fig-exa-1} (b), 
corresponding to the outcome where the tuple in \evisits that is not 
yet in \locvisits is migrated into this last relation. However, since we assume no control over the actions performed by the procedure $P$, 
it may well be that it is also migrating data from a different relation that we are not aware of, 
producing an outcome whose relation \evisits remains the same as in $I$ and $J_1$, but 
where \locvisits has additional tuples, as depicted in Figure \ref{fig-exa-1} (c). Moreover,
%and again since we do not control the procedure $P$, 
it may also be the case that the procedure alters the schema of \locvisits, adding an extra attribute \age, importing the information from an unknown source, as shown in Figure \ref{fig-exa-1} (d). 
\end{example}  

As we have seen in this example, in general the number of possible outcomes (and even the number of possible schemas) that result after a procedure is executed 
is infinite. For this reason, we are generally more interested in properties shared by all possible outcomes, which motivates the following definitions. 

\begin{definition}
The {\em outcome set} of applying a procedure $P$ to $I$ is defined as the set. 
$$\outcome_P(I) = \{I' \mid I' \text{ is a possible outcome of applying } P \text{ to } I \}.\footnote{Recall that the schema of instances $I'$ 
is not necessarily the same as that of $I$.}$$
\end{definition} 

%In other words, because our only information about a procedure consists of its precondition and postcondition, when reasoning about the outcomes of procedures we usually cannot specify a single instance of the outcome, but only the entire set of possible outcomes, which in most cases will be of infinite size.

\section{Defining Common Database Tasks as Procedures}
\label{sec-exa}

We now show additional examples of defining common database tasks  as procedures within our framework. 
We show that data exchange, alter-table statements, and data cleaning can all be accommodated by the framework, 
and provide additional examples in Appendix \ref{app-exa}. It is worth noticing that in our first three examples we use only structure 
constraints, tgds, and egds as pre- and postconditions, and that our
safe queries are all conjunctive queries.  The last example calls for extending 
the language used to define procedures. %The tradeoff is that the corresponding data tasks are much more general. 

\subsection{Data Exchange} 

\newcommand{\Scht}{\Sch_\textit{t}} 
\newcommand{\Schs}{\Sch_\textit{s}} 

We have already seen an example of specifying data-migration tasks as black-box procedures. 
However, a more detailed  discussion will allow us to illustrate some of the basic properties of our framework. 
Following the notation introduced by Fagin et al. in \cite{FKMP05}, the most basic instance of the data-exchange problem 
considers a source schema $\Sch_\text{s}$, a target schema $\Sch_\text{t}$,  
and a set $\Sigma$ of dependencies that define how data from the source schema are to be mapped to the target schema. 
The dependencies in $\Sigma$ are usually tgds whose left-hand side is compatible 
with $\Sch_\text{s}$, and the right-hand side is compatible with 
$\Scht$. 
%Additionally, one can specify a set $\Gamma$ of target constraints: tgds or egds in which all atoms are compatible with $\Sch_\text{t}$.
The data-exchange problem is as follows: Given a source instance $I$, compute 
a target instance $J$ so that $(I,J)$ satisfies all the dependencies in $\Sigma$.  %(and $\Gamma$ if it is given). 
Instances $J$ with this property are called \emph{solutions} for $I$ under $\Sigma$. %($\Sigma \cup \Gamma$). 
 
In encapsulating this task as a black box within our framework, we assume that the target and source  schemas are part of the same schema.
(Alternatively, one can define procedures working over different databases.) 
%Let then $(\Sch_\text{s},\Sch_\text{t},\Sigma,\Gamma)$ as above.  It is easier if we think of two procedures, 
%one for the actual migration of the database according to $\Sigma$, and the other for enforcing the tgds and egds in $\Gamma$. 
Let $(\Sch_\text{s},\Sch_\text{t},\Sigma)$ be as above. 
We construct the procedure $\Pro^\textit{st} = (\Scope^\textit{st},\Cpre^\textit{st},\Cpost^\textit{st},\Q_\safe^\textit{st})$,  
%and $\Pro^t = (\Scope^t,\Cpre^t,\Cpost^t,\Q_\safe^t)$, 
where 
%$\Pro^\textit{st}$ is defined as: 
\begin{itemize}
\item $\Scope^\textit{st}$ contains an atom $R[*]$ for each relation $R$ on the right-hand side of a tgd in $\Sigma$; 
\item $\Cpre^\textit{st}$ contains a structure constraint 
$Q[A_1,\dots,A_n]$
for each query of the form\linebreak $Q(A_1:x_1,\dots,A_n:x_n)$ on the left-hand side of a tuple-generating dependency in $\Sigma$; 
\item $\C_\post^\textit{st}$ is the set of all the tgds in $\Sigma$; and 
%not needed, may do a comment somewhere that we can start with something other than 
%empty target instance 
%\item $\Q_\safe^\text{tgd}$ is the set of all queries $Q(A_1:x_1,\dots,A_n:x_n)$ that appear in the 
%right-hand side of a dependency in $\Sigma$: 
\item $\Q_\safe^\textit{st}$ is the conjunction of all the atoms $R$ appearing in any tgd in $\Sigma$. 
\end{itemize}
%\noindent
%In turn, $\Pro^t$ is defined as: 
%\begin{itemize}
%\item $\Scope^t$ contains, for each equality $x = y$ and each query 
%$Q$ that binds $x$ to the attribute $A$ of a relation $R$ and each query 
%$Q'$ that binds $y$ to the attribute $A'$ of a relation $R'$, the structure constraints 
%$R[A]$ and $R'[A']$.
%\item $\Cpre^e$ contains, for each query of the form $Q(A_1:x_1,\dots,A_n:x_n)$ in the left-hand side a dependency in $\Sigma$, the structure constraint 
%$Q[A_1,\dots,A_n]$, 
%\item $\C_\post^e$ is the set of all egds in $\Sigma$; and 
%\item $\Q_\safe^e$ is empty.
%\end{itemize}
%Given two procedures $\Pro_1$ and $\Pro_2$, we say that 
%a pair $(I_2,\Sch_2)$ is a possible outcome of applying first $\Pro_1$ and 
%then $\Pro_2$ over a pair $(I_0,\Sch_0)$ if there is an instance 
%$I_1$ and a schema $\Sch_0$ such that 
%$(I_1,\Sch_1)$ is a possible outcome of applying $\Pro_1$ over 
%$(I_0,\Sch_0)$ and $(I_2,\Sch_2)$ is a possible outcome of applying $\Pro_2$ over 
%$(I_1,\Sch_1)$. 
By the semantics of procedures, it is not difficult to conclude that, 
for every pair of instances $I$ and $J$ over $\Sch_\text{s}$ and 
$\Sch_\text{t}$, respectively, we have that 
$J$ is a solution for $I$ if and only if the instance $I \cup J$ over 
the schema $\Sch_\text{s} \cup \Sch_\text{t}$ is a possible outcome of applying 
$\Pro^\textit{st}$ 
%and $\Pro^e$ 
over $(I,\Sch_\text{s} \cup \Sch_\text{t})$. We can make this statement much more general, 
as the set of all possible outcomes essentially corresponds to the set of solutions of the data-exchange setting. 

\begin{proposition}
An instance $J$ over schema $\Sch_\text{s} \cup \Sch_\text{t}$ is a possible outcome of applying $\Pro^\textit{st}$ over 
$(I,\Sch_\text{s} \cup \Sch_\text{t})$ if and only if $J$ is a solution for $I$ under $\Sigma$. 
\end{proposition}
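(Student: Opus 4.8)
Write $\Sch = \Sch_\text{s}\cup\Sch_\text{t}$ and, as in the discussion preceding the statement, regard the source instance $I$ as an instance over $\Sch$ whose target relations are empty; then ``$J$ is a solution for $I$ under $\Sigma$'' unfolds into the two conditions (a) the restriction of $J$ to $\Sch_\text{s}$ equals $I$, and (b) $J\models\Sigma$. The plan is to match (a) and (b) against the four conditions in the definition of \emph{possible outcome} for the particular procedure $\Pro^\textit{st}$. The one preliminary that makes everything line up is the observation that, since every structure constraint in $\Scope^\textit{st}$ has the form $R[*]$, the second (``in addition'') clause in the definition of $Q_{\Sch\setminus\C}$ never fires for $\C=\Scope^\textit{st}$; hence $Q_{\Sch\setminus\Scope^\textit{st}}$ is, up to renaming of variables and reordering of conjuncts, simply the conjunction, over all relations of $\Sch$ that do not occur on the right-hand side of any tgd of $\Sigma$ (in particular over all of $\Sch_\text{s}$), of a full atom listing every attribute of that relation. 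Consequently, for any instance $K$ over $\Sch$, $Q_{\Sch\setminus\Scope^\textit{st}}(K)$ literally records the contents of $K$ on those relations, and condition~(3) of the definition of possible outcome is equivalent to: $J$ and $I$ agree on every relation outside the right-hand sides of $\Sigma$, which, since $I$ is a source instance, amounts exactly to condition (a).

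For the direction ($\Leftarrow$), assume (a) and (b). First I would verify that $\Pro^\textit{st}$ is applicable on $I$: the output schema $\schema(J)$ equals $\Sch$, so the compatibility requirement on $Q_{\Sch\setminus\Scope^\textit{st}}$ and on the safe queries is immediate, and $(I,\Sch)\models\Cpre^\textit{st}$ holds because each structure constraint in $\Cpre^\textit{st}$ is derived from the left-hand side of a tgd of $\Sigma$, which by hypothesis is compatible with $\Sch_\text{s}\subseteq\Sch$. Condition~(2), $(J,\schema(J))\models\Cpost^\textit{st}$, is exactly $J\models\Sigma$ (the schema part being automatic, as $\Sch$ mentions every relation and attribute occurring in $\Sigma$), so it holds by (b). Condition~(3) holds by the preliminary observation together with (a). For condition~(4), each safe query is an atom $R$ with $R$ occurring in some tgd of $\Sigma$: if $R$ is a target relation then $R^I=\emptyset$ and the inclusion $R^I\subseteq R^J$ is trivial, and if $R$ is a source relation the inclusion is a special case of condition~(3). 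Hence $J$ is a possible outcome of applying $\Pro^\textit{st}$ over $(I,\Sch)$.

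For ($\Rightarrow$), assume $J$ is a possible outcome. Condition~(2) gives $J\models\Sigma$, i.e.\ (b); and condition~(3), read through the preliminary observation, gives that $J$ coincides with $I$ on $\Sch_\text{s}$ (indeed on every relation not on a right-hand side of $\Sigma$), i.e.\ (a). Thus $J$ is a solution for $I$ under $\Sigma$, which closes the equivalence.

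I do not expect any mathematically hard step; the only real care is definitional. One must fix the convention under which a source instance becomes an instance over the merged schema $\Sch$ and spell out what ``solution'' means in that merged setting, and then carefully check that the somewhat intricate query $Q_{\Sch\setminus\C}$ degenerates, for $\C=\Scope^\textit{st}$, into the plain ``project away the right-hand-side relations'' query --- this is the pivot of the whole argument. The one genuinely substantive point to articulate is why condition~(4) (the safety guarantee $\Q_\safe^\textit{st}$) imposes nothing beyond being a solution; that is exactly where one uses that $I$ has empty target relations, so that the preserved atoms over target relations are vacuous and the preserved atoms over source relations are already forced by condition~(3).
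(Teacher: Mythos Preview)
Your proposal is correct and takes essentially the same approach the paper intends: the paper does not give a formal proof of this proposition at all, remarking only that ``by the semantics of procedures, it is not difficult to conclude'' the result, so your write-up is precisely the definitional unpacking the paper leaves to the reader. Your key observation---that because every constraint in $\Scope^\textit{st}$ has the form $R[*]$, the query $Q_{\Sch\setminus\Scope^\textit{st}}$ degenerates into the full projection onto the non-right-hand-side (hence source) relations---is exactly the pivot, and your treatment of condition~(4) via the emptiness of the target part of $I$ is the right way to dispatch the safety queries.

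One minor remark: the paper phrases $\Q_\safe^\textit{st}$ as a single conjunctive query (the conjunction of all atoms $R$ appearing in $\Sigma$) rather than a set of separate total queries; under that reading, since the target-relation conjuncts are empty on $I$, the whole conjunction already evaluates to $\emptyset$ on $I$ and condition~(4) is vacuous outright. Either reading yields the same conclusion, so this does not affect the validity of your argument. Your closing caveat about needing to fix what ``solution'' means for an instance over the merged schema (and implicitly assuming every target relation actually occurs on some right-hand side of $\Sigma$) is well placed; the paper makes the same tacit assumption.
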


%Note, however, that the outcome set of applying $\Pro^\textit{st}$ over $I$ also includes instances with bigger schemas 
%that are not included as solutions for $I$. 

%\begin{proposition}
%Let $(\Sch_\text{s},\Sch_\text{t},\Sigma)$ be a data exchange setting, 
%%and let $\Pro^t$ and $\Pro^e$ be as above.
%and let $\Pro^\textit{st}$ as above. 
%For every pair of instances $I$ and $J$ over $\Sch_\text{s}$ and 
%$\Sch_\text{t}$, respectively, we have that 
%$J$ is a solution for $I$ if and only if the pair 
%$(I \cup J, \Sch_\text{s} \cup \Sch_\text{t})$ is a possible outcome of applying 
%$\Pro^\textit{st}$ 
%%and $\Pro^e$ 
%over $(I,\Sch_\text{s} \cup \Sch_\text{t})$.  
%\end{proposition}

\subsection{Alter Table Statements}

In our framework, procedures can be defined to work over more than one schema, 
as long as the schemas satisfy the necessary input and compatibility conditions. This is inspired by SQL, 
where statements such as \texttt{INSERT INTO R (SELECT * FROM S)} would be executable over any schema,  
as long as the relations $R$ and $S$ have the same types of attributes in the same order. 
Thus, it seems logical to allow procedures that alter the schema of the 
existing database. To do so, we use structure constraints, as shown in the following example. 

\begin{example}
\label{exa-alter}
Recall from Example \ref{exa-intro} that,  
due to a change in the requirements, we now need to add the attribute \age to the schema of \locvisits. 
In general, we capture \emph{alter table} statements by procedures without scope, used only 
to alter the schema of the outcomes, so that it would satisfy the structural postconditions of procedures. 
In this case, we model a procedure that adds \age to the schema of \locvisits with the 
procedure $P' = (\Scope',\Cpre',\Cpost',\Q_\safe')$, where $\Scope'$ and $\Q_\safe'$ are empty 
(if there is no scope, then the database does not change modulo adding attributes, so we do not 
include any safety guarantees), $\Cpre'$ is the stucture constraint $\locvisits[*]$, stating that the relation 
exists in the schema, and $\Cpost'$ is the structure constraint $\locvisits[\age]$, stating that 
\locvisits now has an age attribute. 
Note that the instance $J_3$ in Figure \ref{fig-exa-1}(d) 
with \evisits as in $J_1$ in Figure \ref{fig-exa-1}(b), is actually a possible outcome of applying $P'$ over instance $J_1$;
the part of the instance given by the schema of $J_1$ does not change, but we do add an extra attribute 
\age to \locvisits, and we cannot really control the values of the newly added attribute. 
\end{example}
We remark that the empty scope in $P'$ guarantees that no relations or attributes are 
deleted when applying this procedure. This happens because 
$Q_{\Sch \setminus \Scope}$ must be compatible with the schema of all outcomes. 
However, nothing prevents us from adding extra attributes on top of \age. This decision  
to use the open-world assumption on schemas reflects the
understanding of procedures as black boxes, which we can execute but
not control in other ways.

\subsection{Data Cleaning} 

Data cleaning is a frequent and important task within database systems (see e.g., \cite{2012FanGeertsBook}). The most simple 
cleaning scenario one could envision is when we have a relation $R$ whose attribute values   
are deemed incorrect or incomplete, and it is desirable to provide the correct values. There are, in general, multiple 
ways to do this; here we consider just a few of them. 

The first possibility is to assume that we have the correct values in another relation, and 
to use this other relation to provide the correct values for $R$. Consider an example. 

\begin{example}
Consider again the schema from Example \ref{exa-intro}. Recall that in Example \ref{exa-alter} we added the attribute \age to the schema of \locvisits. The problem is that we have no control over the newly added values of \age. (If the procedure was a SQL 
alter-table statement, then the \age column would be filled with nulls.) However, another relation, 
\patients, associates an \age value with each pair of (\fid, \pid) values; all we need to do now is to copy the appropriate 
\age value into each tuple in \locvisits. To this end, we specify the procedure $P^* = (\Scope^*,\Cpre^*,\Cpost^*,\Q_\safe^*)$,  
which copies the values of $\age$ from $\patients$ into 
$\locvisits$, using the values of $\fid$ and $\pid$ as a reference. 

\noindent
$\Scope^*$: We use the constraint $\locvisits[\age]$, so that the only 
piece of the database the procedure can alter is $\age$ in the relation $\locvisits$.

\noindent
$\C_\pre^*$: The preconditions are the structure constraints $\locvisits[\fid,\pid,\age]$ and 
$\patients[\fid,\pid,\age]$, plus the fact that the values of $\fid$ and $\pid$ need to determine the values of 
$\age$ in the \patients relation, specified with the dependency  
$\patients(\fid:x,\pid:y, \age:z) \wedge \patients(\fid:x,\pid:y, \age:w) \rightarrow z = w$. 
Note that in this case we do not actually need the structure constraints in \patients, because they are implicit in the dependencies (they need to be compatible with the schema), but we keep them for clarity. 

\noindent
$\C_\post^*$: The postcondition is the constraint
$\patients(\fid:x,\pid:y,\age:z) \wedge \locvisits(\fid:x,\pid:y,\age:w) \rightarrow z = w.$
Alternatively, if we know that 
all the $(\fid,\pid)$ pairs from \patients are in \locvisits (which can be required with a precondition), 
we can specify the same postcondition via  
$\locvisits(\fid:x,\pid:y, \age:w) \rightarrow 
\patients(\fid:x,\pid:y, \age:w)$.

\noindent
$\Q_\safe^*$:  Same as before, no guarantees are needed. 

As desired, in all the outcomes of $P^*$ the value of the \age attribute in \locvisits is the same as in the corresponding 
tuple (if it exists) in \patients with the same \fid and \pid values. But then again, the procedure might modify the schema of some relations, or might even create auxiliary relations in the database in the process. 
What we gain is that this procedure will work regardless of the shape of relations \locvisits and \patients, as long 
as the schemas satisfy the compatibility and structure constraints. 
\end{example}
In the above example we used a known auxiliary relation to clean the values of \age in \locvisits. Alternatively, we could define a  more 
general procedure that would, for instance, only remove nulls from \locvisits, without controlling which values end up replacing these nulls. 
In order to state this procedure, let us augment the language of tgds with an auxiliary predicate $C$ (for \emph{constant}) with a single attribute \textit{val}, which is to take the role of the \texttt{NOT NULL} constraint in SQL: It is true only for the non-null values in $D$. 

\begin{example}
Let us now define a procedure $\hat P = (\hat \Scope, \hat \Cpre, \hat \Cpost,  \hat \Q_\safe)$ that simply replaces all null values of the attribute \age in relation \locvisits with non-null values. 

\noindent
$\hat \Scope$: The scope is again $\locvisits[\age]$, just as in the previous example.

\noindent
$\hat \C_\pre$: In contrast with the procedure $P^*$ of the previous example, this procedure is light on preconditions: We only need 
relation \locvisits to be present and have the \age attribute. %Note that we do not need to define those, since this is already given by the scope of $P$. 

\noindent
$\hat \C_\post$: The postcondition states that the attribute \age of \locvisits no longer has null values. To express this, we use the auxiliary predicate $C$, and 
define the constraint $\locvisits(\age: x) \rightarrow C(\val : x )$, which states that no value in the attribute \age in \locvisits is null. 

\noindent
$\hat \Q_\safe$:  Since we only want to eliminate null values, we also include the safety query $\locvisits(age : x, \fid: y, \pid: z) \wedge C(\val : x)$, so that we preserve all the 
non-null values of \age (with the correct \fid and \pid attached to these ages). 
\end{example}

\section{Basic Computational Tasks for Relational Procedures}
\label{ref-basics}

In this section we study some formal properties of our procedure-centric framework, with the intent of showing how the proposed  framework 
can be used as a toolbox for reasoning about sequences of database procedures.  
%Even though the definition of procedures is general enough to allow a wide range of data and/or schema transformations, 
We focus on what we call \emph{relational procedures}, where the sets of pre- and postconditions are given 
by tgds, egds, or structure constraints, and safety queries can be conjunctive or total queries. 
While there clearly are interesting classes of procedures that do not fit into this special case in the proposed framework, we remark that relational procedures 
are general enough to account for a wide range of relational operations on data, including the examples in the previous section.

%\subsection{Initial Definitions}

%Even though the definition of procedures is general enough to allow a wide range of data and/or schema transformations, 
%in this section we only focus on procedures built from well studied relational constraints and formalisms, that we denote as \emph{relational procedures}. 
%Formally, relational procedures are tuples $P = (\Scope,\Cpre,\Cpost,\Q_\safe)$, and such that $\Scope$ is a set of structural constraints, 
%$\Cpre$ and $\Cpost$ a set of structural constraints and/or tgds, egds or total tgds, and $\Q_\safe$ is a query in $CQ_{*,C}$. 
%Note that all procedures used in examples ???????? are in fact relational procedures, and \jr{complete once earlier sections are done}

%Let $\Inst$ be a set of pairs $(I,\Sch)$ of instances and their schemas. The outcome of applying a procedure 
%$P$ to the set $\Inst$ is the set $$\outcome_P(\Inst) = \bigcup_{(I,\Sch) \in \Inst} \outcome_P(I,\Sch).$$

%Let $\Pro_1,\dots,\Pro_n$ be a sequence of procedures, and $I$ an instance over $\Sch$. 
%The outcome $\outcome_{\Pro_1,\dots,\Pro_n}(I,\Sch)$ of applying $\Pro_1,\dots,\Pro_n$ to an instance $I$ is the set 
%$$\outcome_{\Pro_1,\dots,\Pro_n}(I,\Sch) = \outcome_{P_n}(\outcome_{P_{n-1}}(\cdots (\outcome_{P_1}(I,\Sch)) \cdots )).$$

\subsection{Applicability} 

In the proposed  framework we focus on transformations of data sets given by sequences of procedures. 
Because we treat procedures as black boxes, the only description we have of the 
results of these transformations is that they ought to satisfy the output constraints of the procedures. 
In this situation, how can one guarantee that all the procedures will be applicable? 
Suppose that, for instance, we wish to apply procedures $P_1$ and $P_2$ 
to an instance $I$ in sequential order: First $P_1$, then $P_2$. 
The problem is that, since 
output constraints do not fully determine the outcome of $I$ after applying $\Pro_1$, we cannot 
immediately guarantee that this outcome %the outcome of applying $P_1$ over $I$ 
is an instance that satisfies 
the preconditions of $P_2$. 

Given that the set of possible outcomes is in general infinite, our focus is on  
guaranteeing that \emph{any} possible outcome of applying $P_1$ over $I$ will satisfy the preconditions 
of $P_2$. To formalize this intuition, we need to extend the notion of outcome to a set of instances. 
We define the outcome of applying 
a procedure $P$ to a set of instances $\Inst$ as 
$$\outcome_P(\Inst) = \bigcup_{I \in \Inst} \outcome_P(I),$$ 
the union of the outcomes of all the instances in $\Inst$. 
Furthermore, for a sequence $\Pro_1,\dots,\Pro_n$ of procedures we define the outcome of applying $\Pro_1,\dots,\Pro_n$ to an instance 
$I$ as the set 
$$\outcome_{\Pro_1,\dots,\Pro_n}(I) = \outcome_{P_n}(\outcome_{P_{n-1}}(\cdots (\outcome_{P_1}(I)) \cdots )).$$

We can now define the first problem of interest: 
\vspace{-12pt}
\begin{center}
\fbox{
\begin{tabular}{ll}
\multicolumn{2}{l}{\textsc{Applicability:}}\\
\vspace{-12pt}
& \\
\textbf{Input}: & A sequence $\Pro_1,\dots,\Pro_n$ of procedures, a schema $\Sch$; \\
\textbf{Question}: & Is it true that, for any arbitrary instance $I$ over $\Sch$, procedure $P_n$ can \\ 
& be applied  to each instance in the set  $\outcome_{\Pro_1,\dots,\Pro_{n-1}}(I)$?
\end{tabular}
}
\end{center}
It is not difficult to show that the \textsc{Applicability} problem is intimately related to the 
problem of implication of dependencies, defined as follows: Given a set $\Sigma$ of dependencies  
and an additional dependency $\lambda$, is it true that all the instances that satisfy $\Sigma$ also satisfy $\lambda$ ---  
that is, does $\Sigma$ imply $\lambda$?  
Indeed, consider a class $\mathcal L$ of constraints for which the implication problem is known 
to be undecidable. 
Then one can easily show that the applicability problem is also 
%The applicability problem is clearly 
undecidable for those procedures whose pre- and postconditions 
are in $\mathcal L$: Intuitively, if we let $\Pro_1$ be a procedure with a set $\Sigma$ 
of postconditions, and $\Pro_2$ a procedure with a dependency $\lambda$ as a precondition, then it is not difficult 
to come up with proper scopes and safety queries so that 
$\outcome_{\Pro_1}(I)$ satisfies $\lambda$ for every instance $I$ over schema $\Sch$ if and only if 
$\lambda$ is true in all instances that satisfy $\Sigma$. However, as the following proposition shows, the applicability problem 
is undecidable already for very simple procedures, and even when we consider the \emph{data-complexity} 
view of the problem, that is when we fix the procedure and take a particular input instance. 

\begin{proposition}
\label{prop-ap-und}
There are fixed procedures $\Pro_1$ and $\Pro_2$ that only use tgds for their constraints, and such that the following problem 
is undecidable. Given an instance $I$ over schema $\Sch$, is it true that all the instances in 
$\outcome_{\Pro_1}(I)$ satisfy the preconditions of $\Pro_2$? 
\end{proposition}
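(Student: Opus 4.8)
\medskip
\noindent\emph{Proof strategy.}
The plan is to reduce from the halting problem for a fixed universal Turing machine $\mathcal{U}$: from an input word $w$ I would compute, in polynomial time, an instance $I_w$ over a fixed schema $\Sch$ such that every instance in $\outcome_{\Pro_1}(I_w)$ satisfies the precondition of $\Pro_2$ if and only if $\mathcal{U}$ halts on $w$. Since $\Pro_1$ and $\Pro_2$ will not depend on $w$ and the halting problem is undecidable, this proves the statement. The schema $\Sch$ has two groups of relations: a group of \emph{input relations} encoding $w$ as a finite linked list of tape cells (relations such as $\mathrm{First}$, $\mathrm{Next}$, $\mathrm{Symbol}$) together with the transition table of $\mathcal{U}$ (whose content is the same for every $w$) --- these are the only non-empty relations of $I_w$; and a group of \emph{working relations}, empty in $I_w$, intended to hold a step-by-step trace of the run of $\mathcal{U}$ on $w$ (configuration identifiers and a successor relation on them, head position, current state, tape contents, and a distinguished nullary relation $\mathrm{Halt}$).

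\medskip
\noindent\emph{The procedure $\Pro_1$.}
I would define $\Pro_1$ so that, up to extension of the schema, $\outcome_{\Pro_1}(I_w)$ is exactly the set of instances that agree with $I_w$ on the input relations and satisfy a fixed set $\Sigma$ of tgds. Concretely, take $\Cpost$ to be $\Sigma$; let $\Scope$ consist of $R[*]$ for every working relation $R$ and mention no input relation, so that by condition~(3) of the semantics of procedures the (non-empty) input relations are frozen across all outcomes; let $\Q_\safe$ be empty; and let $\Cpre$ be empty (or a tautological tgd). The set $\Sigma$ is the heart of the construction: its tgds build the initial configuration from the list encoding $w$ and, using existentially quantified variables to mint fresh configuration identifiers and fresh tape cells, generate the successor of any configuration already present according to the transition table; one further tgd has body ``the configuration is in a halting state'' and head $\mathrm{Halt}()$. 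The chase $I^*_w$ of $I_w$ with $\Sigma$ is then a faithful trace of the run of $\mathcal{U}$ on $w$, so that $\mathrm{Halt}() \in I^*_w$ iff $\mathcal{U}$ halts on $w$. When $\mathcal{U}$ does not halt, $I^*_w$ is infinite; here I rely on the framework placing no finiteness requirement on instances, so that $I^*_w$ is itself a legitimate possible outcome of $\Pro_1$.

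\medskip
\noindent\emph{The procedure $\Pro_2$ and correctness.}
Let $\Pro_2$ be any fixed procedure whose precondition is the tgd $\lambda := \mathrm{First}(\mathit{pos}:x) \rightarrow \mathrm{Halt}()$; the remaining components of $\Pro_2$ do not affect the question and may be chosen trivial. If $\mathcal{U}$ halts on $w$, then $\mathrm{Halt}() \in I^*_w$, and since every $I' \in \outcome_{\Pro_1}(I_w)$ satisfies $\Sigma$ and contains $I_w$, universality of the chase yields a homomorphism $I^*_w \to I'$ that fixes $I_w$; as $\mathrm{Halt}$ is nullary this forces $\mathrm{Halt}() \in I'$, hence $I' \models \lambda$, so \emph{every} outcome satisfies the precondition of $\Pro_2$. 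If $\mathcal{U}$ does not halt on $w$, then $\mathrm{Halt}() \notin I^*_w$ while $I^*_w$ still contains a $\mathrm{First}$-tuple, so the single outcome $I^*_w$ violates $\lambda$, and not every outcome satisfies the precondition of $\Pro_2$. The two conditions are thus equivalent, as required.

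\medskip
\noindent\emph{Main obstacle.}
The delicate part, and the one where I expect the real work to lie, is the design of $\Sigma$. One must ensure that the chase of $I_w$ is a \emph{faithful} simulation --- every transition is taken, head, state, and tape are updated consistently, and no spurious halting configuration is ever produced --- and one must keep the argument robust to the open-world character of outcomes. The latter is exactly why $\mathrm{Halt}$ may appear only in the \emph{head} of the relevant tgd and of $\lambda$: then ``$\mathrm{Halt}$ holds in \emph{all} outcomes'' coincides with ``$\mathrm{Halt}$ is \emph{forced}'', i.e.\ with ``$\mathrm{Halt}() \in I^*_w$'', whereas trying to detect non-halting by placing $\mathrm{Halt}$ in the \emph{body} of $\lambda$ would let an outcome add $\mathrm{Halt}()$ gratuitously and so break the reduction. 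Since simulating a machine by the chase of a fixed set of tgds is by now a standard technique, I would present $\Sigma$ at the level of its transition rules and suppress the routine bookkeeping.
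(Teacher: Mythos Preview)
Your reduction is the natural ``simulate a Turing machine by the chase'' argument, and it is internally coherent \emph{provided} outcomes are allowed to be infinite instances. You flag this yourself. The trouble is that this is exactly the assumption the paper does \emph{not} make: its own proof reduces from the embedding problem for \emph{finite} semigroups (which is undecidable by \cite{KPT06}), a choice that only makes sense because the $(\Longleftarrow)$ direction of that argument reads off a finite semigroup from a finite witnessing outcome $J$; and the paper's later definition of naive instances explicitly requires finite relations. In the intended finite-instance setting, the non-halting direction of your reduction breaks: you need \emph{some} outcome of $\Pro_1$ on $I_w$ that violates $\lambda$, but your only candidate is the infinite chase $I^*_w$. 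Over finite instances, either $\outcome_{\Pro_1}(I_w)$ is empty (so the answer is vacuously ``yes''), or every finite $I' \supseteq I_w$ with $I' \models \Sigma$ might well contain $\mathrm{Halt}()$ even though $\mathcal{U}$ does not halt --- tgds lack the finite model property, and nothing in your $\Sigma$ prevents configuration identifiers from collapsing into a cycle that accidentally reaches a halting state. Either way the equivalence fails.

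The paper sidesteps this by picking a source problem whose witnesses are finite on both sides. Concretely, it encodes a partial associative function $g$ on a finite set $A$ into an instance $I_{\mathbf A}$, lets $\Cpost$ of $\Pro_1$ force the outcome to encode a total associative function on (a superset of) $A$ together with an equivalence relation that behaves like equality, and lets the precondition of $\Pro_2$ be a single tgd $R(x)\rightarrow F(x)$ that fails precisely when no two distinct elements of $A$ have been merged. Then ``some outcome violates the precondition'' is equivalent to ``$\mathbf A$ embeds in a finite semigroup'': the embedding gives a finite outcome directly, and conversely any finite outcome yields a finite semigroup by quotienting. Your Turing-machine route would recover only the (easier) unrestricted-instance version of the result; for the finite case you would need a different source problem, and the semigroup-embedding trick is the standard one.
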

%\jr{Write proof, follows from results in \cite{APR13,KPT06}, see if something can be done with other combinations of dependencies.}

The proof of Proposition \ref{prop-ap-und} is by reduction from the embedding problem for finite semigroups, 
shown to be undecidable in \cite{KPT06}. %, and it is itself an adaptation of the proof of Theorem 7.2 in \cite{APR13}. 

There are several lines of work aiming to identify practical classes of constraints for which the implication problem 
is decidable, and all that work can be applied in our framework. 
However, we  opt for a stronger restriction: Since all of our examples so far use only structure constraints 
as preconditions, for the remainder of the paper we focus on procedures whose preconditions comprise 
structure constraints. In this setting, we have the following result. 

\begin{proposition}
\label{ref-rep-decidable}
\textsc{Applicability} is in polynomial time for sequences of relational procedures whose preconditions contain only structure constraints.
\end{proposition}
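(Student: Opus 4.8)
\subsection*{Proof proposal}

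The plan is to exploit the fact that, once all preconditions are structure constraints, whether a procedure can be applied to an instance --- and, more to the point, over which schemas its outcomes may range --- is a purely \emph{syntactic} condition on schemas, with no instance data involved. Concretely, for a relational procedure $P = (\Scope,\Cpre,\Cpost,\Q_\safe)$ and a schema $\Sch$, I would first record two observations. (i) ``$P$ is applicable to any instance over $\Sch$'' holds iff $\Sch \models \Cpre$ and every query in $\Q_\safe$ is compatible with $\Sch$ (the remaining compatibility requirement, on $Q_{\Sch \setminus \Scope}$, holds automatically since that query is built from $\Sch$); this is \emph{upward closed} in $\Sch$ --- adding relations or attributes never breaks it --- and has a least satisfying schema $X_P$, obtained in linear time by collecting the relations and attributes mentioned in $\Cpre$ and in $\Q_\safe$. (ii) When $\Sch$ is $P$-ready, there is a least schema $M(\Sch,P)$ that can occur as $\schema(I')$ for an outcome $I'$ of $P$ over $(I,\Sch)$: it is the union of the relations and attributes appearing in the query $Q_{\Sch\setminus\Scope}$ (i.e.\ the part of $\Sch$ not discarded by the scope), the relations/attributes mentioned in $\Q_\safe$, and the relations/attributes mentioned in $\Cpost$ (both its structure constraints and the atoms of its tgds/egds). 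Moreover \emph{every} schema $\Sch'' \supseteq M(\Sch,P)$ is realized: the empty instance over $\Sch$ together with the empty instance over $\Sch''$ witnesses conditions (1)--(4) of the outcome definition (the tgd/egd postconditions hold vacuously, the preserved-query conditions reduce to $\emptyset \subseteq \emptyset$, and the compatibility requirements are exactly what forces $M(\Sch,P)$ into $\Sch''$). Finally, $\Sch \mapsto M(\Sch,P)$ is monotone in $\Sch$.

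Using these facts, I would prove by induction on $i$ that the set $\mathcal{S}_i$ of schemas arising as $\schema(J)$ for $J \in \outcome_{\Pro_1,\dots,\Pro_i}(I)$, ranging over all instances $I$ of $\Sch$, is exactly the upward closure of a single schema $\mathsf{cur}_i$ computed by the recurrence $\mathsf{cur}_0 = \Sch$ and $\mathsf{cur}_i = M(\mathsf{cur}_{i-1}\cup X_{\Pro_i},\,\Pro_i)$ --- valid provided $\Sch$ is $\Pro_1$-ready (if $\Sch$ is not $\Pro_1$-ready, then $\outcome_{\Pro_1,\dots,\Pro_{n-1}}(I) = \emptyset$ for every $I$ and the answer to \textsc{Applicability} is trivially ``yes''; also the case $n=1$, where the question is simply whether $\Sch$ itself is $\Pro_1$-ready, is handled directly). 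The ``lower bound'' half of the induction ($\mathcal{S}_i \subseteq {\uparrow}\mathsf{cur}_i$) uses monotonicity of $M(\cdot,\Pro_i)$ and the fact that a schema of $\mathcal{S}_{i-1}$ to which $\Pro_i$ applies must contain both $\mathsf{cur}_{i-1}$ and $X_{\Pro_i}$. The ``realizability'' half (${\uparrow}\mathsf{cur}_i \subseteq \mathcal{S}_i$) propagates empty instances from the empty instance over $\Sch$, using the witness in (ii) at every step and choosing the intermediate schemas generously so as to reach any target $\Sch'' \supseteq \mathsf{cur}_i$; this also shows $\mathcal{S}_i$ never becomes empty once $\mathcal{S}_1 \neq \emptyset$. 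Since each $\mathsf{cur}_i$ is obtained from $\mathsf{cur}_{i-1}$ by adding at most the relations/attributes occurring in $\Pro_i$, the $\mathsf{cur}_i$ stay polynomially bounded and are all computed in polynomial total time.

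To conclude: by definition the answer to \textsc{Applicability} is ``yes'' iff every schema in $\mathcal{S}_{n-1}$ makes $\Pro_n$ applicable. By observation (i) this latter condition is upward closed, and $\mathcal{S}_{n-1} = {\uparrow}\mathsf{cur}_{n-1}$, so it holds for all of $\mathcal{S}_{n-1}$ iff it holds for $\mathsf{cur}_{n-1}$ itself --- that is, iff the single syntactic check $\mathsf{cur}_{n-1} \models \Cpre^{(n)}$ and ``$\Q_\safe^{(n)}$ compatible with $\mathsf{cur}_{n-1}$'' succeeds. Every step is polynomial, which gives the claim; note that, in contrast with the general situation, the implication problem for dependencies is entirely sidestepped here because the relevant postconditions of $\Pro_1,\dots,\Pro_{n-1}$ never need to be reasoned with beyond the relations and attributes they mention.

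The step I expect to be the main obstacle is the inductive characterization $\mathcal{S}_i = {\uparrow}\mathsf{cur}_i$, and within it the realizability direction: one must check carefully that the empty-instance chain genuinely satisfies conditions (1)--(4) of the outcome definition at every step, and, dually, that $M(\cdot,\Pro_i)$ really is a \emph{minimum} over valid outcome schemas --- i.e.\ that the compatibility requirements on $Q_{\Sch\setminus\Scope}$ and on the safe queries force precisely the relations/attributes collected into $M(\cdot,\Pro_i)$, no more and no less, so that the open-world behaviour of outcome schemas is matched exactly by upward closure. A secondary subtlety is pinning down the intended reading of the clause ``compatible with both $\Sch$ and $\Sch'$'' in the definition of applicability; under any reasonable reading ``$\Pro_n$ can be applied to $J$'' remains a condition on $\schema(J)$ alone, which is all the argument uses.
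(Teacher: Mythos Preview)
Your proposal is correct and follows essentially the same route as the paper: both compute, iteratively along the sequence, a minimal schema that every outcome must extend, and then reduce the final applicability check on $P_n$ to a single syntactic test against that minimum. The paper packages this as an explicit algorithm $A(P,\Sch)$ that collects the relations and attributes forced by $Q_{\Sch\setminus\Scope}$, by the safe queries, and by $\Cpost$ into a schema $\Sch_{\min}$, and then iterates $\Sch_i = A(P_i,\Sch_{i-1})$; your $M(\Sch,P)$ and the recurrence on $\mathsf{cur}_i$ are doing the same computation.

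Two differences are worth recording. First, you state the invariant explicitly as $\mathcal{S}_i = {\uparrow}\mathsf{cur}_i$ and witness the realizability direction via chains of empty instances, whereas the paper argues only the lower-bound direction (every outcome schema extends $\Sch_{\min}$) and leaves the matching direction implicit. Your empty-instance device keeps the whole argument at the schema level and, as a side effect, absorbs the interaction with total safe queries that the paper treats through a separate ``failure'' branch of $A$. Second, your recurrence uses $\mathsf{cur}_{i-1}\cup X_{P_i}$ rather than just $\mathsf{cur}_{i-1}$. This is the right thing to do: among the schemas in $\mathcal{S}_{i-1}$, only those containing $X_{P_i}$ make $P_i$ applicable, and the minimal such schema need not be $\mathsf{cur}_{i-1}$ itself. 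The paper's plain iteration $\Sch_i = A(P_i,\Sch_{i-1})$, together with its blanket rule ``if some intermediate $A(P_i,\Sch_{i-1})$ fails, output negative,'' does not separate these cases, so your formulation is a tighter instantiation of the same underlying strategy. Your closing remark about the dangling $\Sch'$ in the paper's applicability clause is also on point; any reasonable reading makes applicability a schema-only condition, which is all either argument needs.
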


\subsection{Representing the Outcome Set}

We have seen that deciding properties about the outcome set of a sequence of procedures (or even of a single procedure)  
can be a complicated task. One of the reasons is that procedures do not completely define their outcomes: We 
do not really know what will be the outcome of applying a sequence $\Pro_1,\dots,\Pro_n$ of procedures to an 
instance $I$, we just know it will be an instance from the collection $\outcome_{\Pro_1,\dots,\Pro_n}(I)$. 
This collection may well be of infinite size, but can it still be represented finitely? The database-theory community 
has developed multiple formalisms for representing sets of database instances, from notions of tables with incomplete information \cite{IL84} to 
knowledge bases (see, e.g., \cite{BO15}). In this section we study the possibility of 
representing outcomes of (sequences of) procedures by means of incomplete tables, along the lines of \cite{IL84}. We also discuss some negative results 
about representing outcomes of general procedures in systems such as knowledge bases, but leave a more 
detailed study in this respect for future work. 

The first observation we make is that allowing arbitrary tgds in procedures introduces 
problems with management of sequences of procedures. Essentially, 
any means of representing the outcome of a sequence of procedures 
needs to be so powerful that even deciding whether it is nonempty is going to be undecidable. 

\begin{proposition}
\label{prop-rep-undec}
There is a fixed procedure $\Pro$ that does not use preconditions and 
only use tgds in their postconditions, such that the following problem is 
undecidable: Given an instance $I$, is the set 
$\outcome_{\Pro_1}(I)$ nonempty? 
\end{proposition}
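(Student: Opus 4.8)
The plan is to reduce from the embedding problem for finite semigroups --- the same undecidable source (cf.\ \cite{KPT06}) used for Proposition~\ref{prop-ap-und} --- by exhibiting a single fixed procedure $\Pro$ (with no preconditions and only tgds as postconditions) such that, for an instance $I$ encoding a finite partial groupoid $(A,\circ)$, the set $\outcome_\Pro(I)$ is nonempty if and only if $(A,\circ)$ embeds into a finite semigroup. The reduction hinges on the fact that, although postconditions are restricted to tgds, one can still recover the power of egds and denial constraints through the \emph{scope}: any relation left outside $\Scope$ is pinned, in every possible outcome, to its value in $I$ (this is forced by condition~(3) of the definition of possible outcome, since $Q_{\Sch\setminus\Scope}$ contains a conjunct for each out-of-scope relation; the pinning is clean when each out-of-scope relation is nonempty in $I$, which will be arranged --- the empty groupoid being a trivial yes-instance). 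Thus a tgd whose right-hand side lands in a frozen relation is, in effect, an inclusion into a fixed finite table.

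Concretely I would use a fixed schema with relations $\textit{Carrier}$, $\textit{Mult}$, $\textit{Eq}$, $\textit{Groupoid}$, $\textit{OrigElem}$, $\textit{OrigEq}$. In $I$ the domain is exactly $A$; $\textit{Groupoid}^I$ and $\textit{OrigElem}^I$ store $\circ$ and $A$; $\textit{Mult}^I$ is the graph of $\circ$; and $\textit{Eq}^I$, $\textit{OrigEq}^I$ are the diagonal on $A$. The postcondition of $\Pro$ is a fixed set of tgds stating: $\textit{Mult}$ extends $\textit{Groupoid}$; $\textit{Groupoid}$, $\textit{Mult}$ and $\textit{OrigElem}$ feed their elements into $\textit{Carrier}$; totality, $\textit{Carrier}(x)\wedge\textit{Carrier}(y)\to\exists z\,\textit{Mult}(x,y,z)$; $\textit{Eq}$ is a congruence (reflexivity on $\textit{Carrier}$, symmetry, transitivity, left/right compatibility with $\textit{Mult}$); functionality modulo $\textit{Eq}$, $\textit{Mult}(x,y,z)\wedge\textit{Mult}(x,y,z')\to\textit{Eq}(z,z')$; associativity modulo $\textit{Eq}$; and an injectivity tgd $\textit{OrigElem}(x)\wedge\textit{OrigElem}(x')\wedge\textit{Eq}(x,x')\to\textit{OrigEq}(x,x')$. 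The scope is $\{\textit{Carrier}[*],\textit{Mult}[*],\textit{Eq}[*],\textit{Groupoid}[*]\}$, so $\textit{OrigElem}$ and $\textit{OrigEq}$ are frozen to $A$ and to the diagonal on $A$; $\Q_\safe$ contains the safe query $\textit{Groupoid}(x,y,z)$, so the input partial groupoid is preserved. Since $\textit{OrigEq}$ is pinned to $\{(a,a):a\in A\}$, the injectivity tgd forces $\textit{Eq}$ to be the identity on $A$, while on the rest of $\textit{Carrier}$ it may be any congruence.

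For correctness: given a possible outcome $I'$ (which must be finite), the quotient $\textit{Carrier}^{I'}/\textit{Eq}^{I'}$ with the operation read off $\textit{Mult}^{I'}$ is well defined (functionality and congruence tgds), total (totality tgd), associative (associativity tgd), finite (as $I'$ is finite), and contains an embedded copy of $(A,\circ)$: the map $a\mapsto[a]$ is injective by the injectivity tgd and respects $\circ$ because $\textit{Groupoid}^I\subseteq\textit{Groupoid}^{I'}\subseteq\textit{Mult}^{I'}$. Conversely, any finite semigroup $(B,*)$ containing $(A,\circ)$ yields an outcome: put $\textit{Carrier}^{I'}=B$, $\textit{Mult}^{I'}$ the graph of $*$, $\textit{Eq}^{I'}$ and $\textit{OrigEq}^{I'}$ (resp.) the diagonals on $B$ and on $A$, and $\textit{OrigElem}^{I'}=A$; this satisfies all postcondition tgds, agrees with $I$ outside $\Scope$ (the frozen relations keep their $I$-values), and preserves the safe query, so it lies in $\outcome_\Pro(I)$. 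Hence $\outcome_\Pro(I)\neq\emptyset$ iff $(A,\circ)$ embeds into a finite semigroup.

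The main obstacle is the egd-simulation under an open domain: freezing a relation to the diagonal of $\text{dom}(I)$ only constrains the original elements, so one cannot simply turn the functionality and associativity egds of the semigroup axioms into tgds over a frozen $\textit{Eq}$, since the newly invented product elements would escape such a constraint. The device above sidesteps this by never requiring $\textit{Eq}$ to be the true diagonal --- it is an arbitrary tgd-maintained congruence on an unbounded carrier, so the built structure is genuinely a quotient --- and using the separate frozen relation $\textit{OrigEq}$ only to keep the \emph{finitely many} input elements distinct, which is exactly what injectivity of the embedding needs. A secondary point to handle carefully is that possible outcomes are finite instances while the chase of these tgds need not terminate, so the existence of a finite outcome must be argued directly from a finite semigroup, as above, rather than via the chase; and one should check that the open-world treatment of schemas (outcomes may add relations or attributes) is harmless, which is routine since added material is unconstrained and frozen relations remain frozen on their original attributes.
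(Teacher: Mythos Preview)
Your proof is correct and follows essentially the same route as the paper: reduce from the embedding problem for finite semigroups by axiomatising the semigroup laws with an explicit congruence relation (your $\textit{Eq}$, the paper's $E$) in lieu of equality, and force injectivity on the original elements by landing a tgd in a relation that is frozen because it lies outside $\Scope$. The only difference is cosmetic---the paper encodes the injectivity check indirectly via index and flag relations $C,N,F,R$ together with the extra tgd $F(x)\to R(x)$, whereas you land directly in a frozen diagonal $\textit{OrigEq}$---and your explicit handling of the non-emptiness condition for the out-of-scope conjunctive pinning is a nice touch.
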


The reason we view Proposition \ref{prop-rep-undec} as a negative result is because it rules out the 
possibility of using any ``reasonable'' representation system. Indeed, one would expect that deciding non-emptiness 
should be decidable in any reasonable way of representing infinite sets of instances. 
Proposition \ref{prop-rep-undec} is probably not surprising, since reasoning about tgds in general is 
known to be a hard problem. Perhaps more interestingly, in our case one can show that the above fact remains true even if 
one allows only \emph{acyclic} tgds, which are arguably one of the most well-behaved classes of dependencies in the literature. 
The idea behind the proof is that one can simulate cyclic tgds via procedures with only acyclic tgds and no scope. 

\begin{example}
\label{exa-scope}
Consider two procedures $\Pro_1$ and $\Pro_2$, where $P_1 = (\Scope^1, \Cpre^1,\Cpost^1, \Q_\safe^1)$, with 
$\Scope^1 = \{R[*],T[*]\}$, $\Cpre^1 = \emptyset$, $\Cpost^1 = \{R(A:x) \rightarrow T(A:x)\}$ and  $\Q_\safe^1 = R(A:x) \wedge T(A:x)$;  
$\Pro_2$ has empty scope, preconditions, and safety queries, and has postconditions $\{T(A:x) \rightarrow R(A:x) \}$. 
Let $I$ be an instance over the schema with relations $R$ and $T$, both with attribute $A$. By definition, 
the set of possible outcomes of $\Pro_1$ over $I$ are all instances $J$ that extend $I$ and satisfy 
the dependency $R(A:x) \rightarrow T(A:x)$. However, the set $\outcome_{P_1,P_2}(I)$ corresponds to all instances 
$I'$ that extend $I$ and satisfy both 
%all instances in the set $\outcome_{P_1,P_2}(I)$ actually satisfy both 
dependencies $R(A:x) \rightarrow T(A:x)$ and $T(A:x) \rightarrow R(A:x)$ (In other words, 
we can use $\Pro_2$  to \emph{filter out} all those instances $J$ where $T^J  \not\subseteq R^J$). Intuitively, this happens 
because the outcome set of applying $\Pro_2$ over any instance not satisfying $T(A:x) \rightarrow R(A:x)$ is empty, and 
we define $\outcome_{P_1,P_2}(I)$ as the union of each set $\outcome_{P_2}(K)$, for each instance $K \in \outcome_{P_1}(I)$. 
\end{example} 
By applying the idea of this example to the proof of Proposition \ref{prop-rep-undec}, we show: 
\begin{proposition}
Proposition \ref{prop-rep-undec} holds for procedures $\Pro_1$ and $\Pro_2$ that only use acyclic tgds. 
 \end{proposition}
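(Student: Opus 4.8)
The plan is to reduce from the cyclic-tgd version of Proposition~\ref{prop-rep-undec}, replacing the single procedure that uses cyclic tgds by a sequence $\Pro_1,\Pro_2$ in which each procedure uses only acyclic tgds; the mechanism is exactly that of Example~\ref{exa-scope}, applied uniformly to the whole postcondition. Fix the procedure $\Pro=(\Scope,\emptyset,\Sigma,\Q_\safe)$ witnessing Proposition~\ref{prop-rep-undec}, where $\Sigma$ is a finite set of (possibly cyclic) tgds and non-emptiness of $\outcome_\Pro(I)$ is undecidable. For every relation $R$ mentioned in $\Sigma$ introduce a fresh relation $\hat R$ with the same attributes, and for an atom $\alpha=R(A_1:x_1,\dots)$ write $\hat\alpha=\hat R(A_1:x_1,\dots)$. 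Let $\Sigma_1$ be obtained from $\Sigma$ by replacing every right-hand-side atom $\alpha$ of every tgd by $\hat\alpha$ (left-hand sides untouched), and let $\Sigma_2=\{\,\hat R(\bar x)\rightarrow R(\bar x)\mid R\text{ mentioned in }\Sigma\,\}$. Define $\Pro_1=(\Scope\cup\{\hat R[*]\mid R\text{ mentioned in }\Sigma\},\ \emptyset,\ \Sigma_1,\ \Q_\safe)$ and $\Pro_2=(\emptyset,\ \emptyset,\ \Sigma_2,\ \emptyset)$.

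The first point to verify is that $\Pro_1$ and $\Pro_2$ use only acyclic tgds (and no egds, no preconditions). In the dependency graph of $\Sigma_1$ every edge runs from an un-hatted relation (left-hand side) to a hatted relation (right-hand side), so $\Sigma_1$ is acyclic; in the dependency graph of $\Sigma_2$ every edge runs from a hatted relation to an un-hatted one, so $\Sigma_2$ is acyclic. Note that $\Sigma_1\cup\Sigma_2$ is in general cyclic, but this is irrelevant, since acyclicity is required of each procedure separately.

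Next I would establish the ``filter'' behaviour of $\Pro_2$, exactly as in Example~\ref{exa-scope}: because $\Pro_2$ has empty scope and no safe queries, applying it cannot alter the content of any relation of the input (the query $Q_{\Sch\setminus\emptyset}$ retrieves the whole database and must be preserved) nor delete any relation or attribute, and the only freedom is the addition of irrelevant fresh relations/attributes, which cannot help satisfy $\Sigma_2$. Hence for any instance $K$ whose schema already contains all the relations mentioned in $\Sigma_2$, we have $\outcome_{\Pro_2}(K)\neq\emptyset$ iff $K\models\Sigma_2$, and therefore $\outcome_{\Pro_1,\Pro_2}(I)\neq\emptyset$ iff there is $K\in\outcome_{\Pro_1}(I)$ with $K\models\Sigma_2$. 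It then remains to prove the correspondence: such a $K$ exists iff $\outcome_\Pro(I)\neq\emptyset$. For the ``if'' direction, given $J\in\outcome_\Pro(I)$ I would let $K$ be $J$ augmented by $\hat R^K:=R^J$ for each $R$; then $K\models\Sigma_2$ trivially, $K\models\Sigma_1$ because every left-hand match in $K$ is a left-hand match of the corresponding tgd of $\Sigma$ in $J$ and $J\models\Sigma$ supplies (with the same witnesses for existential variables) the hatted head, and $K$ still meets the scope- and safety-conditions of $\Pro_1$ since it agrees with $J$ on all un-hatted relations and the fresh relations $\hat R$ lie outside $\Scope$ and are not mentioned by $\Q_\safe$. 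For the ``only if'' direction, given $K\in\outcome_{\Pro_1}(I)$ with $K\models\Sigma_2$, I would take $J$ to be the reduct of $K$ to the un-hatted relations: $J$ extends $I$ (since $K$ does) and $J\models\Sigma$, because any left-hand match of a tgd $\tau\in\Sigma$ in $J$ is one in $K$, whence $K\models\Sigma_1$ yields the hatted head and $K\models\Sigma_2$ copies its tuples down to the un-hatted head with the same values; moreover $J$ inherits the scope- and safety-conditions of $\Pro$ because $Q_{\Sch\setminus\Scope}$ and $\Q_\safe$ mention only un-hatted relations. Combining the equivalences, $\outcome_{\Pro_1,\Pro_2}(I)\neq\emptyset$ iff $\outcome_\Pro(I)\neq\emptyset$, via the identity reduction on $I$, so the former is undecidable.

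I expect the main obstacle to be purely bookkeeping: checking that inserting the fresh copy relations $\hat R$ does not disturb membership of $K$ in $\outcome_{\Pro_1}(I)$ in either direction --- that is, that the scope-complement query $Q_{\Sch\setminus\Scope}$ and the safe queries $\Q_\safe$ behave identically on $J$ and on its $\hat R$-augmentation --- and, on the $\Pro_2$ side, that the ``nothing changes'' analysis of empty-scope procedures from Example~\ref{exa-scope} transfers uniformly to the set $\Sigma_2$ of copy-back tgds. The conceptual content is entirely in Example~\ref{exa-scope}: acyclic tgds split across two procedures can encode a cyclic dependency, with the second procedure acting as a post-hoc filter.
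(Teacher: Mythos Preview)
Your proposal is correct and takes essentially the same approach the paper intends: the paper explicitly says the result follows ``by applying the idea of [Example~\ref{exa-scope}] to the proof of Proposition~\ref{prop-rep-undec},'' and your hatted-copy construction with an empty-scope filter procedure is exactly that idea carried out in full. The only cosmetic difference is that you reduce abstractly from the single-procedure version, whereas the paper (judging from the analogous spelled-out proof of Proposition~\ref{prop-rep-undec-nothingworks}) would likely re-instantiate the semigroup-embedding encoding directly with dummy relations $G^d,E^d$ in place of your uniform $\hat R$; the two are equivalent, and your generic reduction is arguably cleaner.
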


Since acyclic tgds do not help, we may consider restrictions to full tgds. %which are tgds without existential variables on the right-hand side. 
Still, even this is not enough for making the non-emptiness problem decidable, once one adds the possibility of having schema constraints in procedures. 

\begin{proposition}
\label{prop-rep-undec-nothingworks}
There exists a sequence $\Pro_1,\Pro_2,\Pro_3$ of procedures such that the following problem is 
undecidable:  Given an instance $I$, is the set $\outcome_{\Pro_1,\Pro_2,\Pro_3}(I)$ nonempty? 
Here, all the procedures have no preconditions, and have postconditions built using acyclic sets of full tgds and schema constraints (and nothing else). 
\end{proposition}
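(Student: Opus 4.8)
The plan is to reduce from the problem shown undecidable by the strengthening of Proposition~\ref{prop-rep-undec} to acyclic tgds (stated just after Example~\ref{exa-scope}): there are fixed procedures $\Pro_1^0$ and $\Pro_2^0$, using only acyclic tgds, no preconditions, and with $\Pro_2^0$ of empty scope playing the filtering role of Example~\ref{exa-scope}, such that deciding whether $\outcome_{\Pro_1^0,\Pro_2^0}(I)$ is nonempty is undecidable. I would build three procedures $\Pro_1,\Pro_2,\Pro_3$, each without preconditions and with postconditions made of acyclic sets of full tgds and structure constraints only, so that $\outcome_{\Pro_1,\Pro_2,\Pro_3}(I)$ and $\outcome_{\Pro_1^0,\Pro_2^0}(I)$ are simultaneously empty or nonempty.

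The core device is to trade the existential quantifiers of the two given procedures for uncontrolled schema extensions. For a non-full tgd $\phi(\bar x)\to\exists\bar y\,\psi(\bar x,\bar y)$ occurring in one of the two postcondition sets, introduce a fresh auxiliary relation $W$ whose attributes are the frontier variables $\bar x$ together with one fresh ``witness'' attribute per variable of $\bar y$, and replace the tgd by the two \emph{full} tgds $\phi(\bar x)\to W(\bar x)$ (mentioning only the $\bar x$-attributes of $W$) and $W(\bar x,\bar y)\to\psi(\bar x,\bar y)$ (where $\bar y$ is now bound to the witness attributes of $W$). The first collects the needed frontier tuples into $W$; the second reads witnesses out of $W$ into the head. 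Because the values populating a newly added attribute are uncontrolled and we ask only whether \emph{some} outcome exists, the reduction is free to choose those values --- exactly the effect of the existential quantifier --- and keying $W$ on $\bar x$ makes the witness of each $y_j$ consistent across all head atoms mentioning it. The auxiliary relations can be threaded so that the relation graph of the ``forward'' part ($\Pro_1^0$) stays acyclic, while any genuine cycle of the original dependencies remains in the ``feedback'' part ($\Pro_2^0$), which --- as in Example~\ref{exa-scope} --- is \emph{checked}, not chased.

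Concretely: $\Pro_1$ has empty scope and a postcondition of structure constraints only, declaring every auxiliary relation $W$ with its witness attributes (the alter-table idiom of Example~\ref{exa-alter}). $\Pro_2$ has scope the right-hand-side relations of the forward dependencies plus the auxiliary relations it must fill, and postcondition the acyclic set of full tgds obtained by rewriting $\Pro_1^0$'s postcondition, together with, for each feedback tgd $\theta(\bar u)\to\exists\bar v\,\eta(\bar u,\bar v)$, the full tgd $\theta(\bar u)\to W_\theta(\bar u)$ that collects its frontier (plus the same structure constraints, for compatibility with the extended schema). $\Pro_3$ has empty scope and postcondition the acyclic set of full tgds $W_\theta(\bar u,\bar v)\to\eta(\bar u,\bar v)$, one per feedback tgd; since $\Pro_3$ has empty scope and refers only to relations and attributes already installed by $\Pro_1$, it cannot change anything and hence acts as a pure filter. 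One then shows $\outcome_{\Pro_1,\Pro_2,\Pro_3}(I)\ne\emptyset$ iff witness values can be chosen so that the full rewriting of both postcondition sets holds in a finite extension of $I$, iff $\outcome_{\Pro_1^0,\Pro_2^0}(I)\ne\emptyset$. The forward direction is essentially immediate: restrict an outcome to the non-auxiliary schema and use that the original dependencies live there and that a satisfied full rewriting entails its parent existential tgd. The backward direction takes a possible outcome of $\Pro_1^0,\Pro_2^0$, reads off witnesses, and supplies an outcome at each step --- with the auxiliary relations taken empty in $\Pro_1$'s outcome, so that $\Pro_2$ fills them with the chosen witnesses and $\Pro_3$'s filter passes.

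The main obstacle is the rewriting together with its correctness proof, not the assembly. One must verify that (i) an existential variable shared among several head atoms receives a single consistent value --- handled by keying each $W$ on the frontier variables; (ii) the global relation graph stays acyclic, with each back edge confined to $\Pro_3$, and in particular that $\Pro_2$'s and $\Pro_3$'s postconditions are \emph{individually} acyclic even though their union simulates a cyclic system; and (iii) an arbitrary possible outcome --- which may carry extra relations, extra attributes, and junk values on the extended schema --- still restricts to a bona fide possible outcome of $\Pro_1^0,\Pro_2^0$, which holds because those procedures never touch the auxiliary schema but must be argued carefully. Counting then gives three procedures as $2+1$: the two procedures of the acyclic case with their tgds made full, plus one schema-installation procedure prepended so that the empty-scope filter $\Pro_3$ can refer to witness attributes that already exist.
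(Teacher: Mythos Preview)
Your proposal is essentially correct and rests on the same two ideas the paper uses: (1) simulate existential quantifiers by uncontrolled values on freshly added attributes, and (2) place the ``backward'' dependencies into an empty-scope procedure so that they act purely as a filter. The route, however, differs from the paper's in several respects.

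\textbf{Reduction source and concreteness.} The paper does not go through the acyclic-tgd strengthening of Proposition~\ref{prop-rep-undec}; it reduces \emph{directly} from the embedding problem for finite semigroups, as in Proposition~\ref{prop-ap-und}. It hand-crafts a set $\Gamma_1$ of full tgds obtained from the semigroup dependencies by routing each cyclic head into a dummy relation ($E^d$, $G^d$), so that $\Gamma_1$ is acyclic; the single genuinely existential dependency (totality of the function) is handled by writing the frontier pairs into a \emph{binary} relation $G^{\text{binary}}$. A separate structure-constraint procedure then adds a third attribute to $G^{\text{binary}}$, and the final empty-scope procedure carries $G^{\text{binary}}$, $E^d$, $G^d$ back into $G$, $E$ and performs the emptiness check $F(x)\to F^{\text{check}}(x)$. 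Your scheme is the generic version of this: one auxiliary $W$ per existential tgd, rather than a single $G^{\text{binary}}$.

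\textbf{Placement of the alter-schema step.} This is the most interesting difference. The paper puts the schema-extension procedure \emph{between} the two tgd procedures: $\Pro_1$ first populates $G^{\text{binary}}$ while it is still binary (so the tgd $D(x)\wedge D(y)\to G^{\text{binary}}(x,y)$ is full both syntactically and semantically), then $\Pro_2$ adds a third attribute whose values are uncontrolled, and only then does $\Pro_3$ read all three attributes. Your ordering declares the witness attributes \emph{before} the tgd procedure runs, so your ``full'' tgd $\phi(\bar x)\to W(\bar x)$ already lives over a relation with extra attributes and relies on the implicit existential of the named perspective. This is legitimate under the paper's syntactic definition of full tgds, but the paper's ordering sidesteps that subtlety entirely.

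\textbf{Minor points.} The ``keying $W$ on $\bar x$'' remark is a red herring: you cannot enforce a key without egds, and you do not need to---for non-emptiness it suffices that \emph{some} outcome chooses one witness tuple per frontier. Also, your acyclicity claim for $\Pro_2$ needs the observation that the feedback bodies $\theta$ use relations that are heads (not bodies) of the forward tgds, so the chain $\phi\text{-rels}\to W\to\psi\text{-rels}\to W_\theta$ stays acyclic; you flag this as an obstacle but the argument is short.
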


%\subsection{Procedures with good representation properties}

Propositions \ref{prop-rep-undec} and \ref{prop-rep-undec-nothingworks} tell us that restricting the classes of dependencies allowed in procedures 
may not be enough to guarantee outcomes that can be represented by reasonable systems. Thus, we now adapt a different strategy: 
We restrict interplay between the postconditions  of procedures, their scope, and their safety queries. Let us define two important classes of procedures that will be used thoroughout this section. 

We say that procedure $P = (\Scope,\Cpre,\Cpost,\Q_\safe)$ is \emph{safe scope} if the following 
holds: 
\begin{itemize}
\item $\Cpost$ is a set of tgds where no relation in the right-hand side of a tgd appears also in the left-hand side of a tgd; 
\item The set $\Scope$ contains exactly one constraint $R[*]$ for each relation $R$ that appears 
on the right-hand side of a tgd in $\Cpost$; and 
\item The query $Q_\safe$ corresponds to $\bigwedge_{R[*] \in \Scope} R$,  that is it binds precisely all the 
relations in the scope of $P$. 
\end{itemize}
(For instance, procedure $P$ in Example 
\ref{exa-proc-1} is essentially a procedure with safe scope, as it can 
easily be transformed into one by slightly altering the safety query.) 

We also define a class of procedures that ensure that certain attributes or relations  
be present in the schema. Formally, we say that a 
procedure $P = (\Scope,\Cpre,\Cpost,\Q_\safe)$ is an \emph{alter-schema procedure} if the following holds: 
\begin{itemize}
\item Both $\Scope$ and $\Q_\safe$ are empty; and 
\item $\Cpost$ is a set of structure constraints. 
\end{itemize} 

Let $\P^{\textit{safe},\textit{alter}}$ be the class of all the procedures that are either safe scope or alter-schema procedures. 
The class $\P^{\textit{safe},\textit{alter}}$ allows for practically-oriented interplay between migration and schema-alteration tasks  
and, as we will see in this section, is more manageable from the point of view of reasoning tasks, in terms of complexity. 
To begin with, deciding the non-emptiness of a sequence of procedures is essentially tractable for $\P^{\textit{safe},\textit{alter}}$: 

\begin{theorem}
\label{theo-rep-safe-scope}
The problem of deciding, given an instance $I$ and a sequence $\Pro_1,\dots,\Pro_n$ of procedures in $\P^{\textit{safe},\textit{alter}}$, 
whether $\outcome_{\Pro_1,\dots,\Pro_n}(I) \neq \emptyset$, is in exponential time, and is polynomial if the number $n$ of procedures is fixed. 
\end{theorem}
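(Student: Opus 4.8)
\medskip
\noindent\textbf{Proof plan.}
The plan is to decide non-emptiness by propagating, procedure by procedure, a finite \emph{symbolic description} of the outcome set, and then to read off the answer from the final description. The whole argument rests on two structural features of $\P^{\textit{safe},\textit{alter}}$. First, an alter-schema procedure never rules out an outcome: if its structure-constraint precondition holds on the current schema, it always admits one, obtained by adding exactly the relations and attributes demanded by $\Cpost$ and filling the fresh columns with distinct nulls. Second --- and this is where the safe-scope restriction does the work --- in a safe-scope procedure no relation occurring in a tgd head occurs in any tgd body, and the head relations are precisely those in $\Scope$; hence every tgd body ranges only over relations frozen by the scope. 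The postcondition therefore behaves like a source-to-target dependency set, a single round of the restricted chase over the frozen part already yields a finite instance satisfying $\Cpost$, and --- since $\Q_\safe$ merely pins down the scope relations --- this canonical chase result is a legal outcome whenever the procedure is applicable. Consequently neither kind of procedure ``filters'': an outcome set becomes empty only when some procedure is inapplicable (a structure-constraint precondition fails on the current schema, or a safety query, or the query $Q_{\Sch\setminus\Scope}$, mentions a relation absent from the schema) or when a schema/arity clash between a procedure's scope, its postcondition tgds and its safety query makes conditions 2--4 of the notion of possible outcome jointly unsatisfiable.

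First I would make the reduction precise: $\outcome_{\Pro_1,\dots,\Pro_n}(I)\neq\emptyset$ iff there is a \emph{run} $I=I_0,I_1,\dots,I_n$ with $I_j\in\outcome_{\Pro_j}(I_{j-1})$ for every $j$, and then show one may restrict to \emph{canonical} runs, where each $I_j$ is obtained from $I_{j-1}$ by performing only the schema change forced by $\Pro_j$ --- widened, wherever conditions 2--4 permit, to include all relations and attributes mentioned anywhere in $\Pro_1,\dots,\Pro_n$ --- followed by the one-round chase of $\Cpost^{j}$ with fresh nulls. Monotonicity of structure constraints in the schema, together with the fact that the canonical $I_j$ maps homomorphically into every equally-widened member of $\outcome_{\Pro_j}(I_{j-1})$, is what lets one argue that if \emph{any} run extends past step $j$ then the canonical one does. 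The key observation is then that a canonical $I_j$ need never be stored explicitly: the semantics of the later procedures can only test the schema of $I_j$, which of its relations are empty (this is all that matters for conditions 3 and 4, because $Q_{\Sch\setminus\Scope}$ and $\Q_\safe$ are conjunctions of atoms with pairwise-disjoint variables, so whether a one-round chase fires a given tgd and whether a safety-query answer set is empty depend only on the emptiness pattern of the relevant relations), and the fact that $\Cpost^{j}$ has been enforced. These finitely many bits form the symbolic description; I would spell out how one procedure turns a description into the set of possible successor descriptions --- detecting inapplicability and schema/arity clashes en route, each a polynomial-time structural check in the spirit of Proposition~\ref{ref-rep-decidable} --- and verify that this faithfully tracks whether a canonical run can be continued.

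The complexity bound then comes from a reachability argument over the space of symbolic descriptions. Each description pairs a schema (polynomially many relations and attributes, all drawn from $I$ and the $\Pro_i$) with one emptiness bit per relation, so there are at most exponentially many of them and the one-procedure successor map is computable in polynomial time per description; computing, for $j=1,\dots,n$, the set of descriptions reachable after $\Pro_1,\dots,\Pro_j$ and then testing non-emptiness takes exponential time in general, while for fixed $n$ it is a bounded number of polynomial-time update steps, i.e.\ polynomial time. Correctness in both directions is immediate from the reduction to canonical runs.

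The part I expect to be hardest is neither the reduction to runs nor the reachability book-keeping, but justifying that canonical choices are without loss of generality \emph{across the whole sequence}: one must show that the delicate three-way interaction between $\Scope$, $\Cpost$ and $\Q_\safe$ --- for instance the fact that condition 4 forbids widening a relation that is simultaneously nonempty and pinned by a safety query, unless some other scope relation happens to be empty --- never forces a genuinely different, larger instance than the canonical one, and that its only effect is to produce the schema/arity clashes already recorded in the symbolic description. Pinning this down is exactly what licenses collapsing the infinite outcome set to the finite description, and hence the stated bounds.
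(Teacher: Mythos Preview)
Your proposal is genuinely different from the paper's proof and, as written, contains a real gap.

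\medskip
\noindent\textbf{What the paper does.}
The paper derives Theorem~\ref{theo-rep-safe-scope} as a corollary of Proposition~\ref{prop-minimal}: for a sequence in $\P^{\textit{safe},\textit{alter}}$ one builds, step by step, a \emph{positive conditional instance} $T$ (via a chase adapted from \cite{APR13}) that over-approximates the outcome set while preserving its minimal instances. Non-emptiness of $\outcome_{\Pro_1,\dots,\Pro_n}(I)$ is then equivalent to non-emptiness of $\rep(T)$, which for positive conditional instances is checkable in polynomial time. The size of $T$ grows polynomially at each of the $n$ steps, giving the stated bounds. Crucially, the paper carries the \emph{actual data} (with nulls) through the sequence; it never abstracts to emptiness bits.

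\medskip
\noindent\textbf{Where your argument breaks.}
Your symbolic description retains only the schema and one emptiness bit per relation, and you justify this by asserting that ``whether a one-round chase fires a given tgd \dots\ depend[s] only on the emptiness pattern of the relevant relations.'' That assertion is false. A safe-scope postcondition may have a body such as $R(A{:}x)\wedge S(A{:}x)\to T(A{:}x)$; whether this fires depends on whether the \emph{join} $R\bowtie_A S$ is nonempty, not merely on whether $R$ and $S$ are nonempty. Two inputs with identical emptiness patterns can therefore produce canonical outcomes with different emptiness patterns for $T$. Since your later widening decisions for $T$ (when $T$ is in scope and pinned by a total safety query) hinge precisely on whether $T$ is empty at that moment, your update rule for the symbolic description is unsound, and the reachability computation you describe can report ``empty outcome'' when in fact a run exists.

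\medskip
\noindent\textbf{What would repair it.}
The combinatorial route is not hopeless, but it needs a different invariant. The useful observation is that relations only go from empty to nonempty along any run, and that a non-scope relation can be widened freely (condition~3 only fixes the projection onto the old attributes). Hence the only obstruction to widening a relation $R$ before step $j$ is that $R$ is nonempty in the initial instance \emph{and} $R$ lies in the scope of every $P_i$ with $i<j$. This condition depends only on the initial emptiness of $R$ and on the scopes---not on whether intermediate tgds fire. If you rebuild the argument around ``greedily widen at the first opportunity'' and that structural criterion, you no longer need to track step-by-step emptiness bits (and hence no longer need the false claim about tgd firing). Without such a repair, the paper's conditional-instance approach is the one that actually establishes the theorem.
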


The proof of Theorem \ref{theo-rep-safe-scope} is based on the idea of chasing instances with the dependencies in the procedures, and of adding 
attributes to schemas as dictated by the alter-schema procedures. As usual, to enable the chase  
we need to introduce labeled nulls in instances (see, e.g., \cite{IL84,FKMP05}), and composing procedures calls for extending the techniques 
of \cite{APR13} to enable chase instances that already have null values. Using  the enhanced approach, 
one can show that the result of the chase is a good over-approximation of the outcome of a sequence of procedures. To state this result, 
we introduce conditional tables \cite{IL84}. %some notation following that of %\cite{IL84}. 
 
Let $\Nulls$ be an infinite set of   {\em null values} that is disjoint from the set of domain values $D$. 
A {\em naive instance} $T$ over schema $\Sch$ assigns a finite relation $R^T \subseteq (D \cup \Nulls)^n$ 
to each relation symbol $R$ in $\Sch$ of arity $n$. 
Conditional instances extend naive instances by attaching conditions over the tuples. 
Formally, an \emph{element-condition} is a positive boolean combination of formulas 
of the form $x = y$ and $x \neq y$, where $x \in \Nulls$ and $y \in (D \cup \Nulls)$. 
Then, a {\em conditional instance} $T$ over schema $\Sch$ assigns to each $n$-ary relation 
symbol $R$ in $\Sch$  a pair $(R^T,\rho^T_R)$, where $R^T \subseteq (D \cup \Nulls)^n$ and 
$\rho^T_R$ assigns an element-condition to each tuple $t \in R^T$. 
A conditional instance $T$ is \emph{positive} if none of the element-conditions in its tuples uses 
inequalities (of the form $x \neq y$). 
%A conditional instance with a \emph{global condition} is a pair $(T,\eta)$, where $T$ is a conditional instance and 
%$\eta$ is an element-condition.

%Note that a naive instance can be seen as a conditional instance 
%where every tuples is assigned the condition $x = x$, for a fresh null $x$, and thus we only give the semantics 
%for conditional tables. 

%A naive (resp. conditional) instance \emph{with scope} is a pair $\mathcal T = (T,\Rel)$, where $T$ is a naive (resp. conditional) table and $\Rel$ is a set of relation names. 

To define the semantics, let $\nulls(T)$ be the set of all nulls in any tuple in $T$ or in an element-condition used in $T$. % , if $T$ is a conditional instance. 
Given a substitution $\nu: \nulls(T) \rightarrow D$, let $\nu^*$ be the 
extension of $\nu$ to a substitution $D \cup \nulls(T) \rightarrow D$ that is the identity on $D$. We say that 
$\nu$ {\em satisfies an element-condition} $\psi$, and write $\nu \models \psi$, if for every equality $x= y$ in $\psi$ it is the case that 
$\nu^*(x) = \nu^*(y)$ and for every inequality $x \neq y$ we have that $\nu^*(x) \neq \nu^*(y)$.  
Furthermore, we define the set $\nu(R^T)$ as $\{\nu^*(t) \mid t \in R^T$ and $\nu \models \rho^T_R(t)) 
\}$. %, where $\nu(t)$ is given by replacing each element $n \in \Nulls$ 
%in a tuple $t$ by $\nu(n)$. 
Finally, for a conditional instance $T$, $\nu(T)$ is the 
instance that assigns $\nu(R^T)$ to each relation $R$ in the schema. 

The set of instances represented by $T$, denoted by $\rep(T)$, is defined as 
$\rep(T) = \{I \mid$ there is a substitution $\nu$ such that $I$ extends $\nu(T) \}$. 
%If $G = (T,\eta)$ is a conditional instance with a global condition, then 
%$\rep(G)= \{I \mid$ there is a substitution $\nu$ such that $\nu \models \eta$ and $I$ extends $\nu(T) \}$. 
Note that the instances $I$ in this definition could have potentially bigger schemas than $\nu(T)$, or, in other words, 
we consider the set $\rep(T)$ to contain instances over any schema extending the schema of $T$. 

The next result states that conditional instances are good over-approximations for the outcomes 
of sequences of procedures. More interestingly, these approximations preserve the \emph{minimal instances} 
of outcomes. To put this formally, we say that an instance $J$ in a set $\Inst$ of instances is {\em minimal} if there is no instance $J' \in \Inst, J' \neq J$, and  
such that $J$ extends $J'$. 

\begin{proposition}
\label{prop-minimal}
Let $I$ be an instance and $\Pro_1,\dots,\Pro_n$ be a sequence of procedures in $\P^{\textit{safe},\textit{alter}}$. 
Then either $\outcome_{\Pro_1,\dots,\Pro_n} = \emptyset$ or one can construct, in exponential time (or polynomial if $n$ is fixed), a  
conditional instance $T$ 
%with global condition  
such that 
\begin{itemize}
\item $\outcome_{\Pro_1,\dots,\Pro_n}(I) \subseteq \rep(T)$; and 
\item If $J$ is a minimal instance in $\rep(T)$, then $J$ is also minimal in 
$\outcome_{\Pro_1,\dots,\Pro_n}(I)$. 
\end{itemize}
\end{proposition}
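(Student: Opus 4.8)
The plan is to build the conditional instance $T$ by a chase-based construction that processes the procedures $\Pro_1,\dots,\Pro_n$ one at a time, maintaining an invariant that the current conditional instance correctly over-approximates the intermediate outcome set while preserving its minimal instances. First I would set up the base case: start with $T_0$ equal to the instance $I$ viewed as a conditional instance with all element-conditions equal to true (trivially $\rep(T_0)$ contains $I$ and, since every instance in $\outcome$ of the empty sequence is just an extension of $I$, we get $\outcome \subseteq \rep(T_0)$, with $I$ itself the unique minimal instance of both sides). Then, for the inductive step, I would treat the two kinds of procedures in $\P^{\textit{safe},\textit{alter}}$ separately. For an alter-schema procedure $\Pro_i$, the schema of $T_{i-1}$ is extended by adding the relations and attributes demanded by the structure constraints in $\Cpost$; newly added attributes get fresh labeled nulls in every existing tuple (this matches the open-world semantics of alter-table statements and the definition of $\rep$, which closes under schema extension and under arbitrary values in new columns). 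For a safe-scope procedure, I would chase $T_{i-1}$ with the tgds in $\Cpost$: because $\Cpost$ is a set of tgds in which no right-hand-side relation appears on any left-hand side, the chase is stratified and terminates after essentially one pass over each tgd, producing new tuples with fresh labeled nulls for the existential variables. Crucially, the scope constraints $R[*]$ and the safety query $\bigwedge_{R[*]\in\Scope} R$ together pin down exactly which relations may change and force all old tuples of those relations to be preserved, so the chase result is faithful: the minimal outcome of applying a safe-scope procedure to an instance $K$ is obtained by adding to $K$ precisely the tuples forced by the tgds, which is what the chase computes.

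The complexity bookkeeping I would fold in along the way. Each safe-scope step can at most add one new tuple per satisfying homomorphism of a tgd body, and because the chase is stratified, nulls introduced at one step can appear in later bodies but are never re-fired cyclically; so the size of $T_i$ grows by at most a polynomial factor per step, giving an overall bound of $\text{size}(I)^{O(n)}$ — exponential in $n$ in general, polynomial when $n$ is fixed, matching the statement and consistent with Theorem~\ref{theo-rep-safe-scope}. Detecting the $\outcome_{\Pro_1,\dots,\Pro_n}(I)=\emptyset$ case is handled exactly as in Theorem~\ref{theo-rep-safe-scope}: emptiness can only be caused by a precondition of some $\Pro_i$ failing on all instances of the current conditional table; since preconditions here are structure constraints (by the running assumption of the section) together with whatever egds/tgds the safe-scope procedures might carry in $\Cpre$, one checks these against $T_i$, and if any is violated one returns $\emptyset$, otherwise the construction proceeds. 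I would also note that the conditional (as opposed to merely naive) tables are what let us handle egds appearing in preconditions or in the ambient constraints during the chase — equating two nulls is recorded as an element-condition rather than by an irreversible merge, which is needed so that $\rep(T)$ stays an over-approximation rather than shrinking.

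The two containments in the conclusion are then proved by induction on $n$, and this is where the real work lies. For $\outcome_{\Pro_1,\dots,\Pro_n}(I)\subseteq\rep(T)$: given any $J\in\outcome_{\Pro_1,\dots,\Pro_n}(I)$, there is a witnessing sequence $I=K_0, K_1,\dots,K_n=J$ with each $K_{i}\in\outcome_{\Pro_i}(K_{i-1})$; by the induction hypothesis $K_{i-1}\in\rep(T_{i-1})$ via some substitution $\nu_{i-1}$, and one shows that $K_i$ is still represented by $T_i$ by exhibiting an extension of $\nu_{i-1}$ mapping the fresh nulls introduced at step $i$ to the actual witnessing values inside $K_i$ (using that $K_i$ satisfies $\Cpost$ of $\Pro_i$ and preserves the scope/safety guarantees). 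For the minimal-instance direction: if $J$ is minimal in $\rep(T)=\rep(T_n)$, I would show $J\in\outcome_{\Pro_1,\dots,\Pro_n}(I)$ and that nothing in the outcome set is strictly extended by $J$. The membership argument runs the homomorphism from $T_n$ into $J$ backwards through the construction to recover a legal execution sequence ending at $J$, using minimality to argue that $J$ contains no tuples beyond those forced by the chase in the changed relations (any extra tuple in a scope relation could be deleted to give a strictly smaller member of $\rep(T)$, contradicting minimality), and then verifying each intermediate step respects scope, safety, and postconditions.

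The main obstacle, I expect, is precisely this second direction — transferring minimality through the composition of procedures. The subtlety is that $\outcome_{\Pro_1,\dots,\Pro_n}(I)$ is defined as an iterated union of outcome sets, so a minimal instance of the final set need not arise from minimal instances at every intermediate stage; one has to argue that the chase-based $T$ nonetheless captures all the genuinely minimal outcomes, which requires carefully relating the open-world freedom in $\rep(T)$ (extra attributes, extra tuples outside the scope, arbitrary values of fresh nulls) to the freedom that procedures legitimately have, and ruling out spurious minimal instances of $\rep(T)$ that are not themselves realizable outcomes. The safe-scope restriction — no right-hand-side relation reused on a left-hand side, scope matching the modified relations exactly, safety query binding exactly the scope — is exactly what makes this controllable, and I would lean on it heavily: it guarantees that the chase does not under-generate (nothing is forced that a real execution could avoid) and does not over-generate in the scoped relations (real executions must include at least the chased tuples), so the minimal instances line up on the nose.
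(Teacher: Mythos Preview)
Your proposal is correct and follows essentially the same strategy as the paper: build $T$ by processing the procedures one at a time, with alter-schema steps extending the schema (padding existing tuples with fresh nulls in the new columns) and safe-scope steps chasing the current conditional instance with the tgds in $\Cpost$, while maintaining the two invariants inductively and tracking the polynomial-per-step size blowup. The paper packages the two cases as separate lemmas and offloads precisely the step you flag as the main obstacle---that chasing a conditional instance preserves the minimal outcomes of a safe-scope procedure---to prior results of~\cite{APR13}; one small misalignment is your aside that egds are why conditional (rather than naive) tables are needed, since the paper's proof actually drops preconditions entirely and treats only tgds, remarking separately that egds can be accommodated at extra technical cost.
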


We remark that this proposition can be extended to include procedures defined only with egds, at the cost of a much more technical presentation. 
While having an approximation with these properties is useful for reasoning tasks related to CQ answering, or 
in general checking any criterion that is closed under extensions of instances,  there is still the question of 
whether one can find any reasonable class of properties whose entire outcomes can be represented by these tables. 
However, as the following example shows, this does not appears to be possible, unless one is restricted to sequences 
of procedures almost without interaction with each other (see an example in appendix \ref{app-more-proofs}). 

\begin{example}
\label{exa-scope-2}
Consider a procedure $\Pro = (\Scope, \Cpre,\Cpost,\Q_\safe)$ with safe scope, where 
$\Scope = S[*]$, $\Cpre$ is empty, $\Cpost = \{R(A: x) \rightarrow S(A:x)\}$ and $\Q_\safe = S$. 
Consider now the conditional instance $T$ over the schema with relations $R$ and $S$, both with attribute $A$, given by 
$R^T= \{1,2\}$ and $S^T = \{1,2\}$. One could be tempted to say that $T$ is itself a representation of the set 
$\outcome_{\Pro}(\rep(T))$, and indeed $\rep(T)$ and $\outcome_{\Pro}(\rep(T))$ share their only minimal instance (essentially, 
the instance given by $T$). However, the open-world assumption behind $\rep(T)$ allows for instances that do not satisfy $\Cpost$, 
whereas all outcomes in $\outcome_{\Pro}(\rep(T))$ must satisfy $\Cpost$. One can in fact generalize this argument to show that 
conditional instances are not enough to fully represent outcome sets. 
\end{example}
Example \ref{exa-scope-2} suggest that one could perhaps combine conditional instances with a knowledge base, to allow for 
a complete representation of the outcome set of sequences of safe procedures. However, this would require studying the interplay of 
these two different types of representation systems, a line of work which is interesting in its own right.

%\section{Aligning datasets using relational procedures}

%\jr{What I would write: 
%\begin{itemize}
%\item We have an instance and we want to do some task, for now this is a boolean query that we need to be compatible and true. 
%\item We have a bunch of procedures. 
%\item Is there a way to decide if we can get to a state where the query is true? 
%\item Results will be given for a class of procedures for which we know that doing them again is 
%not going to change anything. For now it could be a set of update-procedures (probably with preconditions) 
%and a set of full, acyclic, safely scoped procedure that do not even use constraints $R \rightarrow S$ (because those 
%screw things around). Then calculate the complexity. 
%\item Loads of future work! 
%\end{itemize}
%}
%\jr{maybe just a conjecture}

\section{Future Work and Opportunities}
\label{ref-conc}

In this paper, we introduced basic building blocks for a proposed framework for assessing achievability of data-quality constraints. We  demonstrated that 
the framework is general enough to represent 
nontrivial database tasks, and exhibited realistic classes of procedures 
for which reasoning tasks can be tractable. 
Our next step is to address the problem of assessing achievability of constraints, which can be formalized as follows. 
Let $Q$ be a boolean query, $\Pi$ a set of procedures, and $I$ an instance over a schema $\Sch$. 
Then we say that $I$ can be readied for $Q$ using $\Pi$ 
%\ryc{What if $I$ is already ready for $Q$? --- then no procedures are needed. I am not sure how to say it compactly here } 
%\jr{I don't think it would do harm to leave this for now}
if there is a sequence $P_1,\dots,P_n$ 
of procedures (possibly empty and possibly with repetitions)  from $\Pi$ such that $Q$ is compatible with and true in each instance 
$I'$ in the set  $\outcome_{P_1,\dots,P_n}(I)$. (If the latter conditions involving $Q$ are true on $I$, then we say that $I$ is ready for $Q$.) We are confident that this problem is decidable for 
sets of procedures in $\P^{\textit{safe},\textit{alter}}$, and we plan on looking into more expressive fragments. 

The proposed framework presents opportunities for several directions of further research. One line of work would involve understanding how to 
represent outcomes of sequences of procedures, or how to obtain good approximations of outcomes of 
more expressive classes of procedures. To solve this problem, we would need a better understanding of the interplay between 
conditional tables and knowledge bases, which would be interesting in its own right. 

We also believe that our framework is general enough to allow reasoning on other data paradigms, or even across various 
different data paradigms. Our black-box abstraction could, for example, offer an effective way to reason about procedures involving unstructured text data, or  
even data transformations using machine-learning tools, as long as one can obtain some guarantees on the data outcomes of these tools. 

{\small 
\bibliographystyle{abbrv}
\bibliography{draft.bib}
}

\newpage

\appendix
\section{Additional Examples}
\label{app-exa}

\subsection{SQL data-modification statements}

We show how to encode arbitrary SQL INSERT and DELETE statements as procedures. 
Due to dealing with arbitrary SQL, we relax the constraints and queries that we use. 

\medskip
\noindent
\textbf{INSERT statements}: Consider a SQL statement of the form 
\texttt{INSERT INTO S Q}, where $Q$ is a relational-algebra query.  

\noindent
$\Scope$: Not surprisingly, the scope of the procedure is the relation \texttt{S}. 

\noindent
$\Cpre$: The precondition for the procedure is that all the relation names and attributes mentioned in 
$Q$ must be present in the database. 

\noindent
$\Cpost$: The postcondition is stated using the constraint $\texttt{Q}\subseteq \texttt{S}$.  
(Note that the SQL statement only works when $\texttt{Q}$ and $\texttt{S}$ have the same arity.)  

\noindent
$\Q_\safe$: Since we are inserting tuples, we need the query \texttt{S} to be preserved. 

Alternatively, we can specify an INSERT statement of the form 
\texttt{INSERT INTO S VALUES $\bar a$}, with $\bar a$ a tuple of values. 
In order to formalize this, we just need to change the postcondition of the procedure to 
$\bar a \subseteq \texttt{S}$. 

\medskip
\noindent
\textbf{DELETE statements}: Consider a SQL statement of the form 
\texttt{DELETE FROM S WHERE $C$}, in which $C$ is a boolean combination of conditions. 
 
\noindent
$\Scope$: The scope is the relation $\texttt{S}$, as expected. 

\noindent
$\Cpre$: The precondition for the procedure is that all the relations and attributes mentioned in 
$C$ must be present in the database. 

\noindent
$\Cpost$: There are no postconditions in this query.

\noindent
$\Q_\safe$: Let $Q_C$ be the query \texttt{SELECT * FROM S WHERE C}. Then the safety query is $\texttt{S} - Q_C$, which preserves only those tuples that are not to be deleted.

\subsection{Representing sequences of procedures}
\label{app-more-proofs}

As we mentioned, one possibility to obtain a full representation of sequences of procedures is to 
further restrict the scope of sequences of safe procedures. To be more precise, let us 
say that a sequence $\Pro_1,\dots,\Pro_n$ of procedures is a \emph{safe sequence} if (1) each $\Pro_i$ is either an alter-schema procedure or a safe-scope procedure that only uses tgds, 
and (2) for every $1 \leq  j \leq n$, none of the atoms on the right-hand side of a tgd in $\Pro_j$ is part of the scope of any $\Pro_i$, with $i \leq j$. 
Intuitively, safe sequences of procedures restrict the possibility of sequencing data-migration tasks when the result of one migration is used 
as an input for the next one. 

A conditional instance \emph{with scope} is a pair $\T = (T,\Rel)$, where $T$ is a conditional instance and $\Rel$ is a set of relation names. 
The set of instances represented by $\T$, denoted again by %\ryc{What is $\T$ here? --- never defined} 
$\rep(\T)$, now contains all the instances $J$ in $\rep(T)$ where, 
for each relation $R \in \schema(\nu(T))$ that is not in $\Rel$, 
the projection of $R^J$ over the attributes of $R$ in $T$ is the same as $R^{\nu(T)}$. (In other words, 
we allow extra tuples only in the relations whose symbols are in the set $\Rel$.)  It is now not difficult to show the following result. 

\begin{proposition}
\label{prop-safe-sequence}
For each instance $I$ and each safe sequence $\Pro_1,\dots,\Pro_n$ of procedures one can construct a conditional 
instance $\T$ with scope such that $\rep(\T) = \outcome_{\Pro_1,\dots,\Pro_n}(I)$. 
\end{proposition}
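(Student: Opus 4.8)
The plan is to construct $\T = (T, \Rel)$ by the chase, essentially reusing the machinery already invoked for Theorem \ref{theo-rep-safe-scope} and Proposition \ref{prop-minimal}, but now tracking carefully which relations may legitimately acquire new tuples. First I would process the sequence $\Pro_1,\dots,\Pro_n$ left to right, maintaining a current conditional instance and a current schema: for an alter-schema procedure I add the structure constraints it demands to the schema (introducing fresh nulls as values for new attributes, guarded by trivial conditions), and for a safe-scope procedure that uses only tgds I chase the current instance with its postconditions $\Cpost$, introducing labelled nulls for existential variables. Because each $\Pro_i$ that is safe scope is, by definition, full-dependency-free interplay aside, a set of tgds whose right-hand-side relations never appear on a left-hand side, a single round of the chase on each such procedure suffices locally; the safety query $\Q_\safe = \bigwedge_{R[*]\in\Scope} R$ records that precisely the scope relations of $\Pro_i$ are the ones that grow. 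I let $\Rel$ be the union, over all the safe-scope procedures $\Pro_i$ in the sequence, of the relations appearing in $\Scope^i$ (equivalently, the relations on right-hand sides of their postcondition tgds), and $T$ be the conditional instance produced at the end.

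The containment $\outcome_{\Pro_1,\dots,\Pro_n}(I) \subseteq \rep(\T)$ should follow exactly as in Proposition \ref{prop-minimal}, with the extra observation that the safe-sequence condition (2) guarantees that no relation in the scope of $\Pro_i$ is ever a right-hand-side target of a later (or equal) procedure $\Pro_j$ with $j \ge i$; hence once a relation acquires its ``grown'' content it is not used as an input to a subsequent migration, so the only relations whose content can differ from $T$ across outcomes are precisely those in $\Rel$. That is exactly what the refined semantics of $\rep(\T)$ demands: relations outside $\Rel$ must match $\nu(T)$ on their $T$-attributes, and this is forced for every genuine outcome by the combination of the $Q_{\Sch\setminus\Scope}$-preservation condition (condition 3 of ``possible outcome'') and the safety queries. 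For the reverse inclusion $\rep(\T) \subseteq \outcome_{\Pro_1,\dots,\Pro_n}(I)$ — which is the new content beyond Proposition \ref{prop-minimal} — I would take an arbitrary $J \in \rep(\T)$, so $J$ extends $\nu(T)$ for some substitution $\nu$ and agrees with $\nu(T)$ on all relations outside $\Rel$, and show step by step that $J$ (restricted and re-extended appropriately at each stage) is a legitimate possible outcome: at each safe-scope step the postcondition tgds are satisfied because the chase enforced them and $J$ only adds tuples to scope relations, never removing the chased witnesses; condition 3 holds because relations outside the scope are untouched; and the safety queries hold because the chased content is preserved under extension. The alter-schema steps are immediate since they have empty scope and empty safety queries and only constrain the schema, which $J$'s schema (an extension) satisfies.

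The main obstacle I anticipate is the back-and-forth bookkeeping in the reverse inclusion: given $J \in \rep(\T)$, one must exhibit, for every prefix $\Pro_1,\dots,\Pro_k$, an intermediate instance $J_k$ that is simultaneously (i) a possible outcome of $\Pro_k$ applied to some $J_{k-1} \in \outcome_{\Pro_1,\dots,\Pro_{k-1}}(I)$, and (ii) such that $J$ extends $J_k$ and, crucially, $J$ does not ``over-commit'' relations that $\Pro_{k+1},\dots,\Pro_n$ still need in their pristine form. The safe-sequence restriction (2) is exactly the hypothesis that makes this possible — it decouples the growth of scope relations from their use as migration inputs — but verifying that the chosen $J_k$ really satisfies each procedure's postcondition, and that the safety queries and the frame condition $Q_{\Sch\setminus\Scope}(J_{k-1}) = Q_{\Sch\setminus\Scope}(J_k)$ both hold when we insert only the tuples of $J$ belonging to $\Rel$, requires a careful induction; the subtlety is choosing how much of $J$'s content to ``reveal'' at each stage so that earlier frame conditions are not violated by tuples that morally belong to later relations. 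I would formalize this by defining $J_k$ to contain, for each relation $R$, the full $R^J$ if $R$ is in the scope of some $\Pro_i$ with $i \le k$, and the pristine content $\nu(T_k)^R$ otherwise, and then check the four conditions of ``possible outcome'' for each $k$; the safe-sequence hypothesis is precisely what makes these checks go through.
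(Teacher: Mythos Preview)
Your overall strategy is the natural one and is almost certainly what the paper intends (it offers no proof beyond ``not difficult to show''): chase $I$ through the sequence to build $T$, collect all scope relations into $\Rel$, and argue both inclusions, with the reverse one carried by a staged family $J_0=I,\dots,J_n=J$ that reveals $J$'s content on scope relations progressively.

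There is, however, a genuine gap in how you invoke condition~(2) of the safe-sequence definition. You restate it faithfully (scope of $\Pro_i$ is never a \emph{right-hand-side} target of a later-or-equal $\Pro_j$) and then write ``hence once a relation acquires its grown content it is not used as an \emph{input} to a subsequent migration''. That ``hence'' is unjustified: right-hand-side target and left-hand-side input are different notions. Worse, condition~(2) as literally printed is self-contradictory for any safe-scope $\Pro_j$, since by definition its scope \emph{equals} its set of right-hand-side relations, so already the case $i=j$ fails. The sentence immediately following the definition (``restrict the possibility \dots\ when the result of one migration is used as an input for the next one'') makes clear that the intended condition concerns \emph{left}-hand sides: no relation on the left of a tgd in $\Pro_j$ lies in the scope of any $\Pro_i$ with $i\le j$. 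That is precisely the hypothesis your reverse inclusion needs---it forces every left-hand-side relation to lie outside $\Rel$, hence to be frozen in every $J\in\rep(\T)$, so that enlarging $\Rel$-relations cannot break any postcondition. Without this reading the proposition is false: take $\Pro_1$ with postcondition $A(x)\to R(x)$ and $\Pro_2$ with $R(x)\to S(x)$; the scopes $\{R\}$ and $\{S\}$ are disjoint, yet $\rep(\T)$ contains instances with an extra $R$-tuple not witnessed in $S$, and no such instance is an outcome of $\Pro_2$.

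You should flag the intended reading explicitly and run your argument under it. Once you do, your staged construction of the $J_k$ goes through, modulo routine bookkeeping about schemas: arrange that a scope relation carries $J$'s full attribute set only from the first step at which it enters some $\Scope^i$ (so the total safety queries compare tuples of matching arity), and let the last step absorb any residual schema extension of $J$ over $T$.
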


\section{Proofs and Intermediate Results}

\subsection{Proof of Proposition \ref{prop-ap-und}}

The reduction is from the complement of the embedding problem for finite semigroups, shown to be undecidable in \cite{KPT06}, and 
it is itself an adaptation of the proof of Theorem 7.2 in \cite{APR13}.  
Note that, since we do not intend to add attributes nor relations in the procedures of this proof, we can 
drop the named definition of queries, treating CQs now as normal conjunctions of relational atoms. 

The embedding problem for finite semigroups problem 
can be stated as follows. Consider a pair $\textbf A = (A,g)$, where $A$ is a finite set and 
$g: A \times A \rightarrow A$ is a partial associative function. We say that $\textbf A$ is embeddable in a finite 
semigroup is there exists $\textbf B = (B,f)$ such that $A \subseteq B$ and $f: B \times B \rightarrow B$ is a total 
associative function. The embedding problem for finite semigroups is to decide whether an arbitrary 
$\textbf A = (A,g)$  is embeddable in a finite semigroup. 

Consider the schema 
$\Sch = \{C(\cdot,\cdot), E(\cdot,\cdot), N(\cdot,\cdot), G(\cdot,\cdot,\cdot), F(\cdot), D(\cdot)\}$. 
The idea of the proof is as follows. 
We use relation $G$ to encode binary functions, so that a tuple $(a,b,c)$ in $G$ intuitively 
corresponds to saying that $g(a,b) = c$, for a function $g$. 
Using our procedure we  shall mandate that the binary function encoded in $G$ is total and associative. 
We then encode $\textbf A = (A,g)$ into our input instance $I$: the procedure will then try to embed 
$A$ into a semigroup whose function is total. 

In order to construct the procedures, we first specify the following set $\Sigma$ of tgds. 
First we add to $\Sigma$ a set of dependencies ensuring that all elements in the relation 
$G$ are collected into $D$: 

\begin{eqnarray}
G(x,u,v) & \rightarrow & D(x) \\
G(u,x,v) & \rightarrow & D(x) \\
G(u,v,x) & \rightarrow & D(x) 
\end{eqnarray}

The next set verifies that $G$ is total and associative:

\begin{eqnarray}
D(x) \wedge D(y) & \rightarrow & \exists z G(x,y,z) \\
G(x,y,u) \wedge G(u,z,v) \wedge G(y,z,w) & \rightarrow & G(x,w,v)
\end{eqnarray}

Next we include dependencies that are intended to force relation $E$ to be an equivalence relation 
over all elements in the domain of $G$. 

\begin{eqnarray}
D(x) & \rightarrow & E(x,x) \\
E(x,y) & \rightarrow & E(y,x) \\
E(x,y) \wedge E(y,z) & \rightarrow & E(x,z)
\end{eqnarray}

The next set of dependencies we add $\Sigma$ ensure that $G$ represents a function that is consistent 
with the equivalence relation $E$.

\begin{eqnarray}
G(x,y,z) \wedge E(x,x') \wedge E(y,y') \wedge E(z,z') & \rightarrow & G(x',y',z') \\
G(x,y,z) \wedge G(x',y',z') \wedge E(x,x') \wedge E(y,y') & \rightarrow & E(z,z')
\end{eqnarray}

The final tgd in $\Sigma$ serves us to collect possible errors when trying to embed 
$\textbf A = (A,g)$. The intuition for this tgd will be made clear once we outline the reduction, but 
the idea is to state that the relation $F$ now contains everything that is in $R$, as long 
as a certain property holds on relations $E$, $C$ and $N$.  

\begin{eqnarray}
E(x,y) \wedge C(u,x) \wedge C(v,y) \wedge N(u,v) \wedge R(w) & \rightarrow & F(w)
\end{eqnarray}

Let then $\Sigma$ consists of tgds (1)-(11). We construct fixed procedures $P_1 = (\Scope^1,\Cpre^1,\Cpost^1,\Q_\safe^1)$ and 
$P_2= (\Scope^2,\Cpre^2,\Cpost^2,\Q_\safe^2)$ as follows. 

\medskip

\noindent
procedure \textbf{$P_1$}: 

\noindent
$\Scope^1$: The scope of $P_1$ consists of relations $G$, $E$, $D$ and $F$, which corresponds to the constraints 
$\{G[*],  E[*], D[*], F[*]\}$. 

\noindent
$\Cpre^1$: There are no preconditions for this procedure. 

\noindent
$\Cpost^1$: The postconditions are the tgds in $\Sigma$. 

\noindent
$\Q_\safe^1$: This query ensures that no information is deleted from all of $G$, $E$ and $F$ (and thus that no 
attributes are added to them): $G(x,y,z) \wedge E(u,v) \wedge D(w) \wedge F(p)$.

\medskip
\noindent
procedure \textbf{$P_2$}: 

\noindent
$\Scope^2$: The scope of $P_2$ is empty. 

\noindent
$\Cpre^2$: The precondition for this constraint is $R(x) \rightarrow F(x)$. 

\noindent
$\Cpost^2$: The are no postconditions. 

\noindent
$\Q_\safe^2$: There is no safety query. 

\medskip
Note that $P_2$ does not really do anything, it is only there to check that $R$ is contained in $F$. 
We can now state the reduction. On input $\textbf A = (A,g)$, where $A = \{a_1,\dots,a_n\}$, 
we construct an instance 
$I_\textbf{A}$ given by the following interpretations: 
\begin{itemize}
\item $E^{I_\textbf{A}}$ contains the pair $(a_i,a_i)$ for each $1 \leq i \leq n$ (that is, for each element of $A$); 
\item $G^{I_\textbf{A}}$ contains the triple $(a_i,a_j,a_k)$ for each $a_i,a_j,a_k \in A$ such that $g(a_i,a_j) = a_k$;  
\item $D^{I_\textbf{A}}$ and $F^{I_\textbf{A}}$ are empty, while $R^{I_\textbf{A}}$ contains a single element $d$ not in $A$; 
\item $C^{I_\textbf{A}}$ contains the pair $(i,a_i)$ for each $1 \leq i \leq n$; and 
\item $N^{I_\textbf{A}}$ contains the pair $(i,j)$ for each $i \neq j$, $1 \leq i \leq n$ and $1 \leq j \leq n$. 
\end{itemize}

Let us now show $\textbf{A} = (A,g)$ is embeddable in a finite semigroup if and only if $\outcome_{P_1}(I)$ contains 
an instance $I$ such that $I'$ does not satisfy the precondition $R(x) \rightarrow F(x)$ of procedure $P_2$.

\medskip
\noindent
($\Longrightarrow$) Assume that $\textbf{A} = (A,g)$ is embeddable in a finite semigroup, say the semigroup 
$\textbf B = (B,f)$, where $f$ is total. Let $J$ be the instance such that 
$E^J$ is the identity over $B$, $D^J = B$ and $G^J$ contains a pair $(b_1,b_2,b_3)$ if and only if 
$f(b_1,b_2) = b_3$; $F^J$ is empty and relations $N$, $C$ and $R$ are interpreted as in $I_\textbf{A}$. 
It is easy to see that $J \models \Sigma$, $Q_{\Sch \setminus \Scope}$ is preserved and that 
$\Q_\safe(I_\textbf{A}) \subseteq \Q_\safe(J)$, this last because $\textbf A$ was said to be embeddable in 
$\textbf B$. We have that $J$ then does belong to $\outcome_{P_1}(I)$, but $J$ does not satisfy the 
constraint $R(x) \rightarrow F(x)$. 

\medskip
\noindent
($\Longleftarrow$) Assume now that there is an instance $J \in \outcome_{P_1}(I)$ that does not 
satisfy $R(x) \rightarrow F(x)$. Note that, because of the scope of $P_1$, the interpretation of 
$C$, $N$ and $R$ of $J$ must be just as in $I$. Thus it must be that the element $d$ is not in 
$F^J$, because it is the only element in $R^J$. 
Construct a finite semigroup $\textbf{B} = (B,f)$ as follows. 
Let $B$ consists of one representative of each equivalence class in $E^J$, with the additional restriction that 
each $a_i$ in $A$ must be picked as its representative. Further, define $f(b_1,b_2) = b_3$ if and only if 
$G(b_1,b_2,b_3)$ is in $G$. Note that $J$ satisfies the tgds in $\Sigma$, in particular $G$ is associative and 
$E$ acts as en equivalence relation over $G$, which means that $f$ is indeed associative, total, and well defined. 
It remains to show that $\textbf A$ can be embedded in $\textbf B$, but since 
$G^J$ and $E^J$ are supersets of $G^{I_\textbf{A}}$ and $E^{I_\textbf{A}}$ (because of the safety query of $P_1$), 
all we need to show is that each $a_i$ is in a separate equivalence relation. But this hold because of tgd 
(11) in $\Sigma$: if two elements from $A$ are in the same equivalence relation then the left hand side of (11) would hold in $I_\textbf{A}$, which contradicts the fact that $F^J$ does not contain $d$.

\subsection{Proof of Proposition \ref{ref-rep-decidable}}

\newcommand{\Schmin}{\Sch_\text{min}}

%We say that a schema $\Sch'$ is an extension of a schema $\Sch$ if (1) all relations in $\Sch$ are contained in 
%$\Sch'$, and (2), all attributes associated to a relation $R$ in $\Sch$ are also associated to $R$ in 
%$\Sch'$ (but there could be more attributes). 

Let  $P = (\Scope,\Cpre,\Cpost,\Q_\safe)$. We first show how to construct, for each instance $I$ over a schema $\Sch$, the \emph{minimal} 
schema $\Schmin$ such that all pairs $(J,\Sch')$ that are possible outcomes of applying $P$ over $(I,\Sch)$  are such that 
$\Sch'$ extend $\Schmin$. 

The algorithm receives a procedure $P$ and a schema $\Sch$ and outputs either $\Schmin$, if the procedure is applicable, or a failure signal in case there is no schema satisfying the output constraints of the procedure. 
Along the algorithm we will be assigning numbers to some of the relations in $\Schmin$. This is important to be able to decide failure. 

\medskip
\noindent
\textbf{Algorithm $A(P,\Sch)$ for constructing $\Schmin$}\\
\textbf{Input}: procedure $P = (\Scope,\Cpre,\Cpost,\Q_\safe$ and schema $\Sch$. \\
\textbf{Output}: either \emph{failiure} or a schema $\Schmin$.

\begin{enumerate}
\item If $\Sch$ does not satisfy the structural constraints in $\Cpre$ or is not compatible with either $\Q_\safe$ or $Q_{\Sch \setminus \Scope}$, output failure. Otherwise, continue. 
\item Start with $\Schmin = \emptyset$. 
\item For each total query $R$ in $\Q_\safe$, assume that $|\Sch(R)| = k$. Set $\Schmin(R) = \Sch(R)$, and label $R$ with $k$.  

%\item Add to $\Schmin$ all relations $R$ mentioned in a total tgd in $\Cpost$ (if they are not already part of $\Schmin$), without associating any attributes to them
\item Add to $\Schmin$ all relations $R$ mentioned in an atom $R[*]$ in $\Cpost$ (if they are not already part of $\Schmin$), without associating any attributes to them 

\item In the following instructions we construct a set $\Gamma(P,\Sch)$ of pairs of relations and attributes. Intuitively, 
a pair $(R,\{a_1,\dots,a_n\})$ in $\Gamma(P,\Sch)$ states that each schema in the output of $P$ must contain a relation $R$ with attributes 
$a_1,\dots,a_n$. 
\begin{itemize}
\item For each relation $R$ in $\Sch$ that is not mentioned in $\Scope$, add to $\Gamma(P,\Sch)$ the pair $(R,\Sch(R))$.
\item For each constraint $R[a_1,\dots,a_n]$ in $\Scope$, add the pair 
$(R,\Sch(R) \setminus \{a_1,\dots,a_n\})$ to $\Gamma(P,\Sch)$.
\item For each atom $R(a_1:x_1,\dots,a_n:x_n)$ in $\Q_\safe$ add to $\Gamma(P,\Sch)$ the pair 
$(R,\{a_1,\dots,a_n\})$. 
\item For each atom $R(a_1:x_1,\dots,a_n:x_n)$ in a tgd or egd in $\Cpost$ add to $\Gamma(P,\Sch)$ the pair 
$(R,\{a_1,\dots,a_n\})$. 
\item For each constraint $R[a_1,\dots,a_n]$ in $\Cpost$, add to $\Gamma(P,\Sch)$ the pair 
$(R,\{a_1,\dots,a_n\})$. 
\end{itemize}

\item For each pair $(R,A)$ in $\Gamma(P,\Sch)$, do the following.
\begin{itemize}
\item If $R$ is not yet in $\Schmin$, add $R$ to $\Schmin$ and set $\Schmin(R) = A$; 
\item If $R$ is in $\Schmin$, update $\Schmin(R) = \Schmin(R) \cup A$. 
%\item If $R$ is in $\Schmin$ but it is not marked, let $B$ be the set of attributes currently associated to $R$. Update $B = B \cup A$. 
%\item Is $R$ is in $\Schmin$, $R$ is marked and the attributes associated to $R$ are not a superset of $A$, output failure. 
\end{itemize}

%\item Repeat until there are no more labelings: For each total tgd $R \rightarrow S$ in $\Cpost$, 
%\begin{itemize}
%\item If both $R$ and $S$ are labelled with different numbers, output failure. 
%\item Otherwise if $R$ is labelled with $n$ but $S$ is not, label $S$ with $n$. 
%\item Otherwise if $S$ is labelled with $n$ but $R$ is not, label $S$ with $n$. 
%\end{itemize}

\item If $\Schmin$ contains a relation $R$ labelled with a number $n$ where, $\Schmin(R) > n$, output failure. Otherwise output $\Schmin$.  
\end{enumerate}

By direct inspection of the algorithm, we can state the following. 
\begin{proposition}
\label{obs-gamma}
Let $P = (\Scope,\Cpre,\Cpost,\Q_\safe)$ be a relational procedure and $\Sch$ a relational schema. Then for each 
relation $R$ in $\Schmin$ with attributes $\{a_1,\dots,a_n\}$, every instance $I$ over $\Sch$ and every pair 
$(J,\Sch')$ in the outcome of applying $P$ to $(I)$, we have that 
$\Sch(R)$ is defined, with $\{a_1,\dots,a_n\} \subseteq \Sch(R)$. 
\end{proposition}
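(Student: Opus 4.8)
The plan is to prove the bound by direct inspection of Algorithm $A(P,\Sch)$, observing first that $\Schmin$ is computed from $P$ and $\Sch$ alone: neither the instance $I$ nor any outcome $(J,\Sch')$ enters the construction, so the quantifiers over $I$ and $(J,\Sch')$ are immaterial to the conclusion, and it suffices to show that whenever the algorithm returns a schema $\Schmin$, every attribute it attaches to a relation $R$ already lies in $\Sch(R)$ (and in particular $\Sch(R)$ is defined). The engine of the whole argument is the guard in Step~1: the algorithm proceeds only when $\Sch$ satisfies the structural preconditions in $\Cpre$ and is compatible with both $\Q_\safe$ and $Q_{\Sch\setminus\Scope}$. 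I will use repeatedly the fact that compatibility of a total or conjunctive query with $\Sch$ means every relation it names occurs in $\Sch$ and every attribute it mentions on a relation $R$ belongs to $\Sch(R)$.

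Next I would enumerate the finitely many ways a relation or attribute is recorded in $\Schmin$ and discharge each one against $\Sch$, using that $\Schmin(R)$ is a union of contributions and that containment is preserved under unions. The immediate cases are those sourced from the input and from the safe queries: Step~3 sets $\Schmin(R)=\Sch(R)$ for every total query $R\in\Q_\safe$; the first bullet of Step~5 contributes $(R,\Sch(R))$ for relations of $\Sch$ outside the scope; the second bullet contributes $(R,\Sch(R)\setminus\{a_1,\dots,a_n\})$, again a subset of $\Sch(R)$; and the bullet for an atom $R(a_1:x_1,\dots,a_n:x_n)$ of $\Q_\safe$ contributes exactly $\{a_1,\dots,a_n\}$, which sits in $\Sch(R)$ precisely because $\Q_\safe$ passed the Step~1 compatibility test. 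For all of these, $\Sch(R)$ is defined and the contributed set is contained in it.

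The delicate part, which I expect to be the main obstacle, is the attributes and relations injected through the postcondition $\Cpost$: the relations of Step~4 (atoms $R[*]$ in $\Cpost$) and the last two bullets of Step~5 (attributes of tgd/egd atoms and of structure constraints $R[a_1,\dots,a_n]$ in $\Cpost$). Nothing in the semantics of a postcondition forces its relations or attributes to be present already in the input, so these are exactly the contributions for which one must argue separately that they are reflected in $\Sch$. My strategy here is to push the work back onto the preconditions: for the relational procedures at issue one establishes that whatever $\Cpost$ records on a relation $R$ surviving into $\Schmin$ is also mentioned, via $\Scope$ and the structural constraints in $\Cpre$, in a conjunct of $Q_{\Sch\setminus\Scope}$ or in $\Q_\safe$, whence Step~1 compatibility again places it in $\Sch(R)$. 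Making this coverage argument precise --- identifying exactly which fragment of the postcondition is guaranteed to be anchored in the input schema, and isolating it from the genuinely schema-altering effects that the open-world assumption permits --- is the crux of the proof; once it is done, taking the union of all the per-source containments yields $\{a_1,\dots,a_n\}=\Schmin(R)\subseteq\Sch(R)$ for every $R$ in $\Schmin$.
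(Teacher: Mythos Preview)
You are proving the wrong statement. The proposition as printed has a typo: the conclusion should read $\Sch'(R)$ (the schema of the outcome), not $\Sch(R)$ (the input schema). Two internal signals confirm this. First, the statement quantifies over outcomes $(J,\Sch')$, which is pointless if the conclusion speaks only of $\Sch$; you noticed this yourself and dismissed those quantifiers as ``immaterial'', but that is exactly backwards. Second, the only use of this proposition is to derive part~(ii) of the subsequent lemma, namely that \emph{the schema of any instance in $\outcome_P(I)$ extends $\Schmin$}; that sentence is about outcome schemas, not the input.

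Under the literal reading you adopted, the claim is simply false, and your ``delicate part'' cannot be salvaged. The alter-schema procedure in the paper has $\Cpost=\{\locvisits[\age]\}$ while $\age\notin\Sch(\locvisits)$; Step~5 duly puts $\age$ into $\Schmin(\locvisits)$, so $\Schmin(\locvisits)\not\subseteq\Sch(\locvisits)$. Your proposed workaround --- arguing that $\Cpost$-attributes must already be anchored in $\Sch$ via $\Scope$, $\Cpre$, $Q_{\Sch\setminus\Scope}$ or $\Q_\safe$ --- is contradicted by this very example, where all of those components are empty or mention nothing about $\age$.

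With the intended conclusion $\{a_1,\dots,a_n\}\subseteq\Sch'(R)$, the paper's one-line proof ``by direct inspection of the algorithm'' is genuinely all that is needed: each way an attribute enters $\Schmin(R)$ matches one of the clauses in the definition of possible outcome. The contributions from Steps~3 and~5 coming from $\Q_\safe$ and from $Q_{\Sch\setminus\Scope}$ are forced into $\Sch'$ because those queries must be compatible with $\Sch'$ (and evaluable on $I'$) for conditions~(3) and~(4) of the outcome definition to make sense. The contributions from $\Cpost$ --- Step~4 and the last two bullets of Step~5 --- are forced into $\Sch'$ by condition~(2), since $(I',\Sch')\models\Cpost$ requires $\Sch'$ to satisfy every structure constraint in $\Cpost$ and to be compatible with every tgd/egd atom there. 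That case split is the whole proof.
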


%Furthermore, from the definition of total queries and the semantics of procedures it is not difficult to show that. 
%\begin{proposition}
%\label{obs-label}
%Let $P = (\Scope,\Cpre,\Cpost,\Q_\safe)$ be a relational procedure and $\Sch$ a relational schema. 
%For each relation $R$ in $\Schmin$ that is labeled with a number $k$, 
%every instance $I$ over $\Sch$ and every instance  
%$J$ in the outcome set of applying $P$ to $I$, we have that $|\Sch(R)| = k$. 
%\end{proposition}

Furthermore, the following lemma specifies, in a sense, the correctness of the algorithm. 
\begin{lemma}
\label{lem-schmin}
Let $P = (\Scope,\Cpre,\Cpost,\Q_\safe)$ be a relational procedure and $\Sch$ a relational schema. Then: 
\begin{itemize}
\item[i)] If $A(P,\Sch)$ outputs failure, either $P$ cannot be applied over any instance $I$ over $\Sch$, or 
for each instance $I$ over $\Sch$ the set  $\outcome_P(I)$ is empty. 
\item[ii)] If $A(P,\Sch)$ outputs $\Schmin$, then the schema of any instance in $\outcome_P(I)$ extends $\Schmin$.
\end{itemize}
\end{lemma}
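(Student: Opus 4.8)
The plan is to handle the two claims separately: part (ii) is essentially a restatement of Proposition \ref{obs-gamma}, while part (i) is where the real work lies, and it splits according to which of the two \emph{output failure} instructions (step 1 or step 7) fires.

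For part (ii), assume $A(P,\Sch)$ returns a schema $\Schmin$. Proposition \ref{obs-gamma} already states that for every relation $R$ to which $\Schmin$ assigns attributes $\{a_1,\dots,a_n\}$, every instance $I$ over $\Sch$, and every pair $(J,\Sch')$ in $\outcome_P(I)$, the schema $\Sch'$ has $\Sch'(R)$ defined with $\{a_1,\dots,a_n\}\subseteq\Sch'(R)$; by the definition of one schema extending another this is exactly the statement that $\Sch'$ extends $\Schmin$, and since a relation enters $\Schmin$ only through step 3, step 4 or step 6, every relation of $\Schmin$ is covered. So nothing remains to prove for part (ii) beyond Proposition \ref{obs-gamma}, and if $\outcome_P(I)=\emptyset$ the claim is vacuous.

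For part (i), first suppose $A(P,\Sch)$ fails at step 1. Then $\Sch$ violates a structure constraint of $\Cpre$, or $\Sch$ is incompatible with $\Q_\safe$, or $\Sch$ is incompatible with $Q_{\Sch\setminus\Scope}$; in each case the definition of a procedure being \emph{applicable} on an instance over $\Sch$ is not met, for \emph{any} instance $I$ over $\Sch$, so $P$ cannot be applied over any such $I$ --- the first disjunct. Now suppose instead $A(P,\Sch)$ fails at step 7. Since step 1 did not fire, $\Sch$ is compatible with $\Q_\safe$, so every named atom of $\Q_\safe$ mentions only attributes of $\Sch$ and $\Sch(R)$ is defined for each relation $R$ carrying a total query in $\Q_\safe$. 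Step 7 firing means that for one such $R$, labelled in step 3 with $k=|\Sch(R)|$, the set $\Schmin(R)$ built by steps 3--6 has more than $k$ attributes. Inspecting which pairs of step 5 can add to $R$ an attribute outside $\Sch(R)$ --- the pairs coming from $\Sch$-relations outside $\Scope$ and from structure constraints of $\Scope$ contribute only subsets of $\Sch(R)$, and by the previous sentence no named atom of $\Q_\safe$ contributes a new attribute --- the only possibility is that a tgd, an egd, or a structure constraint of $\Cpost$ mentions $R$ with an attribute $a\notin\Sch(R)$. But every pair $(I',\Sch')$ in $\outcome_P(I)$ satisfies $\Cpost$, which forces $a\in\Sch'(R)$ and hence $|\Sch'(R)|>|\Sch(R)|$. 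Evaluating the total query $R$ then yields a set $R^I$ of arity $|\Sch(R)|$ over the input but a set $R^{I'}$ of strictly larger arity over the outcome, so the safety requirement $R(I)\subseteq R(I')$ cannot be met; hence $\outcome_P(I)=\emptyset$ for every instance $I$ over $\Sch$, which is the second disjunct.

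The delicate point --- and the one I expect to be the main obstacle --- is the last step of the step-7 case: making precise why a forced increase in the arity of a relation carrying a total query in $\Q_\safe$ renders $R(I)\subseteq R(I')$ unsatisfiable, even when $R^I$ happens to be empty. This requires being explicit about the semantics of total queries across schemas of different shape: the preservation condition must be read as comparing the two result sets in a common, arity-indexed universe, so that an arity mismatch makes it fail rather than hold vacuously. The step-1 case, by contrast, is a routine unfolding of the applicability definition, and part (ii) is a direct corollary of Proposition \ref{obs-gamma}; the only mildly tedious ingredient there is the case analysis over the sources (steps 3--6) of each attribute placed in $\Schmin$, which mirrors the proof of that proposition.
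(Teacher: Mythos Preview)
Your proposal is correct and follows essentially the same approach as the paper: part (ii) is a direct corollary of Proposition \ref{obs-gamma}, and part (i) splits on whether failure occurs at step 1 (inapplicability) or step 7 (an arity mismatch with a total query in $\Q_\safe$). Your case analysis identifying $\Cpost$ as the only possible source of the extra attribute, and your flagged concern about the empty-$R^I$ edge case, go a bit beyond the paper's own argument, which simply invokes Proposition \ref{obs-gamma} for the outcome schema and asserts that the differing tuple cardinalities prevent preservation of $\Q_\safe$ without isolating either point.
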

\begin{proof}
For i), not that if some of the components of $P$ are not compatible with $\Sch$, or $\Sch$ does not satisfy the constraints in 
$\Cpre$, then clearly $P$ cannot be applied over any instance $I$ over $\Sch$. 
Assume then that $\Sch$ satisfies all compatibilities and preconditions in $P$, but $A(P,\Sch)$ outputs failure. 
Then $\Schmin$ contains a relation $R$ such that $|\Schmin(R)| = m$, but $R$ is labelled with number $k$, for 
$k < \ell$. From the algorithm, we this implies that $|\Schmin(R)| > |\Sch(R)$, but that there is a query 
$R$ in $\Q_\safe$. Clearly, $\Q_\safe$ cannot be preserved under any outcome, since by Observation \ref{obs-gamma} 
we require the schemas of outcomes to assign more attributes to $R$ than those assigned by $\Schmin$, and thus 
the cardinality of tuples in the answer of $R$ differs between $I$ and its possible outcomes. 
Finally, item ii) is a direct consequence of Observation \ref{obs-gamma}. 
\end{proof}

Note that the algorithm $(A,P)$ runs in polynomial time, and that the total size of $\Schmin$ (measured as the number of relations and attributes) 
is at most the size of $\Sch$ and $P$ combined. Thus, to decide the applicability problem for a sequence $P_1,\dots,P_n$ of procedures, all we need 
to do is to perform subsequent calls to the algorithm, setting $\Sch_0 = \Sch$ and then using $\Sch_{i} =  A(P_i,\Sch_{i-1})$ as the input for the 
next procedures. If $A(P_n,\Sch_{n-1})$ outputs a schema, then the answer to the applicability problem is affirmative, otherwise if some call to 
$A(P_i,\Sch{i-1})$ outputs failure, the 
answer is negative.

\subsection{Proof of Proposition \ref{prop-rep-undec}}

This proof is a simple adaptation of the reduction we used in the proof of Proposition \ref{prop-ap-und}. 
Indeed, consider again the schema $\Sch$ from this proof, and the procedure $P$ given by: 

\noindent
$\Scope$: The scope of $P$ consists of relations $G$, $E$, $D$ and $F$, which corresponds to the constraints 
$G[*], E[*], D[*]$ and $F[*]$. 

\noindent
$\Cpre$: There are no preconditions for this procedure. 

\noindent
$\Cpost$: The postconditions are the tgds in $\Sigma$ plus the tgd $F(x) \rightarrow R(x)$.

\noindent
$\Q_\safe$: This query ensures that no information is deleted from all of $G$, $E$ and $F$: 
$G(x,y,z) \wedge E(u,v) \wedge D(w) \wedge F(p)$.

Given a finite semigroup $\textbf A$, we construct now the following instance $I$: 
\begin{itemize}
\item $E^{I_\textbf{A}}$ contains the pair $(a_i,a_i)$ for each $1 \leq i \leq n$ (that is, for each element of $A$); 
\item $G^{I_\textbf{A}}$ contains the triple $(a_i,a_j,a_k)$ for each $a_i,a_j,a_k \in A$ such that $g(a_i,a_j) = a_k$;  
\item All of $D^{I_\textbf{A}}$, $F^{I_\textbf{A}}$ and $R^{I_\textbf{A}}$ are empty; 
\item $C^{I_\textbf{A}}$ contains the pair $(i,a_i)$ for each $1 \leq i \leq n$; and 
\item $N^{I_\textbf{A}}$ contains the pair $(i,j)$ for each $i \neq j$, $1 \leq i \leq n$ and $1 \leq j \leq n$. 
\end{itemize}

By a similar argument as the one used in the proof of Proposition \ref{prop-ap-und}, one can show that 
$\outcome_P(I)$ has an instance if and only if $\textbf A$ is embeddable in a finite semigroup. 
The intuition is that now we are adding the constraint $F(x) \rightarrow R(x)$ as a postcondition, and 
since $R$ is not part of the scope of the procedure the only way to satisfy this restriction is if we do not 
fire the tgd (11) of the set $\Sigma$ constructed in the aforementioned proof. This, in turn, can only happen if 
$\textbf A$ is embeddable. 

\subsection{Proof of Proposition \ref{prop-rep-undec-nothingworks}}

The reduction, just as that of Proposition \ref{prop-ap-und}, is by reduction from 
the embedding problem for finite semigroups, and builds up from this proposition. 
Let us start by defining the procedures $\Pro_1$, $\Pro_2$ and $\Pro_3$. 

\medskip
\noindent
For procedure $\Pro_1$ we first build a set $\Gamma_1$ of tgs. This set is similar to the set 
$\Sigma$ used in Proposition \ref{prop-ap-und}, but using three additional \emph{dummy} 
relations $G^d$, $E^d$ and $G^\text{binary}$. 

First we add to $\Gamma_1$ dependencies that collect elements of $G$ into $D$, and that initialize 
$E$ as a reflexive relation. 

\begin{eqnarray*}
G(x,u,v) & \rightarrow & D(x) \\
G(u,x,v) & \rightarrow & D(x) \\
G(u,v,x) & \rightarrow & D(x) \\ 
D(x) & \rightarrow & E(x,x) 
\end{eqnarray*}

Next the dependency that states that $F$ contains everything in $R$ if some conditions about $E$ occur. 

\begin{eqnarray}
E(x,y) \wedge C(u,x) \wedge C(v,y) \wedge N(u,v) \wedge R(w) & \rightarrow & F(w)
\end{eqnarray}

The dependencies that assured that $E$ was an equivalence relation where acyclic, so we 
replace the right hand side with a dummy relation.  

\begin{eqnarray*}
E(x,y) & \rightarrow & E^d(y,x) \\
E(x,y) \wedge E(y,z) & \rightarrow & E^d(x,z)
\end{eqnarray*}

Next come the dependencies assuring $G$ is a total and associative function, using also dummy relations. 

\begin{eqnarray*}
D(x) \wedge D(y) & \rightarrow & \exists z G^\text{binary}(x,y) \\
G(x,y,u) \wedge G(u,z,v) \wedge G(y,z,w) & \rightarrow & G^d(x,w,v)
\end{eqnarray*}

Finally, the dependencies that were supposed to ensure that $E$ worked as the equality over 
function $G$, using again the dummy relations.  

\begin{eqnarray*}
G(x,y,z) \wedge E(x,x') \wedge E(y,y') \wedge E(z,z') & \rightarrow & G^d(x',y',z') \\
G(x,y,z) \wedge G(x',y',z') \wedge E(x,x') \wedge E(y,y') & \rightarrow & E^d(z,z')
\end{eqnarray*}

\noindent
We can now define procedure \textbf{$P_1$}: 

\noindent
$\Scope$: The scope of $P_1$ consists of relations $G$, $E$, $D$, $F$, $G^d$, $E^d$ and $G^\text{binary}$ which corresponds to the constraints 
$G[*], E[*], D[*], F[*], E^d[*], G^d[*]$ and $G^\text{binary}[*]$. 

\noindent
$\Cpre$: There are no preconditions for this procedure. 

\noindent
$\Cpost$: The postconditions are the tgds in $\Gamma_1$. 

\noindent
$\Q_\safe$: This query ensures that no information is deleted from all of $G$, $E$, $F$, $G^d$, $E^d$ and $G^\text{binary}$: 
$G(x,y,z) \wedge E(u,v) \wedge D(w) \wedge F(p) 
\wedge G^d(x',y',z') \wedge E^d(u',v') \wedge G^\text{binary}(a,b)$.

Note that, even though relations $G$ and $E$ are not mentioned in the right hand side of any tgd in $\Gamma_1$, 
they are part of the scope and thus they could be modified by the procedures $\Pro_1$. 

\medskip
\noindent
The procedure $\Pro_2$ has no scope, no safety queries, no precondition, and the only postcondition is the presence 
of a third attribute, say $C$, in $G^\text{binary}$, by using a structural constraint $G^\text{binary}[A,B,C]$ (to maintain consistency with our unnamed 
perspective, we assume that these three attributes are ordered $A <_\mathcal{A} B <_\mathcal{A} C$). 

\medskip
\noindent
To define the final procedure, consider the following set of tgds $\Gamma_3$. 

\begin{eqnarray*}
E^d(x,y) & \rightarrow & E(x,y) \\ 
G^d(x,y,z) & \rightarrow & G(x,y,z) \\  
G^\text{binary}(x,y,z) & \rightarrow & G(x,y,z) \\ 
F(x) & \rightarrow & F^\text{check}(x) 
\end{eqnarray*}

\noindent
Then we define procedure $\Pro_3$ is as follows. 

\noindent
$\Scope$: The scope of $\Pro_3$ is again empty.  

\noindent
$\Cpre$: There are no preconditions for this procedure. 

\noindent
$\Cpost$: The postconditions are the tgds in $\Gamma_3$. 

\noindent
$\Q_\safe$: There are also no safety queries for this procedure. 

Let $\Sch$ be the schema containing relations 
$G$, $E$, $D$, $F$, $F^\text{check}$, $G^d$, $E^d$ and $G^\text{binary}$ and $R$. The attribute names are of no importance 
for this proof, except for $G^\text{binary}$, which associates attributes $A$ and $B$. 

Given a finite semigroup $\textbf A$, we construct now the following instance $I_{\textbf A}$: 
\begin{itemize}
\item $E^{I_\textbf{A}}$ contains the pair $(a_i,a_i)$ for each $1 \leq i \leq n$ (that is, for each element of $A$); 
\item $G^{I_\textbf{A}}$ contains the triple $(a_i,a_j,a_k)$ for each $a_i,a_j,a_k \in A$ such that $g(a_i,a_j) = a_k$;  
\item All of $D^{I_\textbf{A}}$, $F^{I_\textbf{A}}$ and ${F^\text{check}}^{I_\textbf{A}}$ are empty; 
\item $R^{I_\textbf{A}}$ has a single element $d$ not used elsewhere in $I_\textbf{A}$
\item $C^{I_\textbf{A}}$ contains the pair $(i,a_i)$ for each $1 \leq i \leq n$; and 
\item $N^{I_\textbf{A}}$ contains the pair $(i,j)$ for each $i \neq j$, $1 \leq i \leq n$ and $1 \leq j \leq n$. 
\end{itemize}

Let us now show $\textbf{A} = (A,g)$ is embeddable in a finite semigroup if and only if $\outcome_{P_1,P_2,P_3}(I)$ is nonempty. 

\medskip
\noindent
($\Longrightarrow$) Assume that $\textbf{A} = (A,g)$ is embeddable in a finite semigroup, say the semigroup 
$\textbf B = (B,f)$, where $f$ is total. Let $J$ be the instance over $\Sch$ such that 
both ${E^d}^J$ and $E^J$ are the identity over $B$, $D^J = B$, both ${G^d}^J$ and  $G^J$ contains a pair $(b_1,b_2,b_3)$ if and only if 
$f(b_1,b_2) = b_3$; ${G^\text{binary}}^J$ is the projection of $G^J$ over its two first attirbutes, 
$F^J$ and ${F^\text{check}}^J$ are empty and relations $N$, $C$ and $R$ are interpreted as in $I_\textbf{A}$. 

It is easy to see that $J$ is in the outcome of applying $\Pro_1$ over $I$. 
Now, let $\Sch'$ be the extension of $\Sch$ where $G^\text{binary}$ has an extra attribute, $C$, 
and $K$ is an instance over $\Sch'$ that is just like $J$ except that ${G^\text{binary}}^K$ is now 
the same as $G^J$ (and therefore $G^K$). 
By definition we obtain that $K$ is a possible outcome of applying $\Pro_2$ over $J$, and therefore 
$K$ is in $\outcome_{\Pro_1,\Pro_2}(I)$. 
Furthermore, one can see that the same instance $K$ is again an outcome of applying $\Pro_3$ over 
$K$, therefore obtaining that $\outcome_{\Pro_1,\Pro_2,\Pro_3}(I)$ is nonempty. 

\medskip
\noindent
($\Longleftarrow$) Assume now that there is an instance $L \in \outcome_{P_1,P_2,P_3}(I)$. Then by definition there are instances $J$ and $K$ 
such that $J$ is in $\outcome_{\Pro_1}(I)$, $K$ is in $\outcome_{\Pro_2}(J)$ and 
$L$ is in $\outcome_{\Pro_3}(K)$. 

Let $J^*$ be the restriction of $J$ over the schema $\Sch$. From a simple inspection of $\Pro_1$ we have 
that $J^*$ satisfies as well the dependencies in $\Pro_1$, so that 
$J^*$ is in $\outcome_{\Pro_1}(I)$.

Let now $\Sch'$ be the extension of $\Sch$ that assigns also attribute $C$ to $G^\text{binary}$. Now, since $K$ is an outcome of $P_2$ over $J$ 
and $P_2$ has no scope, if we define $K^*$ as the restriction of $K$ over $\Sch'$, then 
clearly $K^*$ must be in the outcome of applying $\Pro_2$ over $J^*$. 
Note that, by definition of $\Pro_3$ (since its scope is empty), the restriction of $L$ up to the schema of $K$ must be the same instance as $K$, 
and therefore the restriction $L^*$ of $L$ to $\Sch'$ must be the same instance than $K^*$. Furthermore, since $L$ (and thus $L^*$) satisfies the 
constraints in $\Pro_3$, and the constraints only mention relations and atoms in $\Sch'$, 
we have that $K^*$ must be an outcome of applying $\Pro_3$ over $(K^*,\Sch')$. 

We now claim that $K^*$ satisfy all tgds (1)-(11) in the proof of Propositon \ref{prop-ap-und}. 
Tgds (1-3) and (6) are immediate from the scopes of procedures, and the satisfaction for all the remaining ones 
is shown in the same way. For example, to see that $K^*$ satisfies $E(x,y) \rightarrow E(y,x)$, note that 
$J^*$ already satisfies $E(x,y) \rightarrow E^d(y,x)$. From the fact that the interpretations of $E^d$ and $E$ 
are the same over $J^*$ and $K^*$ and that $K^*$ satisfies $E^d(x,y) \rightarrow E(x,y)$ we 
obtain the desired result. 

Finally, since $K^*$ satisfies $F(x)\ \rightarrow\ F^\text{check}(x)$, and the interpretation of 
$F^\text{check}$ over all of $I$, $J^*$ and $K^*$ must be empty, we have that 
the interpretation of $F$ over $K^*$ is empty as well. Given that $K^*$ satisfies all dependencies in $\Sigma$, 
it must be the case that the left hand side of the tgd (11) is not true $K^*$, for any possible assignment. 
By using the same argument 
as in the proof of Proposition \ref{prop-ap-und} we obtain that $\textbf{A} = (A,g)$ is embeddable in a finite semigroup.

\subsection{Proof of Theorem \ref{theo-rep-safe-scope}}

This theorem is an immediate corollary of Proposition \ref{prop-minimal}, together with an inspection on the complexity of 
computing the over-approximation. We provide all details in the proof of the next proposition (Proposition \ref{prop-minimal}). 

\subsection{Proof of Proposition \ref{prop-minimal}}

For the proof we assume that all procedures does not use preconditions. We can treat them  by first doing an initial check on compatibility that only complicates the proof. 

We also specify an alternative set of representatives for conditional instances (which is actually the usual one). The 
set $\hat \rep(G)$ of representatives of a conditional instance $G$ is simply 
$\hat \rep(G)= \{I \mid$ there is a substitution $\nu$ such that $\nu(T) \subseteq I \}$. That is,  $\hat \rep (G)$ only 
specifies instances over the same schema as $G$. 
The following lemma allows us to work with this representation instead; it is immediate from the definition of safe scope procedures. 

\begin{lemma}
\label{lem-usual-rep}
If $G$ is a conditional instance, then (1) $\hat \rep(G) \subseteq \rep(G)$, and (2) an instance $J$ is minimal for $\rep(G)$ if and only if it is minimal 
for $\hat \rep(G)$. 
\end{lemma}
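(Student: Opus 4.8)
The plan is to reduce both parts to one structural observation: since instance containment $\subseteq$ is only defined for instances over a common schema, whereas ``extends'' permits the schema to grow, the two orders coincide exactly on instances sharing the schema of $G$, and the minimal instances of $\rep(G)$ are forced to live over $\schema(G)$. First I would record the elementary fact that, for two instances over the same schema $\Sch$, $I$ extends $J$ if and only if $J \subseteq I$: the schema condition in the definition of ``extends'' holds reflexively, and condition (2) collapses to plain tuple containment once the attribute set of each relation of $J$ is the full attribute set of that relation in $I$. Part (1) then follows at once: if $\nu(G) \subseteq I$ for some substitution $\nu$, then $I$ is over $\schema(G)$ and extends $\nu(G)$, so $I \in \rep(G)$. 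In fact this argument, read in both directions, yields the identity $\hat\rep(G) = \{\,I \in \rep(G) \mid \schema(I) = \schema(G)\,\}$, which I would use repeatedly.

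For part (2) the crux is a projection lemma asserting that any $J \in \rep(G)$ whose schema strictly extends $\schema(G)$ fails to be minimal. Given such a $J$ with witness $\nu$ (so $J$ extends $\nu(G)$), let $J|_{\schema(G)}$ be the restriction of $J$ to $\schema(G)$, obtained by discarding the extra relations and projecting each remaining relation onto the attributes it carries in $\schema(G)$. I would verify three things: that $J|_{\schema(G)}$ still extends $\nu(G)$, hence lies in $\rep(G)$ and, being over $\schema(G)$, in $\hat\rep(G)$; that $J$ extends $J|_{\schema(G)}$, which is immediate since every projected tuple is the projection of an actual tuple of $J$; and that $J|_{\schema(G)} \neq J$ because their schemas differ. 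Consequently no instance with a strictly larger schema is minimal in $\rep(G)$, so every minimal instance of $\rep(G)$ lies over $\schema(G)$ and therefore belongs to $\hat\rep(G)$.

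With these pieces assembled the biconditional is routine. For the forward direction, a minimal $J$ of $\rep(G)$ is over $\schema(G)$, hence in $\hat\rep(G)$; any $J' \in \hat\rep(G)$ with $J' \neq J$ and $J$ extends $J'$ would, via $\hat\rep(G) \subseteq \rep(G)$ from part (1), contradict minimality of $J$ in $\rep(G)$. For the converse, let $J$ be minimal in $\hat\rep(G)$ and suppose $J'' \in \rep(G)$ with $J'' \neq J$ and $J$ extends $J''$. I would argue the schemas collapse: $J$ extends $J''$ gives $\schema(G) = \schema(J)$ extending $\schema(J'')$, while $J'' \in \rep(G)$ gives $\schema(J'')$ extending $\schema(G)$; mutual extension of schemas forces $\schema(J'') = \schema(G)$, so $J'' \in \hat\rep(G)$ and the equivalence of extends and containment applies, contradicting minimality of $J$ in $\hat\rep(G)$.

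The only genuinely non-trivial step is the projection lemma, namely checking that the projection of $J$ onto $\schema(G)$ still extends $\nu(G)$ while sitting below $J$ in the extends order; everything else is bookkeeping about the definitions of ``extends'' for schemas and for instances and about the reflexivity and antisymmetry of schema extension. I expect the main obstacle to be keeping the named/unnamed attribute handling straight when projecting a tuple and matching it against the ``extends'' condition, but no idea beyond careful unwinding of the definitions is needed.
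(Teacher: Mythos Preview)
Your argument is correct and carefully written. The paper itself offers essentially no proof: it merely asserts that the lemma ``is immediate from the definition of safe scope procedures,'' which is at best a misprint (the statement concerns only conditional instances and the two orders $\subseteq$ versus ``extends'') and at worst hand-waving. Your decomposition via the identity $\hat\rep(G)=\{I\in\rep(G)\mid\schema(I)=\schema(G)\}$ together with the projection lemma is the natural route, and the schema-antisymmetry step in the converse direction is exactly the point that needs to be checked.
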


Moreover, from the fact that procedures with safe scope are acyclic, we can state Theorem 5.1 in \cite{APR13} in the following terms:  

\begin{lemma}[\cite{APR13}]
\label{lem-chase}
Given a set $\Sigma$ of tgds and a positive conditional instance $G$, one can construct, in polynomial time, a positive conditional instance 
$G'$ such that (1) $\hat \rep(G') \subseteq \hat \rep(G)$ and (2) all minimal models of $\hat \rep(G')$ satisfy $\Sigma$. 
\end{lemma}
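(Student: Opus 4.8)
The lemma is the restatement, for the acyclic sets $\Sigma$ that arise from safe-scope procedures, of Theorem 5.1 of \cite{APR13}, so the plan is to reprove it by giving the construction of $G'$ explicitly and then checking (1) and (2). I would take $G'$ to be the fixpoint of the \emph{conditional chase} of $G$ under $\Sigma$. One chase step for a tgd $\forall\bar x\,(\exists\bar y\,\phi(\bar x,\bar y)\rightarrow\exists\bar z\,\psi(\bar x,\bar z))$ is defined as follows: a \emph{partial match} consists of an assignment $\theta$ of the variables of $\phi$ into $D\cup\Nulls$ together with a choice of one tuple of the current conditional instance for each atom of $\phi$ realizing that atom under $\theta$; for every partial match one adds the tuples obtained from $\psi$ by applying $\theta$ to $\bar x$ and replacing every variable of $\bar z$ by a fresh null, and attaches to each new tuple the condition $\bigwedge\{\text{conditions of the chosen tuples}\}\wedge\bigwedge\{\text{equalities that }\theta\text{ forces between positions of the chosen tuples}\}$; existing tuples and their conditions are never modified. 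Every condition introduced this way is a conjunction of equalities and of previously present positive conditions, so $G'$ is again positive, which settles the positivity requirement.

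For termination, size, and property (1): since $\Sigma$ is acyclic, I would order the relations along the dependency DAG and fire tgds stratum by stratum, so that no relation is written after it has been read; the chase then halts after a number of rounds bounded by the depth of the DAG, and each round enlarges the instance by a polynomial factor, giving a polynomial-time construction and a $G'$ of polynomial size (this bounded-round argument is precisely the content borrowed from \cite{APR13}, and for safe-scope postconditions it degenerates to a single round). Property (1), $\hat\rep(G')\subseteq\hat\rep(G)$, is then immediate: every tuple of $G$ survives in $G'$ with an unchanged condition, so for any valuation $\nu$ satisfying the conditions of $G'$ we have $\nu(G)\subseteq\nu(G')$, whence every instance containing $\nu(G')$ also contains $\nu(G)$ and lies in $\hat\rep(G)$.

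Property (2) is the substantive part. I would first observe why it has to be stated for \emph{minimal} models only: $\hat\rep$ is closed under taking supersets, and a tgd can be broken by adding tuples, so non-minimal members of $\hat\rep(G')$ typically violate $\Sigma$. A minimal model of $\hat\rep(G')$ has the form $J=\nu(G')$ for a valuation $\nu$ satisfying all conditions of $G'$. Let $\phi\rightarrow\exists\bar z\,\psi$ be a tgd of $\Sigma$ whose body holds in $J$ under some assignment; the tuples of $J$ witnessing $\phi$ are $\nu$-images of tuples of $G'$ whose conditions $\nu$ satisfies, and together with the equalities between their positions that $\nu$ validates they constitute a partial match of $\phi$ in $G'$ that $\nu$ realizes. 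Since $G'$ is a fixpoint, the chase has already fired on this partial match, so $G'$ contains the corresponding $\psi$-tuples with fresh $\bar z$-nulls and with a condition that is a conjunction of things $\nu$ satisfies; instantiating these tuples by $\nu$ yields a witness for $\exists\bar z\,\psi$ in $J$. Hence $J\models\Sigma$.

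The step I expect to require the most care is the rigorous form of the argument for (2): fixing the correct notion of partial match over a conditional instance and showing the fixpoint is closed under all of them. The subtlety is that a body-match in $\nu(G')$ need not come from tuples of $G'$ that \emph{literally} agree on shared positions --- they may agree only after $\nu$ identifies some nulls --- so the conditional chase must fire not only on literal matches but on every consistent way of equating positions that would create a match, recording exactly those equalities in the head tuple's condition. Making this precise means checking (i) that there are only polynomially many partial matches to fire on, again via acyclicity together with bookkeeping on conditions, and (ii) that the head condition produced is always entailed by the conjunction of the body tuples' conditions and the recorded equalities, so that any valuation realizing the body realizes the head; this valuation-versus-partial-match correspondence is the technical heart, and the rest is routine. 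The extension to egd postconditions mentioned after Proposition \ref{prop-minimal} would in addition force the chase to merge nulls and to discard tuples whose conditions become unsatisfiable, which is why that case is ``more technical'' but not structurally different.
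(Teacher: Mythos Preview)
The paper does not prove this lemma; it simply invokes Theorem~5.1 of \cite{APR13} after observing that safe-scope postconditions are acyclic and therefore fall within that theorem's scope. Your proposal goes further and reconstructs the conditional-chase construction behind the cited result, which is the right idea and is indeed what \cite{APR13} does.

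One point is worth tightening. Your initial definition of a partial match---an assignment $\theta$ of the body variables into $D\cup\Nulls$ together with tuples realizing each atom \emph{under $\theta$}---only captures literal matches, since a single $\theta$ forces every occurrence of a variable to land on the same element of $D\cup\Nulls$. Under that reading the clause ``equalities that $\theta$ forces between positions'' is vacuous, and the resulting $G'$ can fail (2): take $G$ with $R^G=\{(n_1)\}$, $S^G=\{(n_2)\}$, trivial conditions, and the tgd $R(x)\wedge S(x)\rightarrow T(x)$. No literal match exists, so $G'=G$; yet the valuation $\nu(n_1)=\nu(n_2)=a$ yields the minimal model $\{R(a),S(a)\}\in\hat\rep(G')$, which violates the tgd. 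You correctly diagnose this in your final paragraph, and the fix you outline there is the right one: drop $\theta$, let a partial match be any choice of one witnessing tuple per body atom, and attach as condition the conjunction of those tuples' conditions together with the equalities between all pairs of positions that correspond to the same body variable; send each $\bar x$-variable to one chosen occurrence and each $\bar z$-variable to a fresh null. With that definition your arguments for positivity, for (1), and for (2) go through exactly as you wrote them, and the chase terminates in a single round for safe-scope postconditions (no head relation appears in any body), which is the regime the paper actually needs.
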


Moreover, by slightly adapting the proof of Proposition 4.6 in \cite{APR13}, we can see that the conditional instance constructed above has even better 
properties. In order to prove this theorem all that one needs to do is to adapt the notion of solutions for data exchange into a scenario where 
the target instance may already have some tuples (which will not fire any dependencies because of the safeness of procedures). 

\begin{lemma}[\cite{APR13}]
\label{lem-minimal-chase}
Let $P = \Scope,\Cpre,\Cpost,\Q_\safe$ be a procedure with safe scope, and let $G$ be a positive conditional instance. 
Then one can construct a conditional instance $G'$ such that, for every minimal instance $I$ of $\hat \rep(G)$, the 
set $\hat \rep(G')$ contains all minimal instances in $\outcome_{\Pro}(I)$, and for every minmal instance $J$ in $\hat \rep(G')$ 
there is a minimal instance $I$ of $\rep(G)$ such that $J$ is minimal for $\outcome_{\Pro}(I)$.
\end{lemma}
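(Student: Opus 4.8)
The plan is to expose the \emph{data-exchange setting} that is hidden inside a safe-scope procedure, and then re-run the conditional chase of \cite{APR13} on it. Write $\mathcal{T}$ for the set of relation symbols that occur on the right-hand side of some tgd of $\Cpost$. The defining conditions of ``safe scope'' tell us that no symbol of $\mathcal{T}$ occurs on any left-hand side of $\Cpost$, that $\Scope=\{R[*]\mid R\in\mathcal{T}\}$, and that $\Q_\safe=\bigwedge_{R\in\mathcal{T}}R$. Unwinding the semantics of procedures (and using Lemma \ref{lem-usual-rep} to restrict attention to instances over $\schema(I)$), an instance $J$ over $\schema(I)$ lies in $\outcome_{\Pro}(I)$ if and only if $R^J=R^I$ for every $R\notin\mathcal{T}$, $R^I\subseteq R^J$ for every $R\in\mathcal{T}$, and $J\models\Cpost$. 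Thus $\outcome_{\Pro}(I)$, restricted to $\schema(I)$, is exactly the set of solutions of the data-exchange instance whose source schema consists of the non-$\mathcal{T}$ relations, whose target schema is $\mathcal{T}$, whose source-to-target tgds are $\Cpost$, and \emph{whose target relations are pre-loaded with the tuples $R^I$, $R\in\mathcal{T}$}. The observation that makes this benign is safeness: since no target relation ever appears on a left-hand side, the pre-loaded tuples trigger no dependency and simply ride along as inert ballast through any chase.

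With this reformulation, I would build $G'$ as follows. Treat $G$ (via $\hat\rep$) as a positive conditional instance over source$\,\cup\,\mathcal{T}$, the $\mathcal{T}$-part already carrying the ``pre-loaded'' tuples and their element-conditions. Apply the conditional chase of Lemma \ref{lem-chase} with the tgds of $\Cpost$, firing only those tgds and never onto a pre-existing target tuple. Because $\Cpost$ is acyclic --- it is in fact a single stratum of source-to-target tgds --- this chase terminates after polynomially many steps, so $G'$ is computable in polynomial time; iterating along $\Pro_1,\dots,\Pro_n$ then yields the exponential (polynomial for fixed $n$) bounds used in Theorem \ref{theo-rep-safe-scope} and Proposition \ref{prop-minimal}.

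The remaining work is to certify the two bullets, and this is where one has to \emph{adapt} Proposition 4.6 of \cite{APR13} rather than quote it: that proposition must be re-read for the pre-loaded-target variant of data exchange described above. For the first bullet, given a minimal $I\in\hat\rep(G)$ witnessed by a substitution $\nu$, any minimal $J\in\outcome_{\Pro}(I)$ is --- by the characterization above --- a minimal solution of the pre-loaded setting for $I$, hence a homomorphic image of the chase of $I$ with $\Cpost$; pushing $\nu$ through the chase of $G$ exhibits $J$ as $\nu'(G')$ for a suitable extension $\nu'$ of $\nu$, so $J\in\hat\rep(G')$. For the second bullet, a minimal $J\in\hat\rep(G')$ is witnessed by some $\nu$; restricting $\nu$ to the nulls already present in $G$ and letting it act on $G$ produces an instance $I\in\rep(G)$, which one argues is minimal in $\rep(G)$, and whose chase maps homomorphically onto $J$ --- so $J\in\outcome_{\Pro}(I)$ --- and, using that $J$ was chosen minimal in $\hat\rep(G')$, that it is in fact a \emph{minimal} solution, i.e.\ minimal in $\outcome_{\Pro}(I)$. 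The main obstacle is exactly this minimality bookkeeping: one must show the conditional chase neither collapses distinct minimal outcomes nor manufactures spurious ones in the presence of two kinds of labelled nulls --- those inherited from $G$ and those created by existentials in $\Cpost$ --- and that minimality of a conditional-table representative corresponds precisely to using a most general unifier at every chase step. With that in hand, the rest is a routine re-run of the \cite{APR13} chase analysis carrying the inert target tuples along.
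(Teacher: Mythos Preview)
Your proposal is correct and follows essentially the same approach as the paper: both recognize that a safe-scope procedure is a data-exchange setting with pre-loaded target tuples that are inert (since no target relation appears on any left-hand side), and both reduce the lemma to an adaptation of Proposition~4.6 of \cite{APR13} for this pre-loaded variant. In fact your write-up supplies considerably more detail than the paper, whose entire argument for this lemma is a one-sentence pointer to that adaptation.
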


Finally, we can show the key result for this proof. 
\begin{lemma}
\label{lem-seq-safe}
Let $\Inst$ be a set of instances, and $G$ a conditional table that is minimal for $\Inst$, and 
$\Pro  = (\Scope,\Cpre,\Cpost,\Q_\safe)$ a procedure with safe scope.  
Then either $\outcome_{\Pro}(\Inst) = \emptyset$ or one can construct, in polynomial time, a  
conditional instance $G'$ 
%with global condition  
such that 
\begin{itemize}
\item[i)] $\outcome_{\Pro}(\Inst) \subseteq \rep(G')$; and 
\item[ii)] If $J$ is a minimal instance in $\rep(G')$, then $J$ is also minimal in 
$\outcome_{\Pro}(\Inst) $. 
\end{itemize}
\end{lemma}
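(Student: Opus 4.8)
The plan is to obtain $G'$ by performing a single conditional chase of $G$ with the tgds of $\Cpost$, and then to derive both items from Lemmas~\ref{lem-usual-rep}, \ref{lem-chase} and \ref{lem-minimal-chase} together with two elementary observations: $\outcome_\Pro(\cdot)$ is monotone, and an instance that is minimal in a set of instances stays minimal in every subset to which it belongs. First I would dispose of the emptiness alternative: after the initial compatibility check (which, as noted, can be done up front), $\Pro$ has no preconditions and $\Cpost$ is a set of acyclic tgds, so chasing any finite instance terminates and $\outcome_\Pro(I)\neq\emptyset$ for every $I$; hence either $\Inst=\emptyset$, in which case $\outcome_\Pro(\Inst)=\emptyset$ and we are done, or we build $G'$. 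We may also assume $G$ is positive --- in the use made of this lemma in Proposition~\ref{prop-minimal} the conditional table always arises by chasing a concrete instance, and chasing with tgds introduces no inequalities --- so that Lemmas~\ref{lem-chase} and \ref{lem-minimal-chase} apply. Applying Lemma~\ref{lem-minimal-chase} to $\Pro$ and $G$ then produces, in polynomial time (Lemma~\ref{lem-chase}), the candidate $G'$; throughout I pass between $\rep$ and $\hat\rep$ via Lemma~\ref{lem-usual-rep}.

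For item~(i) I want $\outcome_\Pro(\Inst)\subseteq\rep(G')$. Since $G$ is minimal for $\Inst$ we have $\Inst\subseteq\rep(G)$, so by monotonicity it suffices to prove $\outcome_\Pro(\rep(G))\subseteq\rep(G')$. Fix $I\in\rep(G)$ with $I$ extending $\nu(G)$, and $I'\in\outcome_\Pro(I)$. Because $\Pro$ has safe scope, $\Scope$ carries $R[*]$ for exactly the relations occurring on right-hand sides of $\Cpost$, and no such relation occurs on a left-hand side; hence the left-hand-side relations of $\Cpost$ lie outside $\Scope$, so $I'$ agrees with $I$ on them, and the safety query forces $I'$ to still contain $\nu(G)$. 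Let $U$ be the result of chasing $\nu(G)$ with $\Cpost$. Since $I'\supseteq\nu(G)$ and $I'\models\Cpost$, the universal-model property of the chase yields a homomorphism $h:U\to I'$ that is the identity on $\nu(G)$. On the other hand, there is a substitution $\nu'$ extending $\nu$ on the nulls freshly created when chasing $G$ such that $\nu'(G')$ equals $U$ up to renaming of those fresh nulls. Setting $\bar\nu:=h\circ\nu'$ on nulls then gives $\bar\nu(G')\subseteq h(U)\subseteq I'$, so $I'$ extends $\bar\nu(G')$ and thus $I'\in\hat\rep(G')\subseteq\rep(G')$. This is exactly the conditional-chase form of the universal-solution argument of \cite{APR13} underlying Lemma~\ref{lem-minimal-chase}.

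For item~(ii), let $J$ be minimal in $\rep(G')$, hence minimal in $\hat\rep(G')$ by Lemma~\ref{lem-usual-rep}. Lemma~\ref{lem-minimal-chase} supplies a minimal instance $I$ of $\rep(G)$ for which $J$ is minimal in --- in particular a member of --- $\outcome_\Pro(I)$. As $G$ is minimal for $\Inst$, such an $I$ is itself a minimal member of $\Inst$, so $\outcome_\Pro(I)\subseteq\outcome_\Pro(\Inst)$ and therefore $J\in\outcome_\Pro(\Inst)$. Finally, item~(i) gives $\outcome_\Pro(\Inst)\subseteq\rep(G')$, and an instance that is minimal in the larger set $\rep(G')$ and lies in the subset $\outcome_\Pro(\Inst)$ is a fortiori minimal in that subset; hence $J$ is minimal in $\outcome_\Pro(\Inst)$, as required. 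Polynomiality of the construction is inherited from Lemma~\ref{lem-chase}.

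I expect the crux to be item~(i). Lemma~\ref{lem-minimal-chase} is phrased for \emph{minimal} instances only, whereas (i) must cover every possible outcome of every instance of $\rep(G)$, including instances whose left-hand-side relations carry tuples beyond $\nu(G)$, for which the chase of $G$ need not literally contain a minimal outcome. The way around this is precisely the commutation of the chase with the open-world $\rep$ semantics: rather than placing $I'$ above some minimal outcome, one reads off the witnessing substitution $\bar\nu$ from the homomorphism $h$ provided by universality. It is here that the safe-scope hypothesis is essential --- the disjointness of left- and right-hand-side relations freezes the ``inputs'' of $\Cpost$ and leaves only the scope relations free, which is what makes a single chase of $G$ a faithful over-approximation step.
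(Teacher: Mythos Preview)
Your argument is correct and, for item~(i), slightly cleaner than the paper's. The paper proves (i) by picking, for each $J\in\outcome_\Pro(\Inst)$, a \emph{minimal} instance $I^*\in\Inst$ below the instance $I$ that produced $J$, invoking Lemma~\ref{lem-minimal-chase} at $I^*$, and then arguing that every element of $\outcome_\Pro(I)$ must extend some minimal element of $\outcome_\Pro(I^*)$. You instead enlarge once to $\rep(G)\supseteq\Inst$ and prove $\outcome_\Pro(\rep(G))\subseteq\rep(G')$ directly via the universal property of the chase, reading off the witnessing substitution $\bar\nu$ from the homomorphism $h$. This sidesteps the need to compare minimal elements of $\Inst$ with minimal elements of $\rep(G)$ and uses the safe-scope hypothesis in exactly the same place (freezing the left-hand-side relations so that $I'\supseteq\nu(G)$ and the homomorphism exists). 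Your treatment of item~(ii) is essentially the paper's ``direct application of Lemma~\ref{lem-minimal-chase}'' spelled out, with the additional observation that minimality in $\rep(G')$ transfers to the subset $\outcome_\Pro(\Inst)$ via (i).

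Two minor remarks. First, when you conclude ``$I'\in\hat\rep(G')\subseteq\rep(G')$'': strictly, $\hat\rep(G')$ contains only instances over the schema of $G'$, whereas $I'$ may live over a larger schema; the correct landing point is $\rep(G')$ directly (you have shown $I'$ extends $\bar\nu(G')$, which is the definition). Second, your commutation claim ``$\nu'(G')$ equals $U$ up to renaming of fresh nulls'' is not literally an equality --- substituting and then chasing may collapse triggers that were distinct before substitution --- but what you actually need, and what holds, is a homomorphism from $\nu'(G')$ into $I'$ obtained by inductively extending $\nu$ along the chase steps using $I'\models\Cpost$; your composition $h\circ\nu'$ is one way to package that. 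Neither point affects the validity of the argument.
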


\begin{proof}
Using the chase procedure mentioned in Lemma \ref{lem-minimal-chase}, we see that the conditional table $G'$ produced in this lemma 
satisfies the conditions of this Lemma, for $\hat \rep(G)$.  

For i), let $J$ be an instance in $\outcome_{\Pro}(\Inst)$. Then there is an instance $I$ in $\Inst$ such that 
$J \in \outcome{P}(I)$. Let $I^*$ be a minimal instance in $\Inst$ such that $I$ extends $I^*$. 
By our assumption we know that $I^*$ belongs to $\rep(G)$, and 
since $I^*$ is minimal it must be the case that $I^*$ belongs (and is minimal) for $\hat \rep(G)$. 
Therefore, by Lemma \ref{lem-minimal-chase} we have that $\hat \rep(G')$ contains all minimal instances 
for $\outcome_P(I^*)$. But now notice that for every assignment $\tau$ and tgd $\lambda$ such that 
$(I^*,\tau)$ satisfies $\lambda$, we have that $(I,\tau)$ satisfy $\lambda$ as well. This means that 
every instance in the set $\outcome_P(I)$ must extend a minimal instance in $\outcome_P(I^*)$ 
(if not, then a tgd would not be satisfied due to some assignment that would not be possible to extend). 
Since every minimal instance in $\outcome_P(I^*)$ is in $\hat \rep(G')$, then by the semantics of conditional tables 
it must be the case that $J$ belongs to $\hat \rep(G')$ as well, and therefore to $\rep(G')$. 

Item [ii)] follows from the fact that any minimal instance in $\rep(G')$ must also be minimal for $\hat \rep(G')$ 
and a direct application of Lemma \ref{lem-minimal-chase}.
\end{proof}

The next Lemma constructs the desired outcomes for alter schema procedures. 

\begin{lemma}
\label{lem-seq-alter}
Let $\Inst$ be a set of instances, and $G$ a conditional table that is minimal for $\Inst$, and 
$\Pro  = (\Scope,\Cpre,\Cpost,\Q_\safe)$ an alter schema procedure.  
Then either $\outcome_{\Pro}(\Inst) = \emptyset$ or one can construct, in polynomial time, a  
conditional instance $G'$ 
%with global condition  
such that 
\begin{itemize}
\item[i)] $\outcome_{\Pro}(\Inst) \subseteq \rep(G')$; and 
\item[ii)] If $J$ is a minimal instance in $\rep(G')$, then $J$ is also minimal in 
$\outcome_{\Pro}(\Inst) $. 
\end{itemize}
\end{lemma}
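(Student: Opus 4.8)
The plan is to exploit that an alter-schema procedure never deletes or rewrites existing data: it only enlarges the schema, with the freshly created relations and attributes receiving completely uncontrolled content. First I would unfold the semantics (recalling the standing assumption that preconditions are dropped): since $\Scope$ and $\Q_\safe$ are empty, an instance $I'$ is an outcome of applying $\Pro=(\Scope,\Cpre,\Cpost,\Q_\safe)$ to an instance $I$ exactly when $\schema(I')$ extends $\schema(I)$, $\schema(I')\models\Cpost$, and the restriction of $I'$ to $\schema(I)$ equals $I$ --- here condition~(3) of the semantics of procedures carries the weight, as $Q_{\Sch\setminus\Scope}$ is simply the query returning the whole $\schema(I)$-part of the database. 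In particular an alter-schema procedure is always applicable, so $\outcome_{\Pro}(\Inst)$ is empty only in the degenerate case $\Inst=\emptyset$; the ``$\emptyset$'' alternative in the statement is there only for uniformity with Lemma~\ref{lem-seq-safe}, and from now on I assume $\Inst\neq\emptyset$.

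For the construction I would run Algorithm $A$ from the proof of Proposition~\ref{ref-rep-decidable} on $\Pro$ and $\schema(G)$ to obtain, in polynomial time, the minimal schema $\Schmin$ that extends $\schema(G)$ and satisfies $\Cpost$, and then build $G'$ over $\Schmin$ from $G$ as follows: for each relation $R$ already in $\schema(G)$ I keep the tuples of $R^G$ with their element-conditions, but pad each such tuple with a fresh, pairwise distinct null in every attribute that $\Schmin$ newly assigns to $R$; for each relation $R$ of $\Schmin$ not in $\schema(G)$ I set $R^{G'}=\emptyset$. This is clearly polynomial in $|G|+|\Pro|$ and introduces no new element-conditions, and by Lemma~\ref{lem-usual-rep} I may pass to $\hat{\rep}(G')$ whenever convenient for the minimality part.

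For item~(i) I would take $I'\in\outcome_{\Pro}(I)$ with $I\in\Inst$, pick a minimal instance $I^*$ of $\Inst$ that $I$ extends (so $I^*\in\rep(G)$, as $G$ is minimal for $\Inst$), fix a substitution $\nu_0$ with $I^*$ extending $\nu_0(G)$, and extend $\nu_0$ to the fresh nulls of $G'$: for each padded tuple $t$ arising from an original tuple $t_0$ of some $R^G$ with $\nu_0\models\rho^G_R(t_0)$, I follow the witness chain $\nu_0(G)\to I^*\to I\to I'$ to produce a tuple $s\in R^{I'}$ that agrees with $\nu_0^*(t_0)$ on the old attributes of $R$, and I set $\nu$ on the fresh nulls of $t$ to the matching values of $s$ --- well defined since each fresh null occurs in exactly one tuple. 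A routine check then shows $I'$ extends $\nu(G')$, i.e.\ $I'\in\rep(G')$. For item~(ii) I would note that a minimal instance $J$ of $\rep(G')$ must have schema exactly $\Schmin$, hence lies in $\hat{\rep}(G')$, and being minimal it equals $\nu(G')$ for some $\nu$; its restriction $K$ to $\schema(G)$ is then $\nu|_{\nulls(G)}(G)$, and a short argument that re-chooses the values in the padded positions shows that $K$ is minimal in $\rep(G)$, hence in $\Inst$. Since $\Schmin\models\Cpost$ extends $\schema(K)$ and the restriction of $J$ to $\schema(K)$ is $K$, we get $J\in\outcome_{\Pro}(K)\subseteq\outcome_{\Pro}(\Inst)$, and minimality of $J$ in $\outcome_{\Pro}(\Inst)$ follows from item~(i) together with the minimality of $J$ in $\rep(G')$.

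This argument is mostly bookkeeping, and the one place I expect to need care is matching the open-world semantics exactly: checking that padding with \emph{fresh, pairwise distinct} nulls captures ``arbitrary content in the new positions'' precisely, so that $G'$ neither drops any outcome of $\Inst$ nor admits a spurious instance, and dually that every minimal instance of $\rep(G')$ projects onto a minimal instance of $\rep(G)$. This is a simpler version of the safe-scope case (Lemma~\ref{lem-seq-safe}): no chase is involved here, since alter-schema procedures create no tuples, so the minimal-instance analysis is elementary rather than relying on the machinery of \cite{APR13}. Combined with Lemma~\ref{lem-seq-safe} and an induction along the sequence $\Pro_1,\dots,\Pro_n$, this lemma completes the proof of Proposition~\ref{prop-minimal} (and hence of Theorem~\ref{theo-rep-safe-scope}).
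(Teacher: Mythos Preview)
Your construction is exactly the paper's: compute $\Schmin$ via Algorithm~$A$ from the proof of Proposition~\ref{ref-rep-decidable}, pad each existing tuple with fresh nulls on the new attributes, and leave newly created relations empty. The paper then simply writes ``the properties of the lemma now follow from a straightforward check,'' whereas you actually carry out that check; your verification of (i) and (ii) is correct, and the only superfluous step is the detour through a minimal $I^*$ in part~(i) --- since $\Inst\subseteq\rep(G)$ already, you could work with $I$ directly --- but this causes no harm.
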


\begin{proof}
Assume that $\outcome_{\Pro}(\Inst) \neq \emptyset$ (this can be easily checked in polynomial time). 
Then one can compute the schema $\Schmin$ from the proof of Proposition 
\ref{ref-rep-decidable}. This schema will add some attributes to some relations in the schema of $G$, and 
possibly some other relations with other sets of attributes. Let $\schema(G) = \Sch$. 

We extend $G$ to a conditional table $G'$ over $\Schmin$ as follows: 
\begin{enumerate}
\item For every relation $R$ such that $\Schmin(R) \setminus \Sch(R) = \{A_1,\dots,A_n\}$, with $n \geq 1$, for tuples from $G'$ by 
adding to each tuple in $G$ a fresh null value in each of the attributes $A_1,\dots,A_n$. 
\item For every relation $R$ such that $\Sch(R)$ is not defined, but $\Schmin(R)$ is defined, set $R^{G'} = \emptyset$
\end{enumerate}

The properties of the lemma now follow from a straightforward check. 
\end{proof}

The proof of Proposition \ref{prop-minimal} now follows from successive applications of Lemmas \ref{lem-seq-alter} and  \ref{lem-seq-safe}: 
one just need to compute the appropriate conditional table for each procedure in the sequence $P_1,\dots,P_n$. 
That each construction is in polynomial size if the number $n$ of procedures is fixed, or exponential in other case, 
follows also from these Lemmas, as the size of the conditional table $G'$, for a procedure $P$ and a conditional table $G$, is at most 
polynomial in $G$ and $P$ (and thus we are composing a polynomial number of polynomials, or a fixed number if $n$ is fixed).

\medskip
\noindent
\textbf{Proof of Theorem \ref{theo-rep-safe-scope}}: 
While in general checking that the set represented by an arbitrary conditional instance may be np-complete, we note that 
in \cite{APR13} it was shown that, for Lemma \ref{lem-seq-safe}, all that is needed is a positive conditional instance, and clearly 
deciding whether a positive conditional instance represents at least one solution is in polynomial time. 
Thus, for the proof of the Theorem we just compute the (positive) conditional instance exhibited for Proposition \ref{prop-minimal} and 
then do the polynomial check on the size of the final conditional instance.

\end{document}